\newtheorem{theorem}{Theorem}
\newtheorem*{theorem*}{Theorem}
\newtheorem{definition}{Definition}
\newtheorem{corollary}{Corollary}
\newtheorem{lemma}{Lemma}
\newtheorem*{lemma*}{Lemma}
\newtheorem{example}{Example}
\newtheorem{remark}{Remark}
\begin{document}
%
\title{Function-Correcting Codes With Data Protection}
%
%
%

\author{\IEEEauthorblockN{ Charul Rajput\IEEEauthorrefmark{1}, B. Sundar Rajan\IEEEauthorrefmark{2}, Ragnar Freij-Hollanti\IEEEauthorrefmark{1}, Camilla Hollanti\IEEEauthorrefmark{1}}

\IEEEauthorblockA{\IEEEauthorrefmark{1}Department of Mathematics and Systems Analysis, Aalto University, Finland
    \\\{charul.rajput, ragnar.freij, camilla.hollanti\}@aalto.fi}\\
    \IEEEauthorblockA{\IEEEauthorrefmark{2}Department of Electrical Communication Engineering, Indian Institute of Science, \\ Bengaluru, India
    \\\ bsrajan@iisc.ac.in}}

\maketitle

\begin{abstract}
Function-correcting codes (FCCs) are designed to provide error protection for the value of a function computed on the data. Existing work typically focuses solely on protecting the function value and not the underlying data. In this work, we propose a general framework that offers protection for both the data and the function values. Since protecting the data inherently contributes to protecting the function value, we focus on scenarios where the function value requires stronger protection than the data itself. A two-step construction procedure for such codes is proposed, and bounds on the optimal redundancy of general FCCs with data protection are reported. Using these results, we exhibit examples that show that data protection can be added to existing FCCs without increasing redundancy. Using our two-step construction procedure, we present explicit constructions of FCCs with data protection for specific families of functions, such as locally bounded functions and the  Hamming weight function. We associate a graph called {\it minimum-distance graph} to a code and use it to show that perfect codes and maximum distance separable (MDS) codes cannot provide additional protection to function values over and above the amount of protection for data for any function. 
Then we focus on linear FCCs and provide some results for linear functions, leveraging their inherent structural properties. While FCCs for linear functions have been considered earlier in the literature, to the best of our knowledge, the linearity of the FCC itself has not been studied before.
Finally, we generalize the Plotkin and Hamming bounds well known in classical error-correcting coding theory to FCCs with data protection. 
\end{abstract}

\begin{IEEEkeywords}
Error-correction, Function-correcting codes, Hamming weight distribution function, Irregular distance codes, Linear codes, Locally bounded functions, Redundancy bounds.
\end{IEEEkeywords}

%
\IEEEpeerreviewmaketitle

\section{Introduction}
\label{intro}

Function-correcting codes (FCCs), introduced by \cite{LBWY2023}, are designed to safeguard specific functions or attributes of a message vector rather than the entire vector itself. These codes operate in scenarios where data is transmitted from a sender to a receiver over a potentially error-prone channel. Unlike traditional error-correcting codes, which aim to protect the full message against a certain number of errors, FCCs focus solely on preserving the integrity of a particular function derived from the data. This targeted protection is especially valuable in contexts where only specific aspects of the data are of interest to the receiver, such as in certain machine learning applications or data storage systems. By concentrating on the function rather than the whole message, FCCs offer a more efficient approach to error correction.

Error correction works by introducing redundancy into the data, enabling the receiver to detect and correct a certain number of errors during decoding. Generally, the level of redundancy required increases with the number of errors that need to be corrected. However, when the function of interest has a smaller co-domain relative to its domain, it becomes possible to preserve its values with less redundancy, even in the presence of the same number of errors. This efficiency highlights the significance of a function-correcting code (FCC), which is designed to protect specific function rather than the entire dataset, offering a more efficient solution in such scenarios.

\subsection{Related works}
Several related works \cite{BC2011, MW1967,BK1981,AC1981,OR2001,BNZ2009,KW2017,SSGAGD2019} explore the idea of prioritizing certain parts of data, such as unequal error protection codes. The formal foundation of function-correcting codes is established in \cite{LBWY2023}, where the authors introduce key parameters and structural definitions for these codes. They also derive upper and lower bounds on the redundancy required for FCCs. Furthermore, they construct FCCs specifically for certain classes of functions, including locally binary functions, the Hamming weight function, and Hamming weight distribution functions. Some of these constructions prove to be optimal, achieving the theoretical lower bounds on redundancy. This initial work focuses on communication over binary symmetric channels. Subsequently, the concept of FCCs is extended in \cite{XLC2024} to symbol-pair read channels over binary fields, and is later generalized to symbol-pair read channels over finite fields in \cite{SSY2025}.

In \cite{PR2025}, the authors focus on linear functions, which provide a more structured view of the function's domain, as the kernel of a linear function forms a subspace of the domain $\mathbb{F}_q^k$. Using the coset partition of this kernel, they present a construction of FCCs for linear functions. The paper also derives a lower bound on the redundancy required for FCCs, which leads to a corresponding bound in classical error-correcting codes, since for a bijective function, an FCC is equivalent to an error-correcting code (ECC). They demonstrate the tightness of this redundancy bound for certain parameters.
The work \cite{ZXZG2025} by Zhang et al. specifically considers two types of functions, the Hamming weight function and the Hamming weight distribution function, offering improved bounds and constructions of FCCs for these functions.

In a recent work \cite{RRHH2025}, the authors generalize the definition of a locally binary function to a locally $(\rho, \lambda)$-bounded function, and then provide an upper bound on the redundancy of FCCs for these functions. They also prove that any function on $\mathbb{F}_2^k$ can be considered a locally $(\rho, \lambda)$-bounded function for carefully chosen parameters, making this framework applicable to arbitrary functions. Furthermore, an extension of this work to the $b$-symbol read channel is presented in \cite{VSS2026}. Another study \cite{LS2025} provides an upper bound on the redundancy of FCCs that is within a logarithmic factor of a known lower bound given in \cite{LBWY2023}. The authors also prove that this lower bound is tight for sufficiently large field sizes.

\subsection{Motivation}
By the definition of FCCs given in all existing works, if two message vectors map to the same function value, the receiver does not need to distinguish between their corresponding codewords. This implies that the distance between such codewords is irrelevant. However, if two codewords correspond to different function values, a specific minimum distance between them is required to ensure the protection of the function. As a result, an FCC does not guarantee any form of error protection for the data itself.

All existing work in the literature focuses solely on protecting the function values, not the data. In this work, we present a general framework in which both the data and the function values are assigned distinct levels of error protection. For instance, suppose that we require protection for the data against up to $t_d$  errors and for the function values against up to $t_f$  errors. Since protecting the data inherently contributes to protecting the function, it is only meaningful to consider the case where the function requires stronger protection than the data, i.e., $t_d<t_f$. The general setup for these codes is shown in Fig. \ref{fig:1}.

\begin{figure*}[ht]
    \centering
    \resizebox{\textwidth}{!}{%
    \begin{tikzpicture}[>=latex, thick, node distance=2cm and 1.5cm]

\node (a) { };

\node[draw, minimum width=2cm, minimum height=1.5cm, right=of a] (encoder) {Encoder};
\node[below=0.2cm of encoder] {\small Transmitter};

\node[draw, fill=blue!20, minimum width=2.8cm, minimum height=1.2cm, right=1.8cm of encoder] (channel) {Channel};

\node[draw, minimum width=5.5cm, minimum height=3.5cm, right=2cm of channel] (receiver) {};

\node[below=0.2cm of receiver] {\small Receiver};

\node[draw, minimum width=4.5cm, minimum height=1.2cm] at ([yshift=0.8cm]receiver.center) (funcdec) {Function value decoder};
\node[draw, minimum width=4.5cm, minimum height=1.2cm] at ([yshift=-0.8cm]receiver.center) (datadec) {Data decoder};

\node[right=1.2cm of funcdec] (fy) {$\hat{f({u})}$};
\node[right=1.2cm of datadec] (ahat) {$\hat{u}$};

\coordinate (entry) at ([xshift=-2.75cm]receiver.center);

\draw[->] (a) -- node[above] {${u}$}  (encoder);
\draw[->] (encoder) -- node[above] {$c({u})$} (channel);
\draw[-] (channel) -- node[above] {$c({u})+{e}$} (entry);

\draw[->] (entry) -- (funcdec.west);
\draw[->] (entry) -- (datadec.west);

\draw[->] (funcdec) -- (fy);
\draw[->] (datadec) -- (ahat);

\end{tikzpicture}}
    \caption{General setup of function-correcting codes. The transmitter has a message vector ${u}$, and the receiver has a special interest in a function $f({u})$ of this message. The transmitter encodes the message ${u}$ to $c({u})$ and sends it over the erroneous channel. The receiver receives the vector $c({u}) + {e}$, and can decode the message and the function value with different levels of error protection guarantees based on the importance of the function $f$ for the receiver.}
    \label{fig:1}
\end{figure*}
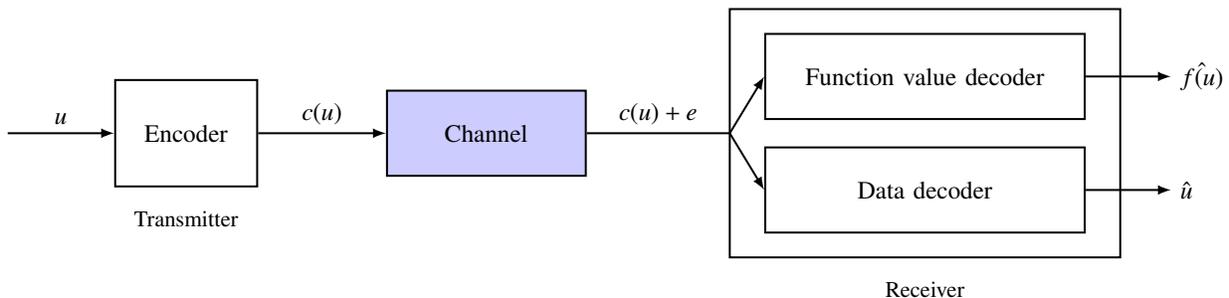

There is another way to look at it. When we construct a code based on the original definition of FCCs, where the sole focus is on protecting the function, multiple FCCs may exist with the same amount of redundancy. These codes will perform equally well in terms of correcting the function value, but they may offer varying levels of data protection or none at all. This difference in data protection across different codes cannot be captured unless we also consider the overall minimum distance. This distinction is illustrated more clearly in Example \ref{ex2}. Therefore, the general framework of FCCs introduced in this paper is essential to capture this property.

\subsubsection{In communication networks}
In today's world, many communications occur within large, interconnected networks. If the communication is one-to-one and the receiver is only interested in a function of the data, not the data itself, it may be more efficient to encode and transmit only the function values. However, when nodes are connected in large networks, such as the wireless and wired networks illustrated in Fig. \ref{fig:2}, they often need to forward data to other connected nodes.

A key limitation in the original function-correcting code setup is that, while data is transmitted from the sender to the receiver, error protection for the data was not explicitly addressed. To enable reliable data flow in a network, some level of error protection for the data itself is necessary. Even if intermediate nodes do not require the full data, different nodes in the network may be interested in different functions or attributes of the data. This possible diversity of interests makes the reliable flow of data throughout the network essential.

\begin{figure}[ht]
    \centering
    \begin{minipage}{0.33\textwidth}
\centering
\resizebox{\linewidth}{!}{
    \begin{tikzpicture}[
  server/.style={rectangle, draw=black, fill=white, minimum width=2cm, minimum height=1cm},
  device/.style={circle, draw=black, fill=white, minimum size=1.2cm, font=\small},
  line/.style={->, thick}
]

\node[server] (server) {Server};


\node[device, below left=2.5cm and 2.5cm of server] (A1) {$A_1$};
\node[device, right=1.5cm of A1] (A2) {$A_2$};
\node[device, right=1.5cm of A2] (A3) {$A_3$};
\node[device, right=1.5cm of A3] (A4) {$A_4$};

\node[device, below left=2cm and 1.75cm of A1] (A5) {$A_5$};
\node[device, right=1.5cm of A5] (A6) {$A_6$};
\node[device, right=1.5cm of A6] (A7) {$A_7$};
\node[device, right=1.5cm of A7] (A8) {$A_8$};
\node[device, right=1.5cm of A8] (A9) {$A_9$};
\node[device, right=1.5cm of A9] (A10) {$A_{10}$};

\draw[line] (server) -- (A1);
\draw[line] (server) -- (A2);
\draw[line] (server) -- (A3);
\draw[line] (server) -- (A4);

\draw[line] (A1) -- (A5);
\draw[line] (A2) -- (A5);
\draw[line] (A1) -- (A6);
\draw[line] (A3) -- (A6);
\draw[line] (A1) -- (A7);
\draw[line] (A4) -- (A7);
\draw[line] (A2) -- (A8);
\draw[line] (A3) -- (A8);
\draw[line] (A2) -- (A9);
\draw[line] (A4) -- (A9);
\draw[line] (A3) -- (A10);
\draw[line] (A4) -- (A10);

\end{tikzpicture}}
\caption*{(a) Wired network}
\end{minipage}
\hfill
    \begin{minipage}{0.29\textwidth}
\centering
\resizebox{\linewidth}{!}{
\begin{tikzpicture}[
  server/.style={rectangle, draw=black, fill=white, minimum width=2cm, minimum height=1cm},
  device/.style={circle, draw=black, fill=white, minimum size=1.2cm, font=\small},
  line/.style={->, thick}
]
\node[server] (server) {Server};

\coordinate (join) at ([yshift=-1.25cm]server);

\node[device, below left=2.5cm and 2.5cm of server] (A1) {$A_1$};
\node[device, right=1.5cm of A1] (A2) {$A_2$};
\node[device, right=1.5cm of A2] (A3) {$A_3$};
\node[device, right=1.5cm of A3] (A4) {$A_4$};

\node[device, below left=2.0cm and 0.9cm of A1] (A5) {$A_5$};
\node[device, right=.5cm of A5] (A6) {$A_6$};
\node[device, right=.5cm of A6] (A7) {$A_7$};
\node[device, below left=2cm and 0.9cm of A4] (A8) {$A_8$};
\node[device, right=0.5cm of A8] (A9) {$A_9$};
\node[device, right=0.5cm of A9] (A10) {$A_{10}$};

\coordinate (join1) at ([yshift=-1cm]A1);
\coordinate (join2) at ([yshift=-1cm]A4);

\draw[-] (server) -- (join);
\draw[line] (join) -- (A1);
\draw[line] (join) -- (A2);
\draw[line] (join) -- (A3);
\draw[line] (join) -- (A4);

\draw[-] (A1) -- (join1);
\draw[line] (join1) -- (A5);
\draw[line] (join1) -- (A6);
\draw[line] (join1) -- (A7);

\node[below right=2cm and 0.35cm of A2] (dots) {\large $\bullet\ \bullet\ \bullet$};

\draw[-] (A4) -- (join2);
\draw[line] (join2) -- (A8);
\draw[line] (join2) -- (A9);
\draw[line] (join2) -- (A10);

\end{tikzpicture}}
\caption*{(b) Wireless network}
\end{minipage}
\hfill
\begin{minipage}{0.33\textwidth}
\centering
\resizebox{\linewidth}{!}{
    \begin{tikzpicture}[
  server/.style={rectangle, draw=black, fill=white, minimum width=2cm, minimum height=1cm},
  device/.style={circle, draw=black, fill=white, minimum size=1.2cm, font=\small},
  line/.style={->, thick}
]

\node[server] (server) {Server};

\coordinate (join) at ([yshift=-1.5cm]server);

\node[device, below left=2.5cm and 2.5cm of server] (A1) {$A_1$};
\node[device, right=1.5cm of A1] (A2) {$A_2$};
\node[device, right=1.5cm of A2] (A3) {$A_3$};
\node[device, right=1.5cm of A3] (A4) {$A_4$};

\node[device, below left=2cm and 1.75cm of A1] (A5) {$A_5$};
\node[device, right=1.5cm of A5] (A6) {$A_6$};
\node[device, right=1.5cm of A6] (A7) {$A_7$};
\node[device, right=1.5cm of A7] (A8) {$A_8$};
\node[device, right=1.5cm of A8] (A9) {$A_9$};
\node[device, right=1.5cm of A9] (A10) {$A_{10}$};

\draw[-] (server) -- (join);
\draw[line] (join) -- (A1);
\draw[line] (join) -- (A2);
\draw[line] (join) -- (A3);
\draw[line] (join) -- (A4);

\draw[line] (A1) -- (A5);
\draw[line] (A2) -- (A5);
\draw[line] (A1) -- (A6);
\draw[line] (A3) -- (A6);
\draw[line] (A1) -- (A7);
\draw[line] (A4) -- (A7);
\draw[line] (A2) -- (A8);
\draw[line] (A3) -- (A8);
\draw[line] (A2) -- (A9);
\draw[line] (A4) -- (A9);
\draw[line] (A3) -- (A10);
\draw[line] (A4) -- (A10);

\end{tikzpicture}}
\caption*{(c) Wireless and wired network}
\end{minipage}
\caption{General network setups}
    \label{fig:2}
\end{figure}

\subsubsection{In data storage}
The general setup of function-correcting codes defined in this paper is also beneficial for storage systems where certain attributes of the data are more important than others. While the entire dataset can be stored with basic error protection, the important attributes or features of the data can be stored with a higher level of protection. This naturally aligns with the idea behind Unequal Error Protection (UEP) \cite{BNZ2009}, where more critical parts of the data are prioritized in terms of reliability. In fact, UEP can be seen as a specific instance of FCCs, where the function of interest corresponds to selected bits or segments. FCCs extend this idea by allowing protection of more general functions, offering greater flexibility in how reliability is distributed across the data.

\subsection{Contributions}
The contributions of this paper are summarized as follows:
\begin{enumerate}
\item We first introduce a more general approach and framework of function-correcting codes that incorporates data protection along with protection for function values.  Then we propose a two-step construction procedure for such codes.
\item We provide bounds on the optimal redundancy of FCCs with data protection. Using these results we exhibit examples that show that without increasing the redundancy data protection can be added to the FCCs.
\item We associate a graph called {\it minimum-distance graph} to a code and using this it is shown that perfect codes and MDS codes cannot provide additional protection to function values over and above the amount of protection for data for any function. 
\item Using our two-step construction procedure, we present explicit constructions of FCCs with data protection for specific families of functions, such as locally bounded functions and Hamming weight function.
\item We then focus on \emph{linear FCCs} and derive some results for the case of linear functions by exploiting their structure. To the best of our knowledge, although FCCs for linear functions have appeared earlier in the literature, the case where the FCC itself is linear has not been studied before.
\item We also extend the classical Plotkin and Hamming bounds from error-correcting codes to the framework of FCCs with data protection.
\end{enumerate}

We note that parts of this work are also contained in two papers submitted to the 2026 IEEE International Symposium on Information Theory (ISIT). The first of these contains the results in Contributions~1 and~2 in abbreviated form. The second submitted paper builds on the minimum-distance graph introduced in Contribution~3 by considering an $\alpha$-distance graph as a further generalization, and includes some additional results in that direction. In contrast, the present journal manuscript gives the complete framework of FCCs with data protection, full proofs of all results, and the broader collection of contributions listed above, including Contributions~3--6.

\subsection{Paper organization}

Preliminaries constituting basic definitions and known results used in the paper subsequently are presented in Section \ref{preliminaries}. In Section \ref{main} a method of constructing FCCs that offer protection to the data (information symbols) along with offering protection for function values is presented. Bounds on the optimal redundancy for FCCs with data protection are obtained in  Section \ref{general_definitions}. 
Calling FCCs that offer strictly larger protection for function values than for the data {\it strict FCCs}, in Section \ref{sec5} it is shown that perfect codes and MDS codes cannot be {\it strict FCCs}. This is achieved by introducing a notion of {\it minimum-distance graph} for any code. In Section \ref{specific_functions}, we study FCCs with data protection  for three classes of  specific functions and apply our construction method to obtain FCCs designed for these functions. Section \ref{linearFCC} focuses on linear FCCs, with particular attention to linear functions due to their well known additional structure. In Section \ref{bounds}, we present generalizations of Plotkin and Hamming bounds on the redundancy of FCCs with data protection. Some results needed to obtain these bounds are relegated to Appendix. Finally,  Section \ref{conclusion} concludes the paper.

\subsection{Notation}
The finite field of size $q$ is denoted by $\mathbb{F}_q$, and $\mathbb{N}$ denotes the set of natural numbers. The space $\mathbb{F}_q^k$ denotes the $k$-dimensional vector space over $\mathbb{F}_q$.
We use the notation $\binom{n}{r} = \frac{n!}{r!(n - r)!}$ to denote the binomial coefficient. The set $\mathbb{N}^{m \times m}$ denotes the set of all $m \times m$ matrices with entries in the set of natural numbers $\mathbb{N}$. For any function $f : S \rightarrow T$, and for any $\alpha \in \mathrm{Im}(f)$, the preimage of $\alpha$ under $f$ is denoted by $f^{-1}(\alpha) = \{ s \in S \mid f(s) = \alpha \}$.  
For any vector $u \in \mathbb{F}_q^k$, $\mathrm{wt}(u)$ represents the number of non-zero entries in the vector $u$, and $d(u,v)$ denotes the Hamming distance between vectors $u$ and $v$, where $u,v \in \mathbb{F}_q^k$.  
 For any subset $J\subseteq [n]$, we write $\mathbb F_q^{J}$ for the set of all $J$-indexed vectors over $\mathbb F_q$. 
For any $u \in \mathbb{F}_q^n$ and radius $t \in \mathbb{N}$, the Hamming ball of radius $t$ centered at $u$ is defined as
$B(u,t) = \{ x \in \mathbb{F}_q^n \mid d(x,u) \le t \}.$
For any two sets $A$ and $B$, $A \setminus B$ denotes the set of elements that are in $A$ but not in $B$.
In examples throughout this paper, vectors in $\mathbb{F}_q^k$ are often written without commas, such as $0110 \in \mathbb{F}_2^4$.

\section{Preliminaries}
\label{preliminaries}

This section presents the basic concepts, definitions, and key results on function-correcting codes, based on the work in \cite{LBWY2023}. These foundations are important for the methods and discussions that follow in the rest of the paper.

\begin{definition}[Function-Correcting Codes]
Consider a function $f: \mathbb{F}_q^k \rightarrow \mathrm{Im}(f)$. A systematic encoding $\mathcal{C}: \mathbb{F}_q^k \rightarrow \mathbb{F}_q^{k+r}$ is defined as an $(f, t)$-function correcting code (FCC) if, for any $u_1, u_2 \in \mathbb{F}_q^k$ such that $f(u_1) \neq f(u_2)$, the following condition holds: 
$$d(\mathcal{C}(u_1), \mathcal{C}(u_2)) \geq 2t+1,$$
where $d(x, y)$ denotes the Hamming distance between vectors $x$ and $y$.
\end{definition}

The \emph{optimal redundancy} $r_f(k, t)$ is defined as the minimum of $r$ for which there exists an $(f, t)$-FCC with an encoding function $\mathcal{C}: \mathbb{F}_q^k \rightarrow \mathbb{F}_q^{k+r}$.

\begin{definition}[Distance Requirement Matrix (DRM)]
Let $u_1, u_2,$ $\ldots, u_{M} \in \mathbb{F}_q^k$. The distance requirement matrix (DRM) $\mathcal{D}_f(t, u_1, u_2,\ldots, u_M)$ for an $(f, t)$-FCC is an $M \times M$ matrix with entries
$$
[\mathcal{D}_f(t, u_1, \ldots, u_M)]_{i, j} \\
= 
\begin{cases}
\max(2t+1-d(u_i, u_j), 0), & \text{if } f(u_i) \neq f(u_j), \\
0, & \text{otherwise},
\end{cases}
$$
where $i, j \in \{1,2, \ldots, M\}$.
\end{definition}

When $M=q^k,$ we get the DRM for $\mathcal{C}:\mathbb{F}_q^k \rightarrow \mathbb{F}_q^{k+r}.$

\begin{example}\label{ex1}
\normalfont Consider $\mathbb{F}_2^2 = \{00, 01, 10, 11\}$ and a function $f: \mathbb{F}_2^2 \to \{0, 1, 2\}$ such that $f(00)= 0, f(01)= f(10)=1, f(11)=2.$
Then for $t=1$ and $u_1=00,u_2=01,u_3=10,u_4=11$, we have the following DRMs:
$$
\mathcal{D}_f(t, u_1, u_2, u_3, u_4) = \begin{bmatrix} 0 & 2 & 2 & 1 \\
2 & 0& 0& 2 \\
2 &0 & 0 & 2 \\
1 & 2& 2& 0
\end{bmatrix};$$
$$\mathcal{D}_f(t, u_1, u_2, u_4) = \begin{bmatrix} 0 & 2 &  1 \\
	2 &0 &  2 \\
	1 & 2&  0
\end{bmatrix}; ~\mathcal{D}_f(t, u_1, u_2, u_3) = \begin{bmatrix} 0 & 2 & 2  \\
	2 & 0& 0 \\
	2 &0 & 0  
\end{bmatrix}. 
$$
\end{example}

\begin{definition}[Irregular-distance code or $\mathcal{D}$-code]
Let $\mathcal{D} \in \mathbb{N}^{M\times M}$. Then $\mathcal{P}=\{p_1, p_2, \ldots, p_M\}$ where $p_i \in \mathbb{F}_q^r, i=1,2,\ldots, M$  is said to be an irregular-distance code or $\mathcal{D}$-code if there is an ordering of $P$ such that $d(p_i, p_j) \geq [\mathcal{D}]_{i,j}$ for all $i, j \in \{1, 2, \ldots, M\}$. Further, $N(\mathcal{D})$ is defined as the smallest integer $r$ such that there exists a $\mathcal{D}$-code of length $r$. If $[\mathcal{D}]_{i, j} = D$ for all $i, j \in \{1,2,\ldots, M\}, i\neq j$, then $N(\mathcal{D})$ is denoted as $N(M, D)$.
\end{definition}

Here, $N(M,D)$ denotes the minimum length of an error-correcting code of size $M$ and minimum distance at least $D$. 
For $\mathcal{D}=\mathcal{D}_f(t, u_1, u_2,\ldots, u_{q^k})$, if we have a $\mathcal{D}$-code $\mathcal{P}=\{p_1, p_2, \ldots, p_{q^k}\}$ then we can use it to construct a $(f, t)$-FCC with the encoding $\mathcal{C}(u_i) = (u_i, p_i)$ for all $i \in \{1, 2, \ldots, q^k\}$.

\begin{example}
\normalfont Consider the same function  $f: \mathbb{F}_2^2 \to \{0, 1, 2\}$ from Example \ref{ex1}. Then we have   a $\mathcal{D}$-code $\mathcal{P}=\{000, 110, 110, 101\}$ for which distance between any pair of codewords is given by  
$$
 \begin{blockarray}{ccccc}
   &  000 & 110 & 110 & 101 \\
  \begin{block}{c[cccc]}
000 & 0 & 2 & 2 & 2 \\
110 & 2 & 0& 0& 2 \\
110 & 2 &0 & 0 & 2 \\
101 & 2 & 2& 2& 0 \\
    \end{block}
\end{blockarray}.$$
Since $r=3$ is the smallest length possible for a $\mathcal{D}$-code, we have $N(\mathcal{D}_f(t, u_1, u_2, u_3, u_4)) = 3$.
Further, the $(f, 1)$-FCC obtained using $\mathcal{P}$ is $\{00000, 01110, 10110, 11101\}.$
\end{example}

\begin{lemma}[{\cite[Lemma 2]{XLC2024}}]\label{lem:4from1}
    For any $M, D \in \mathbb{N}$ with $D \geq 10$ and $M \leq D^2$,
$$
N(M, D) \leq \frac{2D-2}{1 - 2\sqrt{\frac{\ln(D)}{D}}}.
$$
\end{lemma}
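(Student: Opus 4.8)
The plan is to exhibit, for each pair $(M,D)$ satisfying the hypotheses, a code of length $r \le \frac{2D}{1-2\sqrt{\ln(D)/D}}$ in which any $M$ of the codewords can be ordered so that all pairwise distances are at least $D$; since the bound on $N(M,D)$ is the constant-distance specialization of $N(\mathcal D)$, it suffices to produce a length-$r$ code with $M$ words all at mutual Hamming distance $\ge D$ (in fact over $\mathbb F_2$, which is the relevant alphabet here). I would set $r = \left\lceil \frac{2D}{1-2\sqrt{\ln(D)/D}} \right\rceil$ and use the probabilistic method: pick $M$ binary strings of length $r$ independently and uniformly at random, and show that with positive probability all $\binom{M}{2}$ pairwise distances are $\ge D$.

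The key steps, in order: (i) For a fixed pair of random words, the distance is $\mathrm{Bin}(r,1/2)$, so by a Chernoff/Hoeffding bound $\Pr[d < D] \le \Pr[d \le r/2 - (r/2 - D)] \le \exp\!\big(-2(r/2-D)^2/r\big)$, valid since our choice of $r$ makes $r/2 > D$. (ii) Union-bound over the $\binom{M}{2} \le M^2/2 \le D^4/2$ pairs (using $M \le D^2$): the failure probability is at most $\frac{D^4}{2}\exp\!\big(-\frac{(r-2D)^2}{2r}\big)$. (iii) Show this is $< 1$ by checking that $\frac{(r-2D)^2}{2r} > 4\ln D$ (absorbing the $\tfrac12$ and the fact that $4\ln D \ge \ln(D^4/2)$ for $D\ge 10$, since $\ln 2 < \ln D$ leaves slack — one may need to verify the constant is comfortable, which is where the hypothesis $D \ge 10$ is spent). (iv) Unwind the inequality $\frac{(r-2D)^2}{2r} > 4\ln D$: writing $r = \frac{2D}{1-\delta}$ with $\delta = 2\sqrt{\ln(D)/D}$, one computes $r - 2D = \frac{2D\delta}{1-\delta}$ and $\frac{(r-2D)^2}{2r} = \frac{D\delta^2}{1-\delta}$; substituting $\delta^2 = 4\ln(D)/D$ gives $\frac{4\ln D}{1-\delta}$, which is strictly greater than $4\ln D$ as long as $0 < \delta < 1$, i.e.\ as long as $4\ln(D)/D < 1/4$. (v) Conclude that for $D \ge 10$ we indeed have $\delta < 1$ (here $\delta = 2\sqrt{\ln 10/10} \approx 0.96$, which is close to $1$, so some care with the ceiling and the lower-order terms is needed), hence the probabilistic argument succeeds and a code of the claimed length exists.

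The main obstacle I anticipate is the tightness of the constants: with $D = 10$ the quantity $\delta = 2\sqrt{\ln(D)/D}$ is only barely below $1$, so $r = 2D/(1-\delta)$ is large but the slack in $\frac{(r-2D)^2}{2r} = \frac{4\ln D}{1-\delta}$ versus the required $\ln\binom{M}{2} \le 4\ln D$ is exactly the factor $\frac{1}{1-\delta}$, and one has to make sure that the rounding of $r$ up to an integer, and the replacement of $\binom{M}{2}$ by the cruder $D^4/2$, do not eat up this slack. A clean way to handle this is to carry the $\binom{M}{2} \le \frac12 D^4$ bound honestly and note $\ln(\tfrac12 D^4) = 4\ln D - \ln 2 < 4\ln D < \frac{4\ln D}{1-\delta}$, so the strict inequality is preserved with room to spare for the ceiling. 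Everything else is a routine Chernoff-bound-plus-union-bound computation; no structural difficulty beyond bookkeeping the constants for small $D$. (Alternatively, one could invoke a Gilbert–Varshamov-type existence bound directly, but the direct random construction above keeps the dependence on $M$ and $D$ transparent.)
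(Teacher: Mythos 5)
This lemma is not proved in the paper at all: it is imported verbatim from \cite{LBWY2023} (it is Lemma~4 there), where it follows from a Gilbert--Varshamov-type greedy bound on $N(\mathcal{D})$ combined with a Chernoff-type estimate of the Hamming-ball volume. Your random-coding argument is a genuinely different but equally legitimate route, and its core computation is sound: $\Pr[d<D]\le \exp\!\left(-\frac{(r-2D)^2}{2r}\right)$ for a random pair, a union bound over at most $D^4/2$ pairs, and the identity $\frac{(r-2D)^2}{2r}=\frac{4\ln D}{1-\delta}$ at $r=\frac{2D}{1-\delta}$ with $\delta=2\sqrt{\ln(D)/D}$, which exceeds $\ln(D^4/2)=4\ln D-\ln 2$ whenever $0<\delta<1$. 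One small slip: the condition for $\delta<1$ is $\ln(D)/D<1/4$ (equivalently $4\ln(D)/D<1$), not ``$4\ln(D)/D<1/4$'' as you wrote; your numerical check $\delta\approx 0.96$ at $D=10$ is the correct justification, and monotonicity of $\ln(D)/D$ for $D\ge e$ covers all larger $D$. What each approach buys: the GV-greedy proof in \cite{LBWY2023} extends directly to irregular distance matrices (which is why that paper states it in that generality), while your direct random construction is more self-contained and makes the dependence on $M$ and $D$ transparent.

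The one point you must fix is the direction of the rounding. Taking $r=\left\lceil \frac{2D}{1-\delta}\right\rceil$ proves only $N(M,D)\le \left\lceil \frac{2D}{1-\delta}\right\rceil$, which is strictly weaker than the stated bound whenever the right-hand side is not an integer; rounding $r$ up can never hurt the probabilistic estimate (the exponent $\frac{(r-2D)^2}{2r}$ is increasing in $r$ for $r>2D$), but it does hurt the conclusion. Instead take $r=\left\lfloor \frac{2D}{1-\delta}\right\rfloor$ and verify the exponent is still large enough: decreasing $r$ by at most one lowers the exponent by at most $\frac{(r-2D)(r+2D)}{2r^2}<\tfrac12$, while your slack $\frac{4\ln D\,\delta}{1-\delta}+\ln 2$ is enormous for every $D\ge 10$ (already over $200$ at $D=10$), so the strict inequality survives. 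With that adjustment, and noting that $\left\lfloor \frac{2D}{1-\delta}\right\rfloor>2D$ so the Hoeffding step applies, your argument is complete.
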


\begin{definition}
For a function $f: \mathbb{F}_q^k \to \mathrm{Im}(f)$, the distance between $f_i, f_j \in \mathrm{Im}(f)$ is defined as
$$d(f_i, f_j) = \min_{u_1, u_2 \in \mathbb{F}_q^k} \{d(u_1, u_2) |  f(u_1)=f_i, f(u_2) = f_j\}.$$
\end{definition}

\begin{definition}[Function Distance Matrix (FDM)]
Consider a function $f: \mathbb{F}_q^k \to \mathrm{Im}(f)$ and $E=|\mathrm{Im}(f)|$. Then the  $E \times E$ matrix $\mathcal{D}_f(t, f_1, f_2,\ldots, f_E)$ with entries given as
$$[\mathcal{D}_f(t, f_1, f_2,\ldots, f_E)]_{i, j} = \begin{cases}  \max(2t+1 -d(f_i, f_j), 0), \hfill\ \text{if} \ i \neq j, \\
0 \hfill \text{otherwise},
\end{cases}$$
is called the function distance matrix.
\end{definition}

\begin{example}
\normalfont For the function  $f: \mathbb{F}_2^2 \to \{0, 1,2\}$ given in Example \ref{ex1}, we have   
$$\mathcal{D}_f(t=1, f_1=0, f_2=1, f_3=2) = \begin{bmatrix} 0 & 2 & 1 \\
2 & 0 & 2 \\
1 & 2 & 0
\end{bmatrix}.$$
\end{example}

\begin{corollary}[{\cite[Corollary 1]{LBWY2023}}]\label{thm1}
For any function $f: \mathbb{F}_q^k \to \mathrm{Im}(f)$ and $\{u_1, u_2, \ldots, u_m\}\subseteq \mathbb{F}_q^k$,
$$r_f(k, t) \geq N(\mathcal{D}_f(t, u_1, u_2, \ldots, u_m)),$$
and for $|\mathrm{Im}(f)|\geq 2$, $r_f (k, t) \geq 2t$.
\end{corollary}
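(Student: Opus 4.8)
The plan is to prove the two inequalities separately, both by extracting from an optimal FCC a short irregular-distance code and invoking the definition of $N(\cdot)$.

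\emph{First inequality.} Let $\mathcal{C}:\mathbb{F}_2^k\to\mathbb{F}_2^{k+r}$ be an optimal $(f,t)$-FCC, so that $r=r_f(k,t)$, and write it in systematic form $\mathcal{C}(u)=(u,p(u))$ with $p(u)\in\mathbb{F}_2^r$. Fix $\{u_1,\dots,u_m\}\subseteq\mathbb{F}_2^k$ and put $p_i=p(u_i)$. I would show that the list $\mathcal{P}=\{p_1,\dots,p_m\}$ is a $\mathcal{D}_f(t,u_1,\dots,u_m)$-code of length $r$. For $i\neq j$ the Hamming distance splits over the systematic and redundancy blocks as $d(\mathcal{C}(u_i),\mathcal{C}(u_j))=d(u_i,u_j)+d(p_i,p_j)$; if $f(u_i)\neq f(u_j)$ the FCC property forces $d(\mathcal{C}(u_i),\mathcal{C}(u_j))\geq 2t+1$, hence $d(p_i,p_j)\geq 2t+1-d(u_i,u_j)$, and together with the trivial bound $d(p_i,p_j)\geq 0$ this says exactly $d(p_i,p_j)\geq[\mathcal{D}_f(t,u_1,\dots,u_m)]_{i,j}$; if $f(u_i)=f(u_j)$ the matrix entry is $0$ and there is nothing to verify. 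Since $\mathcal{P}$ has length $r$, the minimality in the definition of $N(\cdot)$ gives $N(\mathcal{D}_f(t,u_1,\dots,u_m))\leq r=r_f(k,t)$.

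\emph{Second inequality.} I would apply the first inequality to a well-chosen pair. Since $|\operatorname{Im}(f)|\geq 2$, the function $f$ is non-constant, so pick $a,b\in\mathbb{F}_2^k$ with $f(a)\neq f(b)$ and walk along a coordinate-by-coordinate path from $a$ to $b$ in the hypercube; the value of $f$ must change across some edge of this path, yielding $u_1,u_2$ with $d(u_1,u_2)=1$ and $f(u_1)\neq f(u_2)$. For this pair the $2\times 2$ matrix $\mathcal{D}_f(t,u_1,u_2)$ has off-diagonal entry $\max(2t+1-1,0)=2t$, so any $\mathcal{D}$-code for it is a pair of binary vectors at Hamming distance at least $2t$; as two vectors of length $r$ are at distance at most $r$, this needs $r\geq 2t$, i.e. $N(\mathcal{D}_f(t,u_1,u_2))\geq 2t$. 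Chaining with the first inequality, $r_f(k,t)\geq N(\mathcal{D}_f(t,u_1,u_2))\geq 2t$.

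The argument is essentially routine. The only points requiring any care are the additive decomposition of the Hamming distance over the two blocks of a systematic code (immediate) and the hypercube-connectivity remark that a non-constant function on $\mathbb{F}_2^k$ must change value across some edge (a one-line path argument). I do not anticipate a genuine obstacle here.
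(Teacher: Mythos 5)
Your proof is correct and follows essentially the same route the paper uses: the statement is quoted from \cite{LBWY2023}, and the paper's own generalization (Theorem \ref{thm1.1}) is proved exactly as you do, by noting that the redundancy parts of a systematic FCC form a $\mathcal{D}$-code and by exhibiting a pair $u_1,u_2$ with $d(u_1,u_2)=1$ and $f(u_1)\neq f(u_2)$ to force $r\geq 2t$. No gaps; your hypercube-edge argument for finding that pair and the block decomposition of the Hamming distance are the standard justifications.
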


\begin{theorem}[{\cite[Theorem 2]{LBWY2023}}]\label{thm2}
For any function $f: \mathbb{F}_q^k \to \mathrm{Im}(f)=\{f_1, f_2, \ldots, f_E\}$,
$$r_f(k, t) \leq N(\mathcal{D}_f(t, f_1, f_2, \ldots, f_E)),$$
where $\mathcal{D}_f(t, f_1, f_2, \ldots, f_E)$ is a FDM.
\end{theorem}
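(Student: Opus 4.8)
The plan is to build the required FCC by gluing an optimal $\mathcal{D}$-code for the function distance matrix onto the systematic part, letting the redundancy symbols be determined \emph{solely} by the function value. Concretely, write $\mathcal{D} := \mathcal{D}_f(t, f_1, f_2, \ldots, f_E)$ and $r := N(\mathcal{D})$. By the definition of $N(\cdot)$, there exists a $\mathcal{D}$-code $\mathcal{P} = \{p_1, p_2, \ldots, p_E\}$ of length $r$ which, after a suitable reordering, satisfies $d(p_i, p_j) \geq [\mathcal{D}]_{i,j}$ for all $i,j \in \{1, \ldots, E\}$; here we identify $\mathbb{F}_{2^r}$ with $\mathbb{F}_2^r$ via a fixed basis so that Hamming distance is meaningful. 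Define $\mathcal{C} : \mathbb{F}_2^k \to \mathbb{F}_2^{k+r}$ by $\mathcal{C}(u) = (u, p_i)$ whenever $f(u) = f_i$. Since the appended block depends only on which $f_i$ equals $f(u)$, the map $\mathcal{C}$ is well defined, and it is systematic by construction.

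The main step is to check that $\mathcal{C}$ satisfies the $(f,t)$-FCC distance requirement. Take $u_1, u_2 \in \mathbb{F}_2^k$ with $f(u_1) \neq f(u_2)$, say $f(u_1) = f_i$ and $f(u_2) = f_j$ with $i \neq j$. Because the first $k$ coordinates and the last $r$ coordinates are disjoint blocks,
$$
d(\mathcal{C}(u_1), \mathcal{C}(u_2)) = d(u_1, u_2) + d(p_i, p_j) \geq d(u_1, u_2) + \max\bigl(2t+1 - d(f_i, f_j),\, 0\bigr).
$$
Now invoke the definition of the distance between function values: since $f(u_1) = f_i$ and $f(u_2) = f_j$, the pair $(u_1, u_2)$ is among those over which the minimum defining $d(f_i, f_j)$ is taken, hence $d(u_1, u_2) \geq d(f_i, f_j)$. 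A two-case split finishes it. If $2t + 1 - d(f_i, f_j) \leq 0$, then $d(f_i, f_j) \geq 2t+1$, so $d(\mathcal{C}(u_1), \mathcal{C}(u_2)) \geq d(u_1,u_2) \geq d(f_i,f_j) \geq 2t+1$. Otherwise $d(\mathcal{C}(u_1), \mathcal{C}(u_2)) \geq d(u_1,u_2) + 2t + 1 - d(f_i, f_j) \geq d(f_i,f_j) + 2t+1 - d(f_i,f_j) = 2t+1$. In either case the FCC condition holds.

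Therefore $\mathcal{C}$ is an $(f,t)$-FCC with redundancy $r = N(\mathcal{D}_f(t, f_1, \ldots, f_E))$, and by definition of the optimal redundancy, $r_f(k,t) \leq N(\mathcal{D}_f(t, f_1, \ldots, f_E))$, as claimed. I do not anticipate a genuine obstacle here: the construction is direct and the only point requiring care is the inequality $d(u_1,u_2) \geq d(f_i,f_j)$ coming straight from the definition of inter-function-value distance, together with the short case analysis on the sign of $2t+1-d(f_i,f_j)$ that absorbs the $\max(\cdot,0)$ truncation in the FDM.
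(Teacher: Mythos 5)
Your proof is correct and is essentially the standard argument behind this cited result from [LBWY2023] (the paper itself states Theorem \ref{thm2} without reproducing a proof): append to each message the $\mathcal{D}$-code word indexed by its function value and use $d(u_1,u_2)\ge d(f_i,f_j)$ together with the case split on $\max(2t+1-d(f_i,f_j),0)$. No gaps; the construction and the distance verification match the intended approach.
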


\begin{corollary}[{\cite[Corollary 2]{LBWY2023}}]\label{col1}
If there exists a set of representative information vectors $u_1, u_2, \ldots, u_{E}$ with $\{f(u_1), f(u_2),$ $\ldots,$ $f(u_E)\} = \mathrm{Im}(f)$ and $\mathcal{D}_f(t, u_1, u_2, \ldots, u_E)=\mathcal{D}_f(t, f_1, f_2, \ldots, f_E)$, then
$$r_f(k, t) = N(\mathcal{D}_f(t, f_1, f_2, \ldots, f_E)).$$
\end{corollary}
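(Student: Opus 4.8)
The plan is to sandwich $r_f(k,t)$ between the upper bound already established in Theorem~\ref{thm2} and the lower bound of Corollary~\ref{thm1}, and to observe that the hypothesis on the representative vectors is exactly what forces the two bounds to coincide. So the argument will be a short chain of (in)equalities with no new estimates.

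First I would invoke Theorem~\ref{thm2} verbatim to obtain the upper bound $r_f(k,t) \le N(\mathcal{D}_f(t, f_1, \ldots, f_E))$; this needs no further work, since the FDM-based construction there already yields a systematic $(f,t)$-FCC of that redundancy. Next, for the matching lower bound, I would apply Corollary~\ref{thm1} to the particular finite subset $\{u_1, \ldots, u_E\} \subseteq \mathbb{F}_2^k$ supplied by the hypothesis, giving $r_f(k,t) \ge N(\mathcal{D}_f(t, u_1, \ldots, u_E))$. Substituting the assumed identity $\mathcal{D}_f(t, u_1, \ldots, u_E) = \mathcal{D}_f(t, f_1, \ldots, f_E)$ then yields $r_f(k,t) \ge N(\mathcal{D}_f(t, f_1, \ldots, f_E))$, and combining with the upper bound gives equality.

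The only point worth spelling out — and the closest thing to an obstacle — is why the hypothesis is a natural and nonvacuous one: for arbitrary representatives, the $(i,j)$ entry of $\mathcal{D}_f(t,u_1,\dots,u_E)$ uses the genuine Hamming distance $d(u_i,u_j)$, which by definition of $d(f_i,f_j)$ is always at least $d(f(u_i),f(u_j))$, so each DRM entry is at most the corresponding FDM entry; consequently $N(\mathcal{D}_f(t,u_1,\dots,u_E)) \le N(\mathcal{D}_f(t,f_1,\dots,f_E))$ in general, and the gap between Corollary~\ref{thm1} and Theorem~\ref{thm2} closes precisely when one can pick representatives that simultaneously attain every minimum distance $d(f_i,f_j)$. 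That is exactly the stated condition, so once it is in force the equality is immediate.
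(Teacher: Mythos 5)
Your proposal is correct and is essentially the standard argument for this result (which the paper simply cites from \cite{LBWY2023}): sandwich $r_f(k,t)$ between the lower bound of Corollary~\ref{thm1} applied to the representatives $u_1,\ldots,u_E$ and the upper bound of Theorem~\ref{thm2}, then use the assumed equality of the DRM and FDM to close the gap. Nothing further is needed.
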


Let $wt: \mathbb{F}_2^k \rightarrow \{0, 1, 2, \ldots, k\}$ denote the Hamming weight function, where $wt(u)$ represents the number of non-zero entries in the vector $u \in \mathbb{F}_2^k$. The following lower bound on the redundancy of a $(wt, t)$-FCC is given in \cite{LBWY2023}.

\begin{corollary}[{\cite[Corollary 3]{LBWY2023}}]\label{colHb}
    For any $k > t$,
\[
r_{wt}(k, t) \geq \frac{10t^3 + 30t^2 + 20t + 12}{3t^2 + 12t + 12}.
\]
\end{corollary}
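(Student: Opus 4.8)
The plan is to combine the redundancy lower bound of Corollary~\ref{thm1} with a Plotkin-type averaging estimate for $N(\mathcal{D})$, sharpened by a parity observation about the binary Hamming metric; throughout I assume $t \ge 1$, so that $k > t$ gives $k \ge t+1 \ge 2$ and the bound is non-vacuous.

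First I would reduce to a concrete distance requirement matrix. Using $k \ge t+1$, pick nested vectors $u_0, u_1, \dots, u_{t+1} \in \mathbb{F}_2^k$ with $\mathrm{wt}(u_i) = i$ and $\operatorname{supp}(u_0) \subset \operatorname{supp}(u_1) \subset \dots \subset \operatorname{supp}(u_{t+1})$, so that $d(u_i,u_j) = |i-j|$ and, the weights being distinct, $\mathrm{wt}(u_i) \ne \mathrm{wt}(u_j)$ whenever $i \ne j$. Since $|i-j| \le t+1 \le 2t$, the $(t+2)\times(t+2)$ matrix $\mathcal{D} := \mathcal{D}_{\mathrm{wt}}(t,u_0,\dots,u_{t+1})$ has strictly positive off-diagonal entries $[\mathcal{D}]_{i,j} = 2t+1-|i-j|$, and Corollary~\ref{thm1} then gives $r_{\mathrm{wt}}(k,t) \ge N(\mathcal{D})$. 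Counting index pairs by their difference yields the closed form $S := \sum_{i<j}[\mathcal{D}]_{i,j} = \sum_{d=1}^{t+1}(t+2-d)(2t+1-d) = \tfrac{5t(t+1)(t+2)}{6}$.

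Next I would bound $N(\mathcal{D})$ from below. Let $\{p_0,\dots,p_{t+1}\} \subseteq \mathbb{F}_2^r$ be a $\mathcal{D}$-code of length $r = N(\mathcal{D})$. Counting disagreeing pairs coordinate by coordinate (in each coordinate, if $a$ of the $p_i$ carry $0$ and $b = M-a$ carry $1$ with $M = t+2$, at most $ab \le \lfloor M^2/4\rfloor$ pairs disagree there) gives $\sum_{i<j} d(p_i,p_j) \le r\lfloor(t+2)^2/4\rfloor \le r(t+2)^2/4$. On the other hand $\sum_{i<j} d(p_i,p_j) \ge \sum_{i<j}[\mathcal{D}]_{i,j} = S$, and I claim this is strict: if it were an equality, then $d(p_i,p_j) = [\mathcal{D}]_{i,j}$ for every pair, so $d(p_0,p_1) = d(p_1,p_2) = 2t$ while $d(p_0,p_2) = 2t-1$; but over $\mathbb{F}_2$ we have $d(p_0,p_2) \equiv d(p_0,p_1)+d(p_1,p_2) \equiv 0 \pmod 2$, contradicting $2t-1$ being odd. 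Hence $\sum_{i<j} d(p_i,p_j) \ge S+1$, and combining the two estimates,
\[
r \;\ge\; \frac{4(S+1)}{(t+2)^2} \;=\; \frac{10t(t+1)(t+2)+12}{3(t+2)^2} \;=\; \frac{10t^3+30t^2+20t+12}{3t^2+12t+12},
\]
which is the claimed inequality. (When $t+2$ is odd one may instead use $\lfloor(t+2)^2/4\rfloor = ((t+2)^2-1)/4$ and obtain slightly more, but the displayed bound already suffices.)

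The main obstacle is the sharpening $\sum_{i<j} d(p_i,p_j) \ge S+1$: plain averaging gives only $r \ge 4S/(t+2)^2 = 10t(t+1)/(3(t+2))$, which undershoots the target by exactly $4/(t+2)^2$, so one must exploit that the band matrix $\mathcal{D}$ is not exactly realizable over $\mathbb{F}_2$ — captured here by the triangle-parity relation on the triple $(u_0,u_1,u_2)$ — and feed the extra unit of total distance back into the averaging step. The choice of nested representatives, the evaluation of $S$, and the coordinatewise count are all routine.
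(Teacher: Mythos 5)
Your proposal is correct, but note that this statement is not proved in the present paper at all: it is imported verbatim as \cite[Corollary 3]{LBWY2023}, so the only meaningful comparison is with that reference, whose derivation likewise runs through Corollary~\ref{thm1} applied to the $t+2$ nested representatives of weights $0,1,\dots,t+1$ and a Plotkin-type count on the resulting banded DRM with entries $2t+1-|i-j|$. Your computation $S=\sum_{i<j}[\mathcal{D}]_{i,j}=\tfrac{5t(t+1)(t+2)}{6}$ is right, and your key observation is also right and is exactly what is needed: the generic averaging bound (Lemma~\ref{Lem:Plot_from_[1]}) only yields $\tfrac{4S}{(t+2)^2}=\tfrac{10t(t+1)}{3(t+2)}$, which misses the stated constant by $\tfrac{4}{(t+2)^2}$, and your parity argument on the triple $(p_0,p_1,p_2)$ (over $\mathbb{F}_2$, $d(p_0,p_2)\equiv d(p_0,p_1)+d(p_1,p_2)\pmod 2$, so the requirements $2t,2t,2t-1$ cannot all be met with equality) legitimately upgrades the total-distance lower bound to $S+1$, giving $r\ge \tfrac{4(S+1)}{(t+2)^2}=\tfrac{10t^3+30t^2+20t+12}{3t^2+12t+12}$; all intermediate algebra checks out, and the coordinatewise bound $\sum_{i<j}d(p_i,p_j)\le r\lfloor (t+2)^2/4\rfloor$ is standard. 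Two small remarks: your assumption $t\ge 1$ is not merely convenient but necessary (for $t=0$ the right-hand side equals $1$ while $r_{wt}(k,0)=0$ for systematic encodings, so the cited statement implicitly assumes $t\ge1$), and it is worth stating explicitly that equality of the pair sums forces termwise equality because each $d(p_i,p_j)\ge[\mathcal{D}]_{i,j}$ and all quantities are integers, which is what lets the single parity obstruction add a full unit to the sum.
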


\section{A Construction Procedure for FCCs with Data Protection}\label{main}

In this section, we introduce a general definition for function-correcting codes that simultaneously protect data and function values. We also describe a construction technique developed for this extended setting.

\begin{definition}
Consider a function $f: \mathbb{F}_q^k \rightarrow \mathrm{Im}(f)$. An encoding $\mathfrak{C}_f: \mathbb{F}_q^k \rightarrow \mathbb{F}_q^{k+r}$ is defined as an $(f\!:d_d,d_f)$-function correcting code (FCC) if, 
\begin{itemize}
    \item for any $u_1, u_2 \in \mathbb{F}_q^k$, such that $u_1 \ne u_2$,
$$d(\mathfrak{C}_f(u_1), \mathfrak{C}_f(u_2)) \geq d_d,$$
\item 
for any $u_1, u_2 \in \mathbb{F}_q^k$ such that $f(u_1) \neq f(u_2)$, the following condition holds: 
$$d(\mathfrak{C}_f(u_1), \mathfrak{C}_f(u_2)) \geq d_f,$$
\end{itemize}
where $d_d$ and $d_f$ are two non-negative integers such that $d_d \le d_f$, and $d(x, y)$ denotes the Hamming distance between vectors $x$ and $y$.
\end{definition}

For the $(f\!:d_d,d_f)$-FCC defined above, we will also use the alternative notation $(f, t_d, t_f)$-FCC in this paper, where $t_d = \left\lfloor \frac{d_d - 1}{2} \right\rfloor$ denotes the error-correction capability for data, and $t_f = \left\lfloor \frac{d_f - 1}{2} \right\rfloor$ represents the error-correction capability for the function values. We will use $t_d$ and $t_f$ interchangeably with $d_d$ and $d_f$, respectively, throughout the paper, with the understanding that they are related as described above. The notation `$(f, t_d, t_f)$-FCC' is inspired by the earlier terminology `$(f, t)$-FCC' used in \cite{LBWY2023}. However, the notation `$(f\!:d_d,d_f)$-FCC' offers a more precise representation in terms of minimum distance, as illustrated in the following example.

\begin{example}
    Consider a function $f : \mathbb{F}_2^2 \rightarrow \{0,1\}$ defined by
$
f(00) = 0, \quad f(01) = f(10) = f(11) = 1.
$
For $t_f = 1$, both of the following codes achieve the same level of functional error correction:
$$
\{0000,\ 0111,\ 1011,\ 1111\}
\quad \text{and} \quad
\{0000,\ 0111,\ 1011,\ 1101\}.
$$
However, the first code has a minimum distance of 1, while the second code has a minimum distance of 2, allowing for single-error detection in the data.

The second code can thus be more precisely described using the notation $(f: d_d = 2, d_f = 3)$-FCC, which captures both the overall minimum distance and the minimum distance between codewords corresponding to different function values. In contrast, the notation $(f, t_d = 0, t_f = 1)$ fails to reflect this additional error-detection capability.
\end{example}

In case $d_d=d_f$, then an $(f\!:d_d,d_f)$-FCC is simply an error-correcting code with minimum distance at least $d_d$. We now present an example that illustrates our generalized framework of function-correcting codes.

\begin{example}\label{ex5}
Consider a function $f: \mathbb{F}_2^3 \rightarrow \{0,1,2,3\}$, where $f(u)$ denotes the position of the least frequent bit in the binary vector $u \in \mathbb{F}_2^3$, i.e.,
$f(000)=f(111)=0,  f(100)= f(011)=1,$
$f(010) = f(101)=2, f(001)=f(110)=3.$
For $t_f=2$, the function distance matrix is 
\[
\mathcal{D}_f (t_f, f_1,f_2,f_3,f_4)=\begin{bmatrix}
0 & 4 & 4 & 4 \\
4 & 0 & 4 & 4 \\
4 & 4 & 0 & 4\\
4 & 4 & 4 &0
\end{bmatrix},
\]
and $N(\mathcal{D}_f (t_f, f_1,f_2,f_3,f_4))=6$ determined using a trial-and-error approach. For vectors $u_1=000, u_2=100, u_3=010, u_4=001$, we have the following distance requirement matrix 
\[
\mathcal{D}_f (t_f, u_1,u_2,u_3,u_4)=\begin{bmatrix}
0 & 4 & 4 & 4 \\
4 & 0 & 3 & 3 \\
4 & 3 & 0 & 3\\
4 & 3 & 3 &0
\end{bmatrix}=\mathcal{D}.
\]
For this case also, we have $N(\mathcal{D}_f(t_f, u_1, u_2, u_3, u_4)) = 6$. Therefore, we have optimal redundancy $r_f(k, t_f)=6$ as 
$N(\mathcal{D}_f (t_f, u_1,u_2,u_3,u_4)) \le r_f(k,t_f) \le N(\mathcal{D}_f (t_f, f_1,f_2,f_3,f_4)).$

The following code is a $\mathcal{D}$-code for this function $f$, where one vector from the code is added as a parity to two message vectors that share the same function value:
$$
\{000000, 111100, 110011, 001111\}.
$$
As a result, the codewords of this FCC have length $9$ and are given by:\\
 $000000000, 111000000, 100111100, 011111100, 010110011,$ $ 101110011,$ $ 001001111, 110001111.$

It can be verified that the minimum distance of this code is $d = 3$. This implies that the code can correct up to $t_d = 1$ error in the data and up to $t_f = 2$ errors in the function values.
\end{example}

\subsection{\textbf{A general method for constructing $(f\!:d_d,d_f)$-FCCs }}\label{subsec:Method}

\textbf{Construction Method:}  
Given a function $f:\mathbb{F}_q^k \rightarrow \mathrm{Im}(f)$, the goal is to construct a code that provides protection against up to $t_d=\left\lfloor \frac{d_d-1}{2} \right\rfloor$ errors in the data and up to $t_f=\lfloor \frac{d_f-1}{2} \rfloor$ errors in the function values, where $t_d < t_f$, since protection of the data inherently contributes to the protection of the function values.

\begin{itemize}
    \item \textbf{Step 1:} Select an $[n, k, d_d]$ linear error-correcting code $\mathcal{C}$ with generator matrix $G$ of size $k \times n$. For any $u \in \mathbb{F}_q^k$, the corresponding codeword is given by $c_u = uG$.

    \item \textbf{Step 2:} Construct a function-correcting code (FCC) based on the set $\{c_u \mid u \in \mathbb{F}_q^k\}$, rather than directly on the message vectors $u$. That is, define a systematic encoding $\mathfrak{C}_f': \mathcal{C} \rightarrow \mathbb{F}_q^{n + r'}$ such that for any $u_1, u_2 \in \mathbb{F}_q^k$ with $f(u_1) \neq f(u_2)$, the following condition holds:
    \[
    d(\mathfrak{C}_f'(c_{u_1}), \mathfrak{C}_f'(c_{u_2})) \geq d_f.
    \]
\end{itemize}

Then, the resulting mapping $\mathfrak{C}_f: \mathbb{F}_q^k \rightarrow \mathbb{F}_q^{n + r'}$ defined by $\mathfrak{C}_f(u) = \mathfrak{C}_f'(c_u)$ is an $(f\!:d_d,d_f)$-FCC with total redundancy $r_s = n - k + r'$.

It is straightforward to verify that the encoding $\mathfrak{C}_f$ described above satisfies the properties of an $(f\!:d_d,d_f)$-FCC. Since $\mathcal{C}$ is an $[n,k,d_d]$ linear code, for any $u_1, u_2 \in \mathbb{F}_q^k$, such that $u_1 \ne u_2$, we have
$$d(\mathfrak{C}_f(u_1), \mathfrak{C}_f(u_2)) \geq d(c_{u_1}, c_{u_2}) \geq d_d.$$
For any $u_1, u_2 \in \mathbb{F}_q^k$ such that $f(u_1) \neq f(u_2)$, we have
$$d(\mathfrak{C}_f(u_1), \mathfrak{C}_f(u_2))=d(\mathfrak{C}_f'(c_{u_1}), \mathfrak{C}_f'(c_{u_2})) \geq d_f.$$

\textbf{Remark.} In Step 1 of the above construction, any $(n,q^k, 2t_d+1)$ non-linear code could also be used. However, due to the well-known advantages of linear codes such as compact representation and ease of encoding, we consider only linear codes.

We now introduce an extension of the notion of  DRM and FDM, where the distances are computed not over the space $\mathbb{F}_q^k$, but specifically over the $q^k$ codewords generated by a linear code of length $n$. This restriction allows us to analyze function-correcting properties within the structure of a given code, rather than across the full vector space $\mathbb{F}_q^k$. These extended definitions will help in finding suitable encodings in Step 2 of our construction method by using existing results on FCCs. In all the definitions given next, we consider a linear code $\mathcal{C}$ with generator matrix $G$ of order $k\times n$,  and $c_u=uG$ for any $u\in \mathbb{F}_q^k$.

\begin{definition}[Coded distance requirement matrix]\label{DCtf}
Let $u_1, u_2,$ $\ldots, u_{M} \in \mathbb{F}_q^k$, and $\mathcal{C}$ be a linear code with generator matrix $G$ of order $k\times n$. The coded distance requirement matrix (CDRM) $\mathcal{D}_{\mathcal{C},f}(t_f: u_1, u_2,\ldots, u_M)$ for an $(f, t_d, t_f)$-FCC is an $M \times M$ matrix with entries
$$
[\mathcal{D}_{\mathcal{C},f} (t_f: u_1, \ldots, u_M)]_{i, j} 
= 
\begin{cases}
\max (2t_f+1-d(c_{u_i}, c_{u_j}), 0), & \text{if } f(u_i) \neq f(u_j), \\
0, & \text{otherwise},
\end{cases}
$$
where $i, j \in \{1,2, \ldots, M\}$ and $c_u=uG$ for any $u\in \mathbb{F}_q^k$.
\end{definition}

Note that in the trivial case, when $\mathcal{I}$ denotes the identity map on $\mathbb{F}_q^k$ or the code whose generator matrix is an identity matrix of order $k$, we have
$$\mathcal{D}_{\mathcal{I},f} (t_f: u_1, \ldots, u_M)=\mathcal{D}_{f} (t_f, u_1, \ldots, u_M).$$

\begin{definition}
For a  function $f: \mathbb{F}_q^k \to \mathrm{Im}(f)$, the coded distance between $f_i, f_j \in \mathrm{Im}(f)$ is defined as
$$d_{\mathcal{C}}(f_i, f_j) = \min_{u_1, u_2 \in \mathbb{F}_q^k} \{d(c_{u_1}, c_{u_2}) |  f(u_1)=f_i, f(u_2) = f_j\}.$$
\end{definition}

\begin{definition}[Coded function distance matrix]
Consider a function $f: \mathbb{F}_q^k \to \mathrm{Im}(f)$ and $E=|\mathrm{Im}(f)|$. Then $E \times E$ matrix $\mathcal{D}_{\mathcal{C},f}(t_f: f_1, f_2,\ldots, f_E)$ with entries given as
$$
[\mathcal{D}_{\mathcal{C},f}(t_f: f_1, f_2,\ldots, f_E)]_{i, j} 
= 
\begin{cases}
\max(2t_f+1 -d_{\mathcal{C}}(f_i, f_j), 0), & \text{if } i \neq j, \\
0, & \text{otherwise},
\end{cases}
$$
is called a coded function distance matrix (CFDM).
\end{definition}

Using our construction method and the extended definitions of distance matrices, we construct an $(f, t_d, t_f)$-FCC in the following example. This example illustrates our construction procedure and demonstrates that an $(f, t_f)$-FCC does not necessarily guarantee data protection. In contrast, the code generated using our method, with the same code length, ensures both function value and data protection.

\begin{example}\label{ex2}
Consider Hamming weight function $f: \mathbb{F}_2^3 \rightarrow \{0,1,2,3\}$ defined as $f(u)=$wt$(u)$, where $u \in  \mathbb{F}_2^3$, i.e.,
\begin{align*}
f(000)&=0,\\
f(100)&= f(010)=f(001)=1, \\
f(110) &= f(101)=f(011)=2,\\
f(111)&=3.
\end{align*}

The aim is to construct a code that provides protection from up to $t_d=1$ error in the data and up to $t_f=2$ errors in function values.

\begin{itemize}
\item \textbf{Step 1:} Select the $[6, 3, 3]$ linear error correcting code $\mathcal{C}$ with generator matrix 
$$G=\begin{bmatrix}
1 & 0 & 0 & 1 & 1& 0 \\
0 & 1 & 0 & 1 & 0& 1 \\
0 & 0 & 1 & 0 & 1& 1 \\
\end{bmatrix}.
$$ 
For any $u\in \mathbb{F}_2^k$, we denote the corresponding codeword as $c_u=uG$.

\item \textbf{Step 2:} Find an FCC based on $\{c_u \mid u \in \mathbb{F}_2^3\}$ not based on $u\in \mathbb{F}_2^3$. For vectors $u_1=000, u_2=100, u_3=011, u_4=111$, we have $c_{u_1}=000000, c_{u_2}=100110, c_{u_3}=011110, c_{u_4}=111000$. Therefore, for $t_f=2$, the corresponding CDRM is
\[
\mathcal{D}_{\mathcal{C},f} (t_f: u_1,u_2,u_3,u_4)=\begin{bmatrix}
0 & 2 & 1 & 2 \\
2 & 0 & 2 & 1 \\
1 & 2 & 0 & 2\\
2 & 1 & 2 &0
\end{bmatrix}=\mathcal{D},
\]
for which, we have $N(\mathcal{D}_{\mathcal{C},f}  (t_f\!:\! u_1,u_2,u_3,u_4))=3$ and a $\mathcal{D}$-code achieving this value is $\{000,110,101,011\}$.
Further, the coded function distance matrix  $\mathcal{D}_{\mathcal{C},f} (t_f: f_1,f_2,f_3,f_4)$ is same as CDRM given above. Therefore, the optimal redundancy for systematic encoding $\mathfrak{C}_f'$ is $r'=3$. 
We use a $\mathcal{D}$-code with optimal redundancy $3$, given by $$\{000,110,101,011\},$$ by adding one vector from this set as the parity to all message vectors having the same function value.
\end{itemize}
Hence, our code is defined as 
\[
\begin{matrix}
\text{Message}  & \text{$\mathcal{C}$-parity} \ \mathcal{C} & \text{FCC-parity} & \text{Codeword} \\
000 & 000   & 000 & 000000000 \\
100 & 110   & 110 & 100110110 \\
010 & 101   & 110 & 010101110 \\
001 & 011   & 110 & 001011110 \\
110 & 011   & 101 & 110011101 \\
101 & 101   & 101 & 101101101 \\
011 & 110   & 101 & 011110101 \\
111 & 000   & 011 & 111000011 
\end{matrix}
\]

On the other hand, if we construct an $(f, t_f)$-FCC for this function $f$ with $t_f = 2$ in the traditional way, the optimal redundancy is 6, as confirmed by the lower bound in Corollary \ref{colHb}, resulting in a code length equal to that of our proposed code. However, in that case, any protection at the data level is not guaranteed. To demonstrate this, we present the following $(f, 2)$-FCC.
\[
\begin{matrix}
\text{Message}  &  \text{Redundancy in FCC} & \text{Codeword} \\
000 & 000000 & 000000000 \\
100 & 111100 & 100111100 \\
010 & 111100 & 010111100 \\
001 & 111100 & 001111100 \\
110 & 110011 & 110110011 \\
101 & 110011 & 101110011 \\
011 & 110011 & 011110011 \\
111 & 001111 & 111001111
\end{matrix}
\]

As can be checked, the minimum distance of this code is $d=2$, which means it cannot correct one error in the data.

\end{example}

\section{Bounds for optimal redundancy for FCCs with Data Protection}\label{general_definitions}

In this section, we first extend the definitions introduced in \cite{LBWY2023} for $(f, t)$-FCCs to more general $(f, t_d, t_f)$-FCCs. We then derive bounds on the redundancy of these codes using irregular-distance codes.

\begin{definition}[Optimal redundancy]
The \emph{optimal redundancy}, denoted by $r_f(k :\! d_d, d_f)$ or equivalently by $r_f(k, t_d, t_f)$, is defined as the minimum value of $r$ for which there exists an $(f\!:d_d,d_f)$-FCC with an encoding function $\mathfrak{C}_f: \mathbb{F}_q^k \rightarrow \mathbb{F}_q^{k + r}$. 
\end{definition}

For a function $f: \mathbb{F}_q^k \rightarrow \mathrm{Im}(f)$, any $(f, t_d, t_f)$-FCC is an $(f, t_f)$-FCC, and not conversely. Therefore, it is clear from the definition of optimal redundancy that 
$$r_f(k,t_f)\le r_f(k, t_d, t_f).$$
For $t_d=0$, we have $r_f(k, t_d, t_f)=r_f(k, t_f)$. For some function, equality might hold for some non-zero $t_d$, such as in Example \ref{ex2}. In Example \ref{ex2}, for $t_d=1$, we have 
$$r_f(k, t_d, t_f)=r_f(k, t_f)=6.$$

\begin{definition}\label{Dtdtf}
Let $u_1, u_2,$ $\ldots, u_{M} \in \mathbb{F}_q^k$. The distance requirement matrix (DRM) $\mathcal{D}_{f}(t_d,t_f: u_1, u_2,\ldots, u_M)$ for an $(f, t_d, t_f)$-FCC is an $M \times M$ matrix with entries
$$
[\mathcal{D}_{f} (t_d, t_f: u_1, \ldots, u_M)]_{i, j} 
= 
\begin{cases}
\max (2t_d+1-d(u_i, u_j), 0), & u_i \ne u_j, f(u_i) = f(u_j), \\
\max (2t_f+1-d(u_i, u_j), 0), & f(u_i) \neq f(u_j), \\
0, & \text{otherwise},
\end{cases}
$$
where $i, j \in \{1,2, \ldots, M\}$.
\end{definition}

The representation of the distance requirement matrix always depends on the order in which the vectors in $\mathbb{F}_q^k$ are taken. However, regardless of the order, the matrices are equivalent up to row and column permutations.

\begin{example}\label{ex1.2}
\normalfont Consider the Hamming weight function $f: \mathbb{F}_2^3 \to \{0, 1, 2, 3\}$ where $f(u)=\text{wt}(u)$ for all $u\in \mathbb{F}_2^3$.
Then for $t_d=1$ and $t_f=2$, the distance requirement matrix is
$$\mathcal{D}_f(t_d,t_f: u_1, u_2, \ldots, u_8) =
\begin{bmatrix} 
0 & \textcolor{red}{\mathbf{4}} & \textcolor{red}{\mathbf{4}} & \textcolor{red}{\mathbf{4}} & \textcolor{red}{\mathbf{3}} & \textcolor{red}{\mathbf{3}} & \textcolor{red}{\mathbf{3}} & \textcolor{red}{\mathbf{2}}\\
\textcolor{red}{\mathbf{4}} & 0 & \textcolor{blue}{\underline{1}} & \textcolor{blue}{\underline{1}} & \textcolor{red}{\mathbf{4}} & \textcolor{red}{\mathbf{4}} & \textcolor{red}{\mathbf{2}} & \textcolor{red}{\mathbf{3}}\\
\textcolor{red}{\mathbf{4}} & \textcolor{blue}{\underline{1}} & 0 & \textcolor{blue}{\underline{1}} & \textcolor{red}{\mathbf{4}} & \textcolor{red}{\mathbf{2}} & \textcolor{red}{\mathbf{4}} & \textcolor{red}{\mathbf{3}}\\
\textcolor{red}{\mathbf{4}} & \textcolor{blue}{\underline{1}} & \textcolor{blue}{\underline{1}} & 0 & \textcolor{red}{\mathbf{2}} & \textcolor{red}{\mathbf{4}} & \textcolor{red}{\mathbf{4}} & \textcolor{red}{\mathbf{3}}\\
\textcolor{red}{\mathbf{3}} & \textcolor{red}{\mathbf{4}} & \textcolor{red}{\mathbf{4}} & \textcolor{red}{\mathbf{2}} & 0 & \textcolor{blue}{\underline{1}} & \textcolor{blue}{\underline{1}} & \textcolor{red}{\mathbf{4}}\\
\textcolor{red}{\mathbf{3}} & \textcolor{red}{\mathbf{4}} & \textcolor{red}{\mathbf{2}} & \textcolor{red}{\mathbf{4}} & \textcolor{blue}{\underline{1}} & 0 & \textcolor{blue}{\underline{1}} & \textcolor{red}{\mathbf{4}}\\
\textcolor{red}{\mathbf{3}} & \textcolor{red}{\mathbf{2}} & \textcolor{red}{\mathbf{4}} & \textcolor{red}{\mathbf{4}} & \textcolor{blue}{\underline{1}} & \textcolor{blue}{\underline{1}} & 0 & \textcolor{red}{\mathbf{4}}\\
\textcolor{red}{\mathbf{2}} & \textcolor{red}{\mathbf{3}} & \textcolor{red}{\mathbf{3}} & \textcolor{red}{\mathbf{3}} & \textcolor{red}{\mathbf{4}} & \textcolor{red}{\mathbf{4}} & \textcolor{red}{\mathbf{4}} & 0
\end{bmatrix},$$
where the information vectors are considered in the following order $000, 100, 010, 001, 110, 101, 011, 111$. The red (bold) entries  correspond to vector pairs with different function values, while the blue (underlined) entries correspond to vector pairs with the same function value.
\end{example}

For $\mathcal{D}=\mathcal{D}_f(t_d, t_f: u_1, u_2,\ldots, u_{2^k})$, if we have a $\mathcal{D}$-code $\mathcal{P}=\{p_1, p_2, \ldots, p_{2^k}\}$ then it can be used to construct a $(f, t_d, t_f)$-FCC with the encoding $\mathcal{C}(u_i) = (u_i, p_i)$ for all $i \in \{1, 2, \ldots, 2^k\}$.

\begin{example}
\normalfont Consider the same function  $f: \mathbb{F}_2^3\to \{0, 1,2,3\}$ from Examples \ref{ex2} and \ref{ex1.2}. As given in Example \ref{ex2}, we have a $\mathcal{D}$-code, $\mathcal{P}=\{000000, 110110, 101110, 011110, 011101, 101101,$ $110101, 000011\}$ for which the distance distribution structure is
\begin{equation*}
\resizebox{\columnwidth}{!}{
 \begin{blockarray}{ccccccccc}
   &  000000 & 110110 & 101110 & 011110 & 011101 & 101101 & 110101 & 000011 \\
  \begin{block}{c[cccccccc]}
000000 & 0 & 4 & 4 & 4 & 4 & 4 & 4 & 2\\
110110 & 4 & 0 & 2 & 2 & 4 & 4 & 2 & 4\\
101110 & 4 & 2 & 0 & 2 & 4 & 2 & 4 & 4\\
011110 & 4 & 2 & 2 & 0 & 2 & 4 & 4 & 4\\
011101 & 4 & 4 & 4 & 2 & 0 & 2 & 2 & 4\\
101101 & 4 & 4 & 2 & 4 & 2 & 0 & 2 & 4\\
110101 & 4 & 2 & 4 & 4 & 2 & 2 & 0 & 4\\
000011 & 2 & 4 & 4 & 4 & 4 & 4 & 4 & 0 \\
\end{block}
\end{blockarray}.}
\end{equation*}

From Corollary~\ref{colHb}, we know that $r_f(k, t_f = 2) \geq 5.25$. Additionally, the existence of an $(f, t_f)$-FCC with redundancy 6 has already been demonstrated in Example~\ref{ex2}. Therefore, we conclude that $r_f(k = 3, t_f = 2) = 6$. Since $r_f(k, t_f) \leq r_f(k, t_d, t_f)$, it follows that 6 is the minimum possible length for a $\mathcal{D}$-code, and we have $r_f(k, t_d, t_f)=6$.
\end{example}

\begin{theorem}\label{thm:r=ND}
    For any function $f:\mathbb{F}_q^k \rightarrow \mathrm{Im}(f)$, we have
    $$r_f(k,t_d,t_f) = N(\mathcal{D}_{f} (t_d, t_f: u_1, \ldots, u_{q^k})).$$
\end{theorem}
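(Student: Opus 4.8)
The plan is to prove the equality by establishing the two inequalities $r_f(k,t_d,t_f) \le N(\mathcal{D}_f(t_d,t_f: u_1,\ldots,u_{q^k}))$ and $r_f(k,t_d,t_f) \ge N(\mathcal{D}_f(t_d,t_f: u_1,\ldots,u_{q^k}))$ separately, mirroring the structure of the analogous argument for ordinary FCCs in \cite{LBWY2023}. Fix an ordering $u_1,\ldots,u_{q^k}$ of $\mathbb{F}_q^k$ and write $\mathcal{D} = \mathcal{D}_f(t_d,t_f: u_1,\ldots,u_{q^k})$ for brevity.

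For the upper bound, I would start from an optimal $\mathcal{D}$-code $\mathcal{P} = \{p_1,\ldots,p_{q^k}\}$ of length $r = N(\mathcal{D})$, ordered so that $d(p_i,p_j) \ge [\mathcal{D}]_{i,j}$ for all $i,j$. Define the systematic encoding $\mathfrak{C}_f(u_i) = (u_i, p_i)$. Then for $u_i \ne u_j$ we have $d(\mathfrak{C}_f(u_i),\mathfrak{C}_f(u_j)) = d(u_i,u_j) + d(p_i,p_j)$. If $f(u_i) = f(u_j)$, the matrix entry is $\max(2t_d+1 - d(u_i,u_j), 0)$, so $d(p_i,p_j) \ge 2t_d+1 - d(u_i,u_j)$, giving $d(\mathfrak{C}_f(u_i),\mathfrak{C}_f(u_j)) \ge 2t_d+1 \ge d_d$; if $f(u_i)\ne f(u_j)$ the same computation with $2t_f+1$ gives $\ge d_f$. (One should note the harmless edge case where $d(u_i,u_j)$ already exceeds the required distance, handled by the $\max(\cdot,0)$.) Hence $\mathfrak{C}_f$ is an $(f\!:d_d,d_f)$-FCC with redundancy $r$, so $r_f(k,t_d,t_f) \le N(\mathcal{D})$.

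For the lower bound, take any $(f\!:d_d,d_f)$-FCC $\mathfrak{C}_f : \mathbb{F}_q^k \to \mathbb{F}_q^{k+r}$ with $r = r_f(k,t_d,t_f)$. Since the encoding is systematic, write $\mathfrak{C}_f(u_i) = (u_i, p_i)$ with $p_i \in \mathbb{F}_q^r$, and set $\mathcal{P} = \{p_1,\ldots,p_{q^k}\}$. For any $i,j$ with $u_i \ne u_j$, $d(\mathfrak{C}_f(u_i),\mathfrak{C}_f(u_j)) = d(u_i,u_j) + d(p_i,p_j)$; the defining inequalities of the FCC give $d(u_i,u_j)+d(p_i,p_j) \ge 2t_d+1$ when $f(u_i)=f(u_j)$ and $\ge 2t_f+1$ when $f(u_i)\ne f(u_j)$, so in each case $d(p_i,p_j) \ge [\mathcal{D}]_{i,j}$ (the bound is vacuous, hence still valid, when the right side is $0$). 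Thus $\mathcal{P}$ is a $\mathcal{D}$-code of length $r$, whence $N(\mathcal{D}) \le r = r_f(k,t_d,t_f)$. Combining the two inequalities yields the claim.

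I expect the argument to be essentially routine; the only points requiring a little care are (i) being precise that $d_d = 2t_d+1$ and $d_f = 2t_f+1$ when $d_d, d_f$ are odd, or more robustly phrasing everything in terms of $t_d, t_f$ since the DRM is defined via $2t_d+1$ and $2t_f+1$ and the FCC conditions via $d_d, d_f$ — one should check the convention $d_d = 2\lfloor (d_d-1)/2\rfloor + 1$ only holds for odd $d_d$, so it is cleanest to run the whole proof with the $(f,t_d,t_f)$ formulation and the $2t+1$ thresholds throughout; and (ii) noting that the additive splitting $d(\mathfrak{C}_f(u_i),\mathfrak{C}_f(u_j)) = d(u_i,u_j) + d(p_i,p_j)$ is exactly what makes systematic encodings interchangeable with $\mathcal{D}$-codes. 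No genuine obstacle is anticipated — this is the natural generalization of \cite[Theorem 1 / the $N(\mathcal{D})$ characterization]{LBWY2023} to two distance thresholds.
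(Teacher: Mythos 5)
Your proposal is correct and follows essentially the same argument as the paper: the upper bound by using an optimal $\mathcal{D}$-code as the redundancy part of a systematic encoding, and the lower bound by observing that the redundancy vectors of any (systematic) $(f,t_d,t_f)$-FCC must satisfy the entries of $\mathcal{D}_f(t_d,t_f:u_1,\ldots,u_{q^k})$. The only cosmetic difference is that you state the lower bound directly, whereas the paper phrases it as a contradiction; the content is identical, and both arguments rely on the same additive splitting $d(\mathfrak{C}_f(u_i),\mathfrak{C}_f(u_j))=d(u_i,u_j)+d(p_i,p_j)$ for systematic encodings.
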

\begin{proof}
   Let $\mathcal{D}_{f} (t_d, t_f: u_1, \ldots, u_{q^k})$ be denoted by $\mathcal{D}$ for notational simplicity. Clearly, $r_f(k,t_d,t_f) \leq N(\mathcal{D})$ as a  $\mathcal{D}$-code of length $N(\mathcal{D})$ can be used as redundancy for an $(f,t_d,t_f)$-FCC. For the converse part assume that $r_f(k,t_d,t_f) < N(\mathcal{D})$. This means there exists an $(f, t_d,t_f)$-FCC $\mathfrak{C}_f$ with redundancy $r_f(k,t_d,t_f)$. Then at least one of the following cases is true.
   \begin{enumerate}
       \item There exist two redundancy vectors $p_u,p_v \in \mathbb{F}_q^{r_f(k,t_d,t_f)}$ for some $u,v \in \mathbb{F}_q^k, u\ne v$ such that $d(p_u, p_v) < 2t_d+1 -d(u,v)$.
       \item There exist two redundancy vectors $p_{u'},p_{v'} \in \mathbb{F}_q^{r_f(k,t_d,t_f)}$ for some $u',v' \in \mathbb{F}_q^k, f(u')\ne f(v')$ such that $d(p_{u'}, p_{v'}) < 2t_f+1 -d(u',v')$.
   \end{enumerate}
  
   That means $d(\mathfrak{C}_f(u), \mathfrak{C}_f(v)) < 2t_d+1$ or $d(\mathfrak{C}_f(u'), \mathfrak{C}_f(v')) < 2t_f+1$, respectively, which is a contradiction to the definition of an $(f, t_d, t_f)$-FCC. Therefore,  $r_f(k,t_d,t_f) = N(D)$.
\end{proof}

\begin{theorem}\label{thm1.1}
For any function $f: \mathbb{F}_q^k \to \mathrm{Im}(f)$ and $\{u_1, u_2, \ldots, u_m\}\subseteq \mathbb{F}_q^k$, we have
$$r_f(k, t_d, t_f) \geq N(\mathcal{D}_f(t_d, t_f: u_1, u_2, \ldots, u_m)),$$
and  $r_f (k, t_d, t_f) \geq 2t_f$ for $|\mathrm{Im}(f)|\geq 2$.

Further, $r_f (k, t_d, t_f) \geq N(q^k,2t_d+1)-k$, where $N(M, d)$ is the minimum length of an error-correcting code with $M$ codewords and minimum distance $d$.
\end{theorem}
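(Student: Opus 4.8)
The plan is to establish the three inequalities separately, each by a short argument adapting the classical FCC reasoning of \cite{LBWY2023} to the present two-distance setting. Throughout I would read an $(f,t_d,t_f)$-FCC as an $(f\!:d_d,d_f)$-FCC with $d_d=2t_d+1$ and $d_f=2t_f+1$, and write a systematic encoding as $\mathfrak{C}_f(u)=(u,p_u)$ with redundancy vector $p_u\in\mathbb{F}_q^{r}$.

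For the first inequality I would fix an $(f,t_d,t_f)$-FCC $\mathfrak{C}_f$ of redundancy $r=r_f(k,t_d,t_f)$ and an arbitrary subset $\{u_1,\dots,u_m\}\subseteq\mathbb{F}_q^k$, and split the distance as $d(\mathfrak{C}_f(u_i),\mathfrak{C}_f(u_j))=d(u_i,u_j)+d(p_{u_i},p_{u_j})$ for $i\ne j$. The FCC property forces this total to be at least $2t_d+1$ when $u_i\ne u_j$, and at least $2t_f+1$ when moreover $f(u_i)\ne f(u_j)$; rearranging shows $d(p_{u_i},p_{u_j})\ge[\mathcal{D}_f(t_d,t_f:u_1,\dots,u_m)]_{i,j}$ in every case (the entry being $0$ precisely when the corresponding total already exceeds the threshold). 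Hence $\{p_{u_1},\dots,p_{u_m}\}$ is a $\mathcal{D}_f(t_d,t_f:u_1,\dots,u_m)$-code of length $r$, and minimality of $N$ yields $r\ge N(\mathcal{D}_f(t_d,t_f:u_1,\dots,u_m))$. Equivalently, since this DRM is a principal submatrix of the full one, the claim can also be deduced from Theorem \ref{thm:r=ND} by restricting a $\mathcal{D}$-code to the relevant coordinates.

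For the bound $r_f(k,t_d,t_f)\ge 2t_f$ when $|Im(f)|\ge 2$, the quickest route is to observe that every $(f,t_d,t_f)$-FCC is in particular an $(f,t_f)$-FCC in the sense of \cite{LBWY2023}, so $r_f(k,t_f)\le r_f(k,t_d,t_f)$, and then invoke Corollary \ref{thm1}. If a self-contained argument is preferred, I would use connectivity of the Hamming graph: walking one coordinate change at a time from a preimage of one function value to a preimage of another produces adjacent vectors $w_1,w_2$ with $d(w_1,w_2)=1$ and $f(w_1)\ne f(w_2)$, and the first inequality applied to $\{w_1,w_2\}$ gives $r_f(k,t_d,t_f)\ge N(\mathcal{D}_f(t_d,t_f:w_1,w_2))=N(2,2t_f)=2t_f$.

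Finally, for the last inequality I would note that the $(f,t_d,t_f)$-FCC condition in particular requires $d(\mathfrak{C}_f(u_1),\mathfrak{C}_f(u_2))\ge 2t_d+1$ for all $u_1\ne u_2$; since the systematic map $\mathfrak{C}_f$ is injective, its image is a $q$-ary code of length $k+r$ with $q^k$ codewords and minimum distance at least $2t_d+1$, whence $k+r\ge N(q^k,2t_d+1)$ by definition of $N(M,d)$, and taking $\mathfrak{C}_f$ optimal gives the claim. I do not expect any of these steps to present a genuine obstacle; the only points needing care are the bookkeeping convention relating $(t_d,t_f)$ to $(d_d,d_f)$, the (trivial) verification that $N(2,D)=D$, and, in the self-contained version of the middle claim, justifying the one-flip-at-a-time walk in the Hamming graph.
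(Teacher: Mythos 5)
Your proposal is correct and follows essentially the same route as the paper: restricting the redundancy vectors to the chosen subset yields the DRM lower bound, a distance-$1$ pair with different function values plus the repetition-code value $N(2,2t_f)=2t_f$ gives the second bound, and viewing the FCC as a length-$(k+r)$ code with $q^k$ codewords and minimum distance $2t_d+1$ gives the third. The only minor caveat is that Corollary~\ref{thm1} as quoted is stated for binary $f$, so over $\mathbb{F}_q$ you should use your self-contained Hamming-graph argument for the middle claim, which is precisely what the paper does.
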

\begin{proof}
    The first part is straightforward, as an optimal FCC should satisfy the distance requirement for all the vectors in $\mathbb{F}_q^k$. 
 If $|\mathrm{Im}(f)|\geq 2$, then there exist $u,v \in \mathbb{F}_q^k$ such that $f(u) \ne f(v)$ and $d(u,v)=1$.Therefore, $$r_f(k, t_d, t_f) \geq N(\mathcal{D}_f(t_d, t_f: u, v))=2t_f$$ using repetition code.
 Furthermore, by the definition of $\mathcal{D}_f(t_d, t_f: u_1, u_2, \ldots, u_{q^k})$, for any $i,j \in [q^k]$,
$$[\mathcal{D}_{f} (t_d, t_f: u_1, \ldots, u_{q^k})]_{i, j}\ge \max (2t_d+1-d(u_i, u_j), 0),$$ 
whenever $u_i\ne u_j$ as $t_f\ge t_d$. Since this is the requirement for any error-correcting code with minimum distance at least $2t_d+1$, we have $r_f (k, t_d, t_f) \geq N(q^k,2t_d+1)-k$.
 
\end{proof}

For the binary vector space $\mathbb{F}_2^k$, we derive a tighter lower bound on the redundancy of FCC, as given below.

\begin{theorem}\label{thm1.2}
For any function $f: \mathbb{F}_2^k \to \mathrm{Im}(f)$, if $|\mathrm{Im}(f)| \geq 2$ and $k\geq 2$, then $$r_f (k, t_d, t_f) \geq 2t_f+t_d.$$
\end{theorem}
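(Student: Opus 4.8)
The plan is to exhibit a single triple of information vectors $a,b,c\in\mathbb{F}_2^k$ that witnesses a "$t_f$-type" distance constraint twice and a "$t_d$-type" constraint once, and then to play these three constraints against each other via a Plotkin-type inequality on the redundancy symbols. Write $\mathfrak{C}_f(u)=(u,p_u)$ with $p_u\in\mathbb{F}_2^{r}$, $r=r_f(k,t_d,t_f)$; since the encoding is systematic, $d(\mathfrak{C}_f(u),\mathfrak{C}_f(v))=d(u,v)+d(p_u,p_v)$. One may assume $|Im(f)|\ge 2$ (otherwise there is no function-value constraint at all), which together with $|Im(f)|\le k$ forces $k\ge 2$.

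The combinatorial core is the claim that there exist $a,b,c$ with $f(a)=f(b)\neq f(c)$, $a\ne b$, and either \textbf{(i)} $d(a,b)=d(a,c)=1$ and $d(b,c)=2$, or \textbf{(ii)} $d(a,b)=2$ and $d(a,c)=d(b,c)=1$. To prove it, call a vertex $v$ \emph{homogeneous} if all $k$ of its neighbours share its $f$-value, and \emph{heterogeneous} if all its neighbours have $f$-value different from $f(v)$. If some vertex $v_0$ has both a same-value neighbour and a different-value neighbour, take $a=v_0$ and those two neighbours as $b$ and $c$: then $b\ne c$ (as $f(b)=f(a)\ne f(c)$), and being two distinct neighbours of $a$ they are at distance $2$, so we are in case (i). Otherwise every vertex is homogeneous or heterogeneous; not all can be homogeneous, since the hypercube is connected and $f$ would then be constant, contradicting $|Im(f)|\ge 2$. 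So some $v_0$ is heterogeneous; its neighbours $w_1,\dots,w_k$ all satisfy $f(w_i)\ne f(v_0)$, and each $w_i$ is itself heterogeneous, since it has the differently-valued neighbour $v_0$. If $f(w_i)=f(w_j)$ for some $i\ne j$, then $w_i,w_j$ lie at distance $2$ with common neighbour $v_0$ of a different value, giving case (ii). Otherwise $f(w_1),\dots,f(w_k)$ are $k$ distinct values all distinct from $f(v_0)$, whence $|Im(f)|\ge k+1$, contradicting the hypothesis; this is exactly where $|Im(f)|\le k$ enters.

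For the Plotkin-type step, recall that for any three words of length $r$ over $\mathbb{F}_2$ each coordinate contributes $0$ or $2$ to the sum of the three pairwise Hamming distances, so $d(p_a,p_b)+d(p_a,p_c)+d(p_b,p_c)\le 2r$. The $(f,t_d,t_f)$-FCC conditions give $d(p_a,p_b)\ge 2t_d+1-d(a,b)$ (as $f(a)=f(b)$, $a\ne b$) and $d(p_a,p_c)\ge 2t_f+1-d(a,c)$, $d(p_b,p_c)\ge 2t_f+1-d(b,c)$. Substituting the distances from case (i) or case (ii) gives, in both cases, $2r\ge 4t_f+2t_d-1$ (discarding a trivially nonpositive truncated term in the degenerate cases $t_f=0$ or $t_d=0$, where the inequality $r\ge 2t_f+t_d$ is anyway immediate, being $r\ge 2t_f$ from Theorem~\ref{thm1.1}, respectively $r\ge 0$). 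Since $r$ is an integer and $2t_f+t_d-\tfrac12$ rounds up to $2t_f+t_d$, we conclude $r\ge 2t_f+t_d$.

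I expect the main obstacle to be the combinatorial claim — specifically arranging the homogeneous/heterogeneous dichotomy so that the pigeonhole on the neighbours of a heterogeneous vertex is the sole and exact use of "$|Im(f)|\le k$", and checking that case (ii) genuinely covers the functions (such as the parity function) for which no vertex has both kinds of neighbour. The Plotkin estimate and the bookkeeping of the $\max(\cdot,0)$ truncations are routine, but the boundary values $t_d=0$ and $t_f=0$ should be verified explicitly.
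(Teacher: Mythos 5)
Your proposal is correct and follows essentially the same route as the paper's proof: both locate three information vectors at pairwise distances $\{1,1,2\}$ realizing two function-value constraints and one data constraint (your cases (i) and (ii) are exactly the paper's Cases 2 and 1, obtained by the same pigeonhole on a hypercube neighbourhood using $|Im(f)|\le k$), and both then apply the three-word bound $d(p_a,p_b)+d(p_a,p_c)+d(p_b,p_c)\le 2r$ to get $2r\ge 4t_f+2t_d-1$ and round up. Your slightly different case organization (homogeneous/heterogeneous dichotomy) and your explicit handling of the $\max(\cdot,0)$ truncations at $t_d=0$ or $t_f=0$ are minor refinements of the same argument.
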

\begin{proof}
 Since $|\mathrm{Im}(f)| \geq 2$, there exist two vectors $u_1, u_2 \in \mathbb{F}_2^k$ such that $d(u_1, u_2) = 1$ and $f(u_1) \neq f(u_2)$. Now consider another vector $u_3=u_1+e_i,$
where $e_i$ is the $i$-th standard basis binary vector of length $k$, and $u_3\ne u_2$. Now we have following 3 cases with $f(u_3)$:\\
\textbf{Case 1:} If $f(u_3)=f(u_2)$, then we have $d(u_1,u_2)=d(u_1, u_3)=1, d(u_2,u_3)=2$ and $f(u_1)\neq f(u_2)=f(u_3)$. Therefore,
\[
     \mathcal{D}_f(t_d, t_f:u_1,u_2,u_3)= 
     \begin{bmatrix}
         0 & 2t_f & 2t_f \\
         2t_f & 0 & 2t_d-1 \\
         2t_f & 2t_d-1 & 0
     \end{bmatrix}.
     \]   
\textbf{Case 2:} If $f(u_3)= f(u_1)$, then we have $d(u_1, u_2) = d(u_1, u_3) = 1$, $d(u_2, u_3) = 2$, and $f(u_1) = f(u_3) \ne f(u_2)$. Therefore,
\[
     \mathcal{D}_f(t_d, t_f:u_1,u_2,u_3)= 
     \begin{bmatrix}
         0 & 2t_f & 2t_d \\
         2t_f & 0 & 2t_f-1 \\
         2t_d & 2t_f-1 & 0
     \end{bmatrix}.
     \]
 \textbf{Case 3:} If $f(u_3)\neq f(u_1)$ and $f(u_3) \neq f(u_2)$, then 
 we have $d(u_1, u_2) = d(u_1, u_3) = 1$, $d(u_2, u_3) = 2$, and $f(u_1) \neq f(u_2) \ne f(u_3)$. Therefore,
\[
     \mathcal{D}_f(t_d, t_f:u_1,u_2,u_3)= 
     \begin{bmatrix}
         0 & 2t_f & 2t_f \\
         2t_f & 0 & 2t_f-1 \\
         2t_f & 2t_f-1 & 0
     \end{bmatrix}.
     \]
    Now consider the matrix $\mathcal{D}_f(t_d, t_f:u_1,u_2,u_3)=\mathcal{D}$ from Case 1.  Let $\{p_1, p_2, p_3\}$ be a $\mathcal{D}$-code of length $r$, so we have $d(p_1, p_2) \ge 2t_f$, $ d(p_1, p_3) \geq 2t_f$ and $d(p_2, p_3) \geq 2t_d - 1$. Further, for any three binary vectors of length $r$, we have  $d(p_1, p_2) + d(p_1, p_3) + d(p_2, p_3) \leq 2r$. Combining these, we obtain 
    $$4t_f \leq d(p_1, p_2) + d(p_1, p_3) \leq 2r - (2t_d - 1),$$ which implies $r \geq \frac{4t_f + 2t_d-1}{2} = 2t_f + t_d - \frac{1}{2}$. Therefore, we conclude that $r \geq 2t_f + t_d$, and hence $r_f(k, t_d, t_f) \geq 2t_f + t_d$. A similar proof will follow for matrix $\mathcal{D}_f(t_d, t_f:u_1,u_2,u_3)$ in Case 2. Applying the same logic to the matrix $\mathcal{D}_f(t_d, t_f:u_1,u_2,u_3)$ in Case 3, we get 
    $$r\geq \frac{6t_f-1}{2} \ge 3t_f-\frac{1}{2}.$$
Since $r$ is an integer and $t_f\geq t_d$, we have $r\geq 3t_f\ge 2t_f+t_d$.
\end{proof}

\subsection{Upper bounds for $r_f(k,t_d,t_f)$ }

We have defined two types of distance requirement matrices in Definitions \ref{DCtf} and \ref{Dtdtf}, given as
\begin{align*}
&[\mathcal{D}_{\mathcal{C},f} (t_f: u_1, \ldots, u_M)]_{i, j} = 
\begin{cases}
\max (2t_f+1-d(c_{u_i}, c_{u_j}), 0), & \text{if } f(u_i) \neq f(u_j), \\
0, & \text{otherwise},
\end{cases} \\[1em]
&[\mathcal{D}_{f} (t_d, t_f: u_1, \ldots, u_M)]_{i, j} = 
\begin{cases}
\max (2t_d+1-d(u_i, u_j), 0), & \text{if } u_i \ne u_j, f(u_i) = f(u_j), \\
\max (2t_f+1-d(u_i, u_j), 0), & \text{if } f(u_i) \neq f(u_j), \\
0, & \text{otherwise}.
\end{cases}
\end{align*}

In the following theorem, we make a connection between these two matrices, and using that, we provide an upper bound for $r_f(k,t_d,t_f)$.

\begin{theorem}\label{thm:Dcode}
   Let $\mathcal{C}$ be an $[n,k, 2t_d+1]$ error-correcting code, and let $c_u$ denote the codeword that corresponds to the message vector $u\in \mathbb{F}_q^k$. For any function $f:\mathbb{F}_q^k \rightarrow \mathrm{Im}(f)$,
    $$ N(\mathcal{D}_{f} (t_d, t_f: u_1, \ldots, u_{q^k})) \le N(\mathcal{D}_{\mathcal{C},f} (t_f: u_1, \ldots, u_{q^k})) +n-k.$$
\end{theorem}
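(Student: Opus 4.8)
The plan is to turn an optimal code realizing $N(\mathcal{D}_{\mathcal{C},f}(t_f:u_1,\dots,u_{q^k}))$ into a $\mathcal{D}_f(t_d,t_f:u_1,\dots,u_{q^k})$-code that is longer by exactly $n-k$; exhibiting one such code is, by the definition of $N(\cdot)$, enough to give the claimed upper bound on $N(\mathcal{D}_f(t_d,t_f:\dots))$. Take $G=[I_k\mid A]$ to be a systematic generator matrix for $\mathcal C$ (as in the construction of Section~\ref{subsec:Method}), write $c_u=(u,a_u)$ with $a_u=uA\in\mathbb{F}_q^{n-k}$, and record the identity
\[
d(c_{u_i},c_{u_j})=d(u_i,u_j)+d(a_{u_i},a_{u_j}),
\]
which is the only structural fact I will use.

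Now set $r'=N(\mathcal{D}_{\mathcal{C},f}(t_f:u_1,\dots,u_{q^k}))$ and fix a $\mathcal{D}_{\mathcal{C},f}(t_f:\dots)$-code $\{p_1,\dots,p_{q^k}\}\subseteq\mathbb{F}_q^{r'}$, so that $d(p_i,p_j)\ge\max\!\big(2t_f+1-d(c_{u_i},c_{u_j}),\,0\big)$ for all $i,j$. Put $q_i=(a_{u_i},p_i)\in\mathbb{F}_q^{(n-k)+r'}$, so $d(q_i,q_j)=d(a_{u_i},a_{u_j})+d(p_i,p_j)$. I claim $\{q_1,\dots,q_{q^k}\}$ is a $\mathcal{D}_f(t_d,t_f:u_1,\dots,u_{q^k})$-code. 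Fix $i\neq j$, hence $u_i\neq u_j$ and $c_{u_i}\neq c_{u_j}$. If $f(u_i)=f(u_j)$, then $d(c_{u_i},c_{u_j})\ge 2t_d+1$ since $\mathcal C$ has minimum distance $\ge 2t_d+1$, so $d(a_{u_i},a_{u_j})\ge 2t_d+1-d(u_i,u_j)$ and therefore $d(q_i,q_j)\ge d(a_{u_i},a_{u_j})\ge\max\!\big(2t_d+1-d(u_i,u_j),\,0\big)$, which is the required entry of $\mathcal{D}_f$. If instead $f(u_i)\neq f(u_j)$, then combining $d(p_i,p_j)\ge 2t_f+1-d(c_{u_i},c_{u_j})$ with the identity gives
\[
d(q_i,q_j)\ \ge\ \big(d(c_{u_i},c_{u_j})-d(u_i,u_j)\big)+\big(2t_f+1-d(c_{u_i},c_{u_j})\big)\ =\ 2t_f+1-d(u_i,u_j),
\]
and since $d(q_i,q_j)\ge 0$ always, $d(q_i,q_j)\ge\max\!\big(2t_f+1-d(u_i,u_j),\,0\big)$, again the required entry (which, as $t_f\ge t_d$, also dominates the data constraint). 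Hence $N(\mathcal{D}_f(t_d,t_f:\dots))\le (n-k)+r'$, as claimed.

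I do not expect a serious obstacle; the only delicate point is the case $f(u_i)=f(u_j)$, where $\mathcal{D}_{\mathcal{C},f}$ carries no information (its entry is $0$) and the separation on the redundancy coordinates must instead be extracted from the minimum distance $\ge 2t_d+1$ of $\mathcal C$ — this is exactly where that hypothesis enters. As a cross-check, the same inequality also falls out of Theorem~\ref{thm:r=ND}: the two-step construction of Section~\ref{subsec:Method}, applied with an optimal inner code $\mathfrak C_f'$ of redundancy $N(\mathcal{D}_{\mathcal{C},f}(t_f:\dots))$, produces an $(f,t_d,t_f)$-FCC of redundancy $n-k+N(\mathcal{D}_{\mathcal{C},f}(t_f:\dots))$, while Theorem~\ref{thm:r=ND} equates $r_f(k,t_d,t_f)$ with $N(\mathcal{D}_f(t_d,t_f:\dots))$.
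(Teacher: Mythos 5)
Your proposal is correct and follows essentially the same route as the paper: take a systematic form $c_u=(u,a_u)$, append the parity part $a_u$ of the $[n,k,\ge 2t_d+1]$ code to an optimal $\mathcal{D}_{\mathcal{C},f}(t_f:\cdot)$-code, and verify the two cases using $d(c_{u_i},c_{u_j})=d(u_i,u_j)+d(a_{u_i},a_{u_j})$. Your handling of the case $f(u_i)=f(u_j)$ (simply lower-bounding $d(q_i,q_j)$ by $d(a_{u_i},a_{u_j})$ rather than asserting anything about $d(p_i,p_j)$) is in fact slightly cleaner than the paper's wording, but the argument is the same.
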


\begin{proof}
    Without loss of generality, consider a systematic form of $\mathcal{C}$, i.e., $c_u=(u,w_u)$ for all $u\in \mathbb{F}_q^k$. For ease of notation, we use 
$\mathcal{D}_1=\mathcal{D}_{f} (t_d, t_f: u_1, \ldots, u_{q^k})$ and $\mathcal{D}_2=\mathcal{D}_{\mathcal{C},f} (t_f: u_1, \ldots, u_{q^k})$. Let $\mathcal{P}_2=\{p_u \mid u \in\mathbb{F}_q^k\}$ be an ordered $\mathcal{D}_2$-code with length $N(\mathcal{D}_2)$. Now, we define a code as 
$$\mathcal{P}_1=\{ \overline{p}_u=(w_u,p_u) \mid u \in\mathbb{F}_q^k\},$$
and show that this is a $\mathcal{D}_1$-code, for which we need to show that $d(\overline{p}_u, \overline{p}_v) \ge [\mathcal{D}_1]_{u,v}$, where $u,v \in \mathbb{F}_q^k$ and the rows and columns of matrix $\mathcal{D}_1$ are indexed by the vectors in $\mathbb{F}_q^k$. For any $u,v \in \mathbb{F}_q^k$,  
\begin{equation}\label{eq:Dcode}
    d(\overline{p}_u, \overline{p}_v) = d(w_u, w_v)+d(p_u,p_v).
\end{equation}
If $u \ne v$ and $f(u) = f(v)$, then $d(p_u,p_v)\geq 0$, and from \eqref{eq:Dcode}, we get 
\begin{align*}
    d(\overline{p}_u, \overline{p}_v) &\geq d(w_u,w_v)=d(c_u,c_v)-d(u,v) \\ 
    &\ge 2t_d+1-d(u,v)=[\mathcal{D}_1]_{u,v}.
\end{align*}
If $f(u) \ne f(v)$, then by the definition of $\mathcal{D}_2$-code and \eqref{eq:Dcode}, 
\begin{align*}
    d(\overline{p}_u, \overline{p}_v) &\ge d(w_u,w_v) + 2t_f+1-d(c_u, c_v)\\ 
    &=2t_f+1-(d(c_u, c_v)-d(w_u,w_v)) \\
    &= 2t_f+1-d(u,v)=[\mathcal{D}_1]_{u,v}.
\end{align*}
Hence, $\mathcal{P}_1$ is a $\mathcal{D}_1$-code with length $n-k+N(\mathcal{D}_2)$, and we have $N(\mathcal{D}_1) \le n-k+N(\mathcal{D}_2)$
\end{proof}

The following corollary directly holds by Theorem \ref{thm:r=ND} and \ref{thm:Dcode}, and gives an upper bound on the optimal redundancy of an $(f, t_d, t_f)$-FCC.

\begin{corollary}\label{col:r<ND}
       For any function $f:\mathbb{F}_q^k \rightarrow \mathrm{Im}(f)$,
    $$r_f(k,t_d,t_f) \le N(\mathcal{D}_{\mathcal{C},f} (t_f: u_1, \ldots, u_{q^k})) +n-k,$$
where $\mathcal{C}$ is an $[n,k,2t_d+1]$ error-correcting code.
\end{corollary}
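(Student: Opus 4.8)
The plan is to chain together the two results that immediately precede this corollary. By Theorem~\ref{thm:r=ND} the optimal redundancy is \emph{exactly} $r_f(k,t_d,t_f) = N(\mathcal{D}_{f}(t_d,t_f: u_1,\ldots,u_{q^k}))$, so it suffices to bound this quantity from above. Applying Theorem~\ref{thm:Dcode} with the chosen $[n,k,\ge 2t_d+1]$ code $\mathcal{C}$ gives
$$N(\mathcal{D}_{f}(t_d,t_f: u_1,\ldots,u_{q^k})) \le N(\mathcal{D}_{\mathcal{C},f}(t_f: u_1,\ldots,u_{q^k})) + n - k,$$
and substituting the equality from Theorem~\ref{thm:r=ND} into the left-hand side yields the claimed bound. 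So the corollary is a one-line composition of the two theorems.

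If a self-contained argument is wanted instead, I would reproduce the two-step construction of Section~\ref{subsec:Method} explicitly: realize $\mathcal{C}$ in systematic form $c_u = (u, w_u)$, take a $\mathcal{D}_{\mathcal{C},f}(t_f:\cdot)$-code $\mathcal{P} = \{p_u \mid u \in \mathbb{F}_q^k\}$ of length $N(\mathcal{D}_{\mathcal{C},f}(t_f:\cdot))$, and define the encoding $\mathfrak{C}_f(u) = (u, w_u, p_u)$, which has redundancy $n-k + N(\mathcal{D}_{\mathcal{C},f}(t_f:\cdot))$. For $u \ne v$ with $f(u) = f(v)$, the data requirement holds because $d(\mathfrak{C}_f(u),\mathfrak{C}_f(v)) \ge d(c_u,c_v) \ge 2t_d+1$. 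For $f(u)\ne f(v)$, one checks that the $w$-part contributes $d(w_u,w_v) = d(c_u,c_v) - d(u,v)$ and the $p$-part contributes at least $2t_f+1 - d(c_u,c_v)$, so the total distance is at least $2t_f+1 - d(u,v)$ as required; this is precisely the computation in the proof of Theorem~\ref{thm:Dcode}.

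Since both ingredients are already established, there is no genuine obstacle here. The only point to watch is that the ordering of the vectors $u_1,\ldots,u_{q^k}$ is taken to be the same in $\mathcal{D}_{f}(t_d,t_f:\cdot)$ and in $\mathcal{D}_{\mathcal{C},f}(t_f:\cdot)$; this is harmless because $N(\cdot)$ is invariant under simultaneous row and column permutations of its argument, as already noted after Definition~\ref{Dtdtf}.
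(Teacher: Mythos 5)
Your proposal is correct and matches the paper exactly: the paper states that this corollary ``directly holds by Theorem~\ref{thm:r=ND} and \ref{thm:Dcode},'' which is precisely your one-line composition of the equality $r_f(k,t_d,t_f) = N(\mathcal{D}_{f}(t_d,t_f: u_1,\ldots,u_{q^k}))$ with the bound $N(\mathcal{D}_{f}(t_d,t_f:\cdot)) \le N(\mathcal{D}_{\mathcal{C},f}(t_f:\cdot)) + n-k$. Your additional self-contained sketch and the remark about the ordering of the $u_i$ are consistent with the paper's proof of Theorem~\ref{thm:Dcode} and add nothing that conflicts with it.
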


Since the matrix $\mathcal{D}_{\mathcal{C},f}(t_f\!:\! u_1, \ldots, u_{q^k})$ has a similar form as the matrix $\mathcal{D}_{f}(t_f, u_1, \ldots, u_{q^k})$ in \cite{LBWY2023}, all the constructions given for FCC in the literature can be used to find $N(\mathcal{D}_{\mathcal{C},f}(t_f: u_1, \ldots, u_{q^k})).$

From \cite[Appendix A]{LBWY2023}, there exists a binary code of length $n$ and minimum distance $2t+1$ and redundancy at most $$r\le \frac{t\log k+t}{(1-\frac{t}{k}\log e)}.$$ Therefore, we have the following corollary.

\begin{corollary}\label{col2:r<ND}
       For any function $f:\mathbb{F}_2^k \rightarrow \mathrm{Im}(f)$, we have 
    $$r_f(k,t_d,t_f) \le N(\mathcal{D}_{\mathcal{C},f} (t_f: u_1, \ldots, u_{q^k})) +\frac{t_d\log k+t_d}{(1-\frac{t_d}{k}\log e)}.$$
where $\mathcal{C}$ is an $[n,k, 2t_d+1]$ binary error-correcting code, and $n\le k+t_d\lceil \log n \rceil $.
\end{corollary}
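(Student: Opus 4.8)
The plan is to derive this as an immediate specialization of Corollary~\ref{col:r<ND}, the only extra ingredient being an explicit bound on the redundancy of the error-correcting code used in Step~1 of the construction. So this will be a short argument; the substantive work has already been done upstream.

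First I would recall the statement of Corollary~\ref{col:r<ND}: for \emph{any} $[n,k,\ge 2t_d+1]$ binary error-correcting code $\mathcal{C}$,
$$r_f(k,t_d,t_f) \le N(\mathcal{D}_{\mathcal{C},f}(t_f: u_1,\ldots,u_{q^k})) + (n-k).$$
Thus the task reduces to choosing $\mathcal{C}$ so that its redundancy $n-k$ is as small as we can guarantee while keeping $d(\mathcal{C}) \ge 2t_d+1$, and then substituting. Note the term $N(\mathcal{D}_{\mathcal{C},f}(t_f:\cdot))$ is simply carried along; its precise value depends on $\mathcal{C}$, but the corollary holds for whatever $\mathcal{C}$ we pick.

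Second, I would invoke the greedy (Gilbert--Varshamov-type) construction recalled just before the corollary, taken from \cite[Appendix~A]{LBWY2023} with parameter $t = t_d$: it yields a binary code of length $n$ and minimum distance $2t_d+1$ whose redundancy satisfies $n-k \le \frac{t_d\log k + t_d}{1 - \frac{t_d}{k}\log e}$, together with the length bound $n \le k + t_d\lceil \log n\rceil$. Plugging this particular $\mathcal{C}$ into the displayed inequality above and replacing $n-k$ by its stated upper bound gives exactly
$$r_f(k,t_d,t_f) \le N(\mathcal{D}_{\mathcal{C},f}(t_f: u_1,\ldots,u_{q^k})) + \frac{t_d\log k + t_d}{1 - \frac{t_d}{k}\log e},$$
which is the claim.

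There is no real obstacle here: the genuine content lies in Theorem~\ref{thm:Dcode} and Corollary~\ref{col:r<ND}, which let the redundancy of the data-protecting code be split off additively, and in the cited appendix, which certifies that a code with the claimed redundancy and distance exists simultaneously. The one point worth stating explicitly is the regime of validity — one needs $t_d < k/\log e$ so that the denominator $1 - \frac{t_d}{k}\log e$ is positive and the bound is non-vacuous; this is implicit, since otherwise the quoted construction does not apply. I would therefore keep the write-up to these two lines of substitution and a remark on the parameter range.
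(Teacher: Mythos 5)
Your proposal is correct and follows exactly the paper's route: the paper obtains this corollary by substituting the binary code from \cite[Appendix~A]{LBWY2023} (with $t=t_d$, redundancy at most $\frac{t_d\log k+t_d}{1-\frac{t_d}{k}\log e}$) into Corollary~\ref{col:r<ND}. Your added remark on the parameter regime $t_d < k/\log e$ is a sensible clarification but not part of the paper's argument.
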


\begin{theorem}
Let $\mathcal{C}$ be an $[n,k,2t_d+1]$ code. Then 
$$N(\mathcal{D}_{\mathcal{C},f}(t_f: u_1, u_2,\ldots, u_M)) \le N(M, 2(t_f-t_d)).$$
\end{theorem}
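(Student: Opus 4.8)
The plan is to show that, thanks to the minimum distance of $\mathcal{C}$, every entry of the coded distance requirement matrix $\mathcal{D}_{\mathcal{C},f}(t_f : u_1,\ldots,u_M)$ is capped by the single number $2(t_f-t_d)$; once this is in hand, the bound follows immediately from the definition of $N(M,D)$.

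First I would observe that the encoding $u \mapsto c_u = uG$ is injective, since a generator matrix of an $[n,k]$ code has rank $k$; hence $u_i \ne u_j$ implies $c_{u_i} \ne c_{u_j}$, and because $\mathcal{C}$ has minimum distance at least $2t_d+1$ this yields $d(c_{u_i},c_{u_j}) \ge 2t_d+1$ for all $i \ne j$. Plugging this into the definition of the CDRM, I get that for $i \ne j$ with $f(u_i) \ne f(u_j)$,
$$[\mathcal{D}_{\mathcal{C},f}(t_f : u_1,\ldots,u_M)]_{i,j} = \max\bigl(2t_f+1 - d(c_{u_i},c_{u_j}),\, 0\bigr) \le \max\bigl(2t_f+1-(2t_d+1),\,0\bigr) = 2(t_f-t_d),$$
using the standing hypothesis $t_d \le t_f$; all remaining entries (those with $f(u_i)=f(u_j)$ and the diagonal) are $0$. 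So the CDRM is entrywise dominated by the constant matrix with value $2(t_f-t_d)$ off the diagonal.

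Next I would take an optimal code $\mathcal{Q}=\{q_1,\ldots,q_M\}$ realizing $N(M,2(t_f-t_d))$, i.e.\ a code of that length with $d(q_i,q_j)\ge 2(t_f-t_d)$ for all $i\ne j$ (such a code exists with finite length for any alphabet of size at least $2$). Then $d(q_i,q_j) \ge 2(t_f-t_d) \ge [\mathcal{D}_{\mathcal{C},f}(t_f : u_1,\ldots,u_M)]_{i,j}$ for every pair $(i,j)$, so $\mathcal{Q}$ is a $\mathcal{D}_{\mathcal{C},f}(t_f : u_1,\ldots,u_M)$-code in the given ordering, and therefore $N(\mathcal{D}_{\mathcal{C},f}(t_f : u_1,\ldots,u_M)) \le N(M,2(t_f-t_d))$.

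I do not expect a genuine obstacle here: the argument is just the minimum-distance bound for $\mathcal{C}$ followed by a direct comparison of a uniform-distance code against the dominated CDRM. The only points needing a line of care are the injectivity of the encoding $u\mapsto uG$ and the degenerate case $t_d=t_f$, where the claimed inequality reads $N(\mathcal{D}_{\mathcal{C},f}) \le N(M,0)=0$ — which is consistent, since then $2t_d+1 > 2t_f$ forces the entire CDRM to be the zero matrix. It is also worth remarking that this estimate is informative precisely when $t_f-t_d$ is small, which is the regime in which only a short outer code $\mathcal{Q}$ needs to be appended on top of $\mathcal{C}$.
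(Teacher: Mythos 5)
Your proposal is correct and follows essentially the same route as the paper: bound every entry of $\mathcal{D}_{\mathcal{C},f}(t_f: u_1,\ldots,u_M)$ by $2(t_f-t_d)$ using $d(c_{u_i},c_{u_j})\ge 2t_d+1$, then compare with a code of uniform minimum distance $2(t_f-t_d)$ to conclude $N(\mathcal{D}_{\mathcal{C},f}) \le N(M,2(t_f-t_d))$. Your extra remarks on injectivity of $u\mapsto uG$ and the degenerate case $t_d=t_f$ are fine but not needed beyond what the paper states.
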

\begin{proof}
    Since $d(c_{u_i}, c_{u_j}) \ge 2t_d+1$ for all $u_i, u_j \in \mathbb{F}_q^k$, we have 
    \begin{align*}
       [\mathcal{D}_{\mathcal{C},f}(t_f: u_1, u_2,\ldots, u_M)]_{i,j} &=  \max (2t_f+1-d(c_{u_i}, c_{u_j}), 0) \le 2(t_f-t_d). 
    \end{align*}
    Therefore, $N(\mathcal{D}_{\mathcal{C},f}(t_f: u_1, u_2,\ldots, u_M)) \le N(M, 2(t_f-t_d)).$
\end{proof}

Using Lemma \ref{lem:4from1} and the above theorem, we get the following corollary.

\begin{corollary}
    Let $\mathcal{C}$ be an $[n,k, 2t_d+1]$ code. Then for $t_f-t_d\geq 5$ and $M\leq 4(t_f-t_d)^2$, we have 
\begin{align*}
    N(\mathcal{D}_{\mathcal{C},f}(t_f: u_1,\ldots, u_M)) &\le N(M, 2(t_f-t_d)) \leq \frac{4(t_f-t_d)-2}{1-2\sqrt{\ln{(2(t_f-t_d))}/2(t_f-t_d)}}.
\end{align*}
Furthermore, for $M=4$, we have
$N(4, 2(t_f-t_d)) = 3(t_f-t_d).$
\end{corollary}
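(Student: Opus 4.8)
The plan is to build this corollary by chaining the two preceding results and then resolving the $M=4$ case by a direct Plotkin-type estimate together with an explicit construction. The inequality $N(\mathcal{D}_{\mathcal{C},f}(t_f: u_1,\ldots,u_M)) \le N(M,2(t_f-t_d))$ is exactly the theorem stated immediately before this corollary, so nothing new is required there.

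For the middle inequality I would set $D := 2(t_f-t_d)$ and apply Lemma \ref{lem:4from1}. Its hypotheses translate directly: $D \ge 10$ is the assumption $t_f-t_d \ge 5$, and $M \le D^2$ is the assumption $M \le 4(t_f-t_d)^2$. The conclusion $N(M,D) \le \frac{2D}{1-2\sqrt{\ln D/D}}$ of the lemma, rewritten with $D = 2(t_f-t_d)$, gives the stated upper bound.

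For the final identity $N(4,2(t_f-t_d)) = 3(t_f-t_d)$, write $D = 2(t_f-t_d) = 2m$ with $m = t_f-t_d$; the fact that $D$ is even is what makes $3D/2$ an integer and the bound attainable. For the lower bound, take any $4$-codeword binary code $\{p_1,p_2,p_3,p_4\}$ of length $n$ with all six pairwise distances at least $D$, so that $\sum_{1 \le i<j\le 4} d(p_i,p_j) \ge 6D$, and count this sum one coordinate at a time: in each coordinate the number of unordered pairs among the four entries that disagree is at most $4$, the maximum, attained when the four binary symbols split as two and two. Hence $6D \le 4n$, i.e.\ $n \ge 3D/2 = 3m$, so $N(4,D) \ge 3m$. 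For the matching upper bound I would exhibit a length-$3m$ code: partition the coordinates into three blocks of length $m$ and take $p_1 = (0^m,0^m,0^m)$, $p_2 = (1^m,1^m,0^m)$, $p_3 = (1^m,0^m,1^m)$, $p_4 = (0^m,1^m,1^m)$; any two of these agree on exactly one block and disagree on the other two, so every pairwise distance equals $2m = D$, giving $N(4,D) \le 3m$. Combining, $N(4,D) = 3m = 3(t_f-t_d)$. I do not anticipate a genuine obstacle: the first two inequalities are immediate citations, and the $M=4$ identity is a standard Plotkin double-count plus a three-block construction. The only point that needs attention is the alphabet—the per-coordinate bound of $4$, and hence the factor $3/2$, is specific to the binary case, so I would phrase the $M=4$ conclusion accordingly, consistently with the corollaries stated just above.
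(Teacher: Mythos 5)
Your proposal is correct and follows essentially the same route as the paper: the first inequality is the theorem immediately preceding the corollary, and the middle inequality is Lemma~\ref{lem:4from1} applied with $D=2(t_f-t_d)$, exactly as the paper intends. The only difference is in the final identity: the paper simply invokes the known fact $N(4,2t)=3t$ (stated later as Lemma~\ref{N_4}, quoted from prior work), whereas you prove it from scratch via the Plotkin-style double count ($6D\le 4n$ per-coordinate, valid in the binary case) together with the three-block construction attaining all pairwise distances $2(t_f-t_d)$; both your counting argument and your construction are correct, and your remark that the factor $3/2$ is specific to the binary alphabet is consistent with the paper's (implicitly binary) use of this fact.
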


The last conclusion for $M=4$, namely $N(4,2(t_f-t_d))=3(t_f-t_d),$
is an immediate consequence of \cite[Lemma 3]{RRHH2025} which proves $N(4,2t)=3t$.

Using the lower and upper bounds on the redundancy of an $(f, t_f)$-FCC, provided in Corollary \ref{thm1} and Theorem \ref{thm2}, respectively, the following theorem establishes bounds on the redundancy for our method of constructing codes described in Subsection~\ref{subsec:Method}.

\begin{theorem}
Let $\mathcal{C}$ be an $[n, k, 2t_d + 1]$ linear code, where $n$ denotes the minimum possible length of a linear code with dimension $k$ and minimum distance at least $2t_d + 1$. Then, the redundancy $r_s$ of the proposed scheme is bounded as:
$$
N(\mathcal{D}_{\mathcal{C},f}(t_f: u_1,\ldots, u_M)) +n-k \\
\le r_s \le N(\mathcal{D}_{\mathcal{C},f}(t_f: f_1, \ldots, f_E))+n-k.
$$
\end{theorem}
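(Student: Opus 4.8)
The plan is to mirror, in the "coded" setting, exactly the two arguments that yield Corollary~\ref{thm1} and Theorem~\ref{thm2} of \cite{LBWY2023}. Set $M=q^k$ and $E=|Im(f)|$, and (as in the Remark after the construction, and in the proof of Theorem~\ref{thm:Dcode}) take $\mathcal{C}$ in systematic form, so $c_u=(u,w_u)$ with $w_u\in\mathbb{F}_q^{n-k}$. Then any Step-2 encoding is of the form $\mathfrak{C}_f'(c_u)=(c_u,p_u)$ with $p_u\in\mathbb{F}_q^{r'}$, and the resulting $\mathfrak{C}_f(u)=\mathfrak{C}_f'(c_u)$ has total redundancy $r_s=(n-k)+r'$. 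Hence the theorem reduces to sandwiching the Step-2 redundancy as $N(\mathcal{D}_{\mathcal{C},f}(t_f: u_1,\ldots,u_M))\le r'\le N(\mathcal{D}_{\mathcal{C},f}(t_f: f_1,\ldots,f_E))$ and adding $n-k$ throughout.

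For the lower bound I would show that the parity part of \emph{any} admissible Step-2 encoding is already a $\mathcal{D}_{\mathcal{C},f}(t_f: u_1,\ldots,u_M)$-code. Using $d(\mathfrak{C}_f'(c_{u_i}),\mathfrak{C}_f'(c_{u_j}))=d(c_{u_i},c_{u_j})+d(p_{u_i},p_{u_j})$, the defining requirement $d(\mathfrak{C}_f'(c_{u_i}),\mathfrak{C}_f'(c_{u_j}))\ge 2t_f+1$ in the case $f(u_i)\ne f(u_j)$ rearranges to $d(p_{u_i},p_{u_j})\ge\max(2t_f+1-d(c_{u_i},c_{u_j}),0)$, which is exactly $[\mathcal{D}_{\mathcal{C},f}(t_f: u_1,\ldots,u_M)]_{i,j}$; for $i=j$ or $f(u_i)=f(u_j)$ that entry is $0$ and the inequality is vacuous. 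So $\{p_u\}$ is a $\mathcal{D}_{\mathcal{C},f}(t_f: u_1,\ldots,u_M)$-code of length $r'$, whence $r'\ge N(\mathcal{D}_{\mathcal{C},f}(t_f: u_1,\ldots,u_M))$.

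For the upper bound I would exhibit one concrete admissible encoding of the claimed redundancy, imitating the FDM construction of Theorem~\ref{thm2}. Fix a $\mathcal{D}_{\mathcal{C},f}(t_f: f_1,\ldots,f_E)$-code $\{\pi_1,\ldots,\pi_E\}$ of length $N(\mathcal{D}_{\mathcal{C},f}(t_f: f_1,\ldots,f_E))$ and define $\mathfrak{C}_f'(c_u)=(c_u,\pi_i)$ whenever $f(u)=f_i$. If $f(u)=f_i\ne f_j=f(v)$, then $d(c_u,c_v)\ge d_{\mathcal{C}}(f_i,f_j)$ by the definition of the coded distance, while $d(\pi_i,\pi_j)\ge\max(2t_f+1-d_{\mathcal{C}}(f_i,f_j),0)$ by the $\mathcal{D}$-code property; adding these gives $d(\mathfrak{C}_f'(c_u),\mathfrak{C}_f'(c_v))\ge 2t_f+1$, so the encoding is admissible and has $r'=N(\mathcal{D}_{\mathcal{C},f}(t_f: f_1,\ldots,f_E))$. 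Combining the two directions and adding $n-k$ yields the stated bounds on $r_s$.

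I do not expect a genuine obstacle: the argument is essentially bookkeeping parallel to the uncoded case. The points requiring care are (i) keeping the systematic decomposition $c_u=(u,w_u)$ straight so that the count $r_s=n-k+r'$ is correct; (ii) checking that the parity-by-function-value assignment in the upper bound is well defined and that the two distance inequalities combine in the right direction; and (iii) noting that the matrix entries vanish in the "same function value'' and diagonal cases so the lower-bound inequality is indeed vacuous there. The hypothesis that $n$ is the smallest length of an $[n,k,\ge 2t_d+1]$ linear code is not needed for either inequality; it only makes the bound as sharp as the method permits and links $n-k$ to the generic lower bound of Theorem~\ref{thm1.1}, so I would mention it without dwelling on it.
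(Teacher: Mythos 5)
Your proposal is correct and follows essentially the same route the paper intends: the paper states this theorem without a separate proof, appealing to the coded analogues of Corollary~\ref{thm1} and Theorem~\ref{thm2} of \cite{LBWY2023}, which is exactly what you spell out (any systematic Step-2 parity assignment is a $\mathcal{D}_{\mathcal{C},f}(t_f\!: u_1,\ldots,u_M)$-code, and a CFDM-code assigned by function value is admissible), together with the bookkeeping $r_s=(n-k)+r'$. Your side remarks — the vacuity of the zero entries and the fact that minimality of $n$ is not needed for either inequality — are also accurate.
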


\begin{remark}
The reduction of the redundancy problem to $N(D)$ is intended as a design framework rather than only as an existence statement. By Theorem~\ref{thm:r=ND}, the optimal redundancy is exactly the minimum length of a $D$-code for the associated DRM. In practice, one first forms the relevant DRM/FDM (or, in the two-step construction, the CDRM/CFDM), and then studies the corresponding $D$-code problem. When the function has additional structure, this matrix is often much smaller or possesses useful symmetry, which makes the computation or estimation of $N(D)$ more tractable. The systematic structure of the DRM was used in \cite{LBWY2023} to construct FCCs for the Hamming weight function.

For small instances, $N(D)$ can be obtained directly by search or explicit construction, as illustrated in Example~\ref{ex5}. For larger instances, the DRM corresponding to only a suitably chosen subset of vectors of $\mathbb{F}_q^k$, rather than all $q^k$ vectors, can be used to obtain lower bounds on the optimal redundancy, as described in Theorem~\ref{thm1.1}.
\end{remark}

\section{Non-existence of {\it strict  $(f\!:\!d_d,d_f)$-FCC}}\label{sec5}

We call an  $(f\!:\!d_d,d_f)$-FCC a {\it strict  $(f\!:\!d_d,d_f)$-FCC} if $d_f > d_d$ since the case $d_f=d_d$ reduces to ECC. In this section, we show that certain codes cannot be used as strict $(f\!:\!d_d,d_f)$-FCCs by analyzing their minimum-distance graph, defined as follows.

\subsection{Minimum-Distance Graphs}
\begin{definition}[Minimum-distance graph]
Let $\mathcal{C}$ be a code. The \emph{minimum-distance graph} of $\mathcal{C}$, denoted by $G(\mathcal{C})$, is the graph whose vertex set is $\mathcal{C}$ and two distinct vertices $c_1, c_2 \in \mathcal{C}$ are adjacent if and only if 
$d(c_1, c_2) = d_{\min}(\mathcal{C}),
$
where $d_{\min}(\mathcal{C})$ denotes the minimum distance of the code $\mathcal{C}$.
\end{definition}

\begin{example}
    Let
$\mathcal C=\{0000,\,0011,\,1100,\,1111\}\subset \mathbb{F}_2^4.
$
Then $d_{\min}(\mathcal C)=2$. The minimum-distance graph $G(\mathcal C)$ has edges only between
those pairs at distance $2$, forming a 4-cycle.

\begin{figure}[h]
\centering
\begin{tikzpicture}[scale=0.8,
  every node/.style={circle,draw,inner sep=2pt,minimum size=18pt}]
  \node (a) at (0,0)   {\small 0000};
  \node (b) at (4,0)   {\small 0011};
  \node (c) at (4,4)   {\small 1111};
  \node (d) at (0,4)   {\small 1100};

  \draw (a)--(b);
  \draw (a)--(d);
  \draw (c)--(b);
  \draw (c)--(d);
\end{tikzpicture}
\end{figure}
\end{example}

The following theorem illustrates how knowledge of the minimum-distance graph can be used to determine whether a code can serve as a strict $(f\!:d_d,d_f)$-FCC.

\begin{theorem}\label{MDG1}
Let $\mathcal{C}$ be a $(n,q^k,d)$ code. If the minimum-distance graph $G(\mathcal{C})$ is a connected graph, then $\mathcal{C}$ cannot be an $(f\!:d,d_f)$-FCC for any $f:\mathbb{F}_q^k\to \mathrm{Im}(f)$ with $|\mathrm{Im}(f)| \geq 2$ and $d_f>d$, (equivalently, $\mathcal{C}$ cannot be a strict $(f:d,d_f)$-FCC). 
\end{theorem}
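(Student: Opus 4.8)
The plan is to argue by contradiction and to exploit the fact that once $d_d=d=d_{\min}(\mathcal{C})$, the data-protection requirement of an $(f\!:d,d_f)$-FCC is automatically met by \emph{any} labelling of the codewords by messages, so all the constraining power sits in the function-value requirement $d_f>d$, which acts precisely on the edges of $G(\mathcal{C})$. First I would suppose, for contradiction, that $\mathcal{C}$ is a strict $(f\!:d,d_f)$-FCC for some $f$ with $|\text{Im}(f)|\ge 2$ and $d_f>d$. Being an FCC means there is a systematic, hence injective, encoding $\mathfrak{C}_f:\mathbb{F}_q^k\to\mathbb{F}_q^{n}$ whose image is exactly $\mathcal{C}$; since $|\mathcal{C}|=q^k$, the map $\mathfrak{C}_f$ is a bijection onto $\mathcal{C}$, and I write $u_c:=\mathfrak{C}_f^{-1}(c)$ for the message associated with a codeword $c\in\mathcal{C}$.

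The key local observation I would establish next is that adjacent vertices of $G(\mathcal{C})$ carry equal function values. Indeed, if $\{c_1,c_2\}$ is an edge of $G(\mathcal{C})$ then $d(c_1,c_2)=d_{\min}(\mathcal{C})=d<d_f$; were $f(u_{c_1})\ne f(u_{c_2})$, the function-value condition of the FCC would force $d(c_1,c_2)\ge d_f$, a contradiction. Hence $f(u_{c_1})=f(u_{c_2})$ for every edge of $G(\mathcal{C})$.

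Finally I would propagate this along paths: since $G(\mathcal{C})$ is connected, any two codewords $c,c'\in\mathcal{C}$ are joined by a path, and applying the local observation edge by edge yields $f(u_c)=f(u_{c'})$. Because $\mathfrak{C}_f$ is onto $\mathcal{C}$, every message of $\mathbb{F}_q^k$ arises as some $u_c$, so $f$ is constant on $\mathbb{F}_q^k$, i.e. $|\text{Im}(f)|=1$, contradicting $|\text{Im}(f)|\ge2$. This contradiction shows that no such $\mathcal{C}$ can be a strict $(f\!:d,d_f)$-FCC.

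I do not anticipate a genuine technical obstacle; the argument is short. The one point requiring care is the identification between ``$\mathcal{C}$ is an $(f\!:d,d_f)$-FCC'' and ``there is a bijection between $\mathbb{F}_q^k$ and $\mathcal{C}$ meeting the two distance conditions,'' together with the remark that when $d_d$ equals the \emph{true} minimum distance of $\mathcal{C}$ the data inequality contributes nothing extra. The connectivity hypothesis is exactly what upgrades the edge-local constancy of $f$ into global constancy, and this is the heart of the proof.
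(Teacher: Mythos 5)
Your proposal is correct and follows essentially the same route as the paper: both rest on the observation that any edge of $G(\mathcal{C})$ joins codewords at distance $d<d_f$, which the function-value condition forces to share the same function value. Whether one then phrases the contradiction as ``constancy of $f$ propagates along paths, so $|\mathrm{Im}(f)|=1$'' (your version) or as ``the partition of $\mathcal{C}$ by function value has no crossing edges, so it disconnects $G(\mathcal{C})$'' (the paper's version) is only a cosmetic difference.
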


\begin{proof}
Assume $G(\mathcal C)$ is connected and let $\mathcal C$ serves as an $(f\!:d,d_f)$ with $d_f>d$ for a function $f:\mathbb F_q^k\to \mathrm{Im}(f)$. For each $a\in\mathrm{Im}(f)$, define the following set
$$
\mathcal C_a=\{c_u\in\mathcal C :\ f(u)=a\},
$$
where $c_u$ denotes the codeword corresponding to message vector $u\in \mathbb{F}_q^k$. If $|\mathrm{Im}(f)|\ge2$ and $d_f>d$, then by the $(f\!:d,d_f)$-FCC requirement, every $x\in\mathcal C_a$ and $y\in\mathcal C_b$ with $a\ne b$ must satisfy $d(x,y)\ge d_f>d$. Hence, no edge of $G(\mathcal C)$ exists between distinct $\mathcal C_a$ and $\mathcal C_b$, so the partition $\{\mathcal C_a, \mathcal C_b\}$ separates the graph, contradicting the connectedness of $G(\mathcal C)$. Therefore, $\mathcal{C}$ cannot be an $(f\!:d,d_f)$-FCC.
\end{proof}

Theorem \ref{MDG1} can easily be generalized for codes whose minimum-distance graph has more than one connected component. 

\begin{theorem}\label{MDG2}
Let $\mathcal{C}$ be a $(n,q^k,d)$ code. If the minimum-distance graph $G(\mathcal{C})$ has $Q$ number of connected components, then $\mathcal{C}$ cannot be a $(f\!:d,d_f)$-FCC for any $f:\mathbb{F}_q^k\to \mathrm{Im}(f)$ with $|\mathrm{Im}(f)| \geq Q+1$ and $d_f>d$. 
\end{theorem}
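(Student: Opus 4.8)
The plan is to generalize the argument in the proof of Theorem~\ref{MDG1} by a counting (pigeonhole) argument on the connected components of $G(\mathcal{C})$. Suppose, for contradiction, that $\mathcal{C}$ is a strict $(f\!:\!d,d_f)$-FCC for some $f:\mathbb{F}_q^k\to\text{Im}(f)$ with $|\text{Im}(f)|\ge Q+1$ and $d_f>d$. As before, for each $a\in\text{Im}(f)$ define $\mathcal{C}_a=\{c_u\in\mathcal{C}:f(u)=a\}$; these sets partition $\mathcal{C}$ into $|\text{Im}(f)|\ge Q+1$ nonempty classes. The FCC requirement forces $d(x,y)\ge d_f>d$ whenever $x\in\mathcal{C}_a$, $y\in\mathcal{C}_b$, $a\ne b$, so no edge of $G(\mathcal{C})$ joins two distinct classes; equivalently, each connected component of $G(\mathcal{C})$ is entirely contained in a single class $\mathcal{C}_a$.

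The key step is then the pigeonhole observation: the map sending each connected component to the unique class $\mathcal{C}_a$ containing it is a well-defined function from the set of $Q$ components to the set of $\ge Q+1$ nonempty classes. Since every class $\mathcal{C}_a$ is nonempty and is a union of components, this map is \emph{surjective} onto the (at least $Q+1$) classes. A surjection from a set of size $Q$ onto a set of size $\ge Q+1$ is impossible, giving the desired contradiction. Hence $\mathcal{C}$ cannot be a strict $(f\!:\!d,d_f)$-FCC under the stated hypotheses.

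I would write this out carefully: first note each $\mathcal{C}_a$ is nonempty (since $a\in\text{Im}(f)$ means some $u$ has $f(u)=a$, and systematic encoding gives a corresponding codeword), then establish that no component straddles two classes (exactly the edge argument from Theorem~\ref{MDG1}), then conclude that each of the $|\text{Im}(f)|$ classes is a nonempty union of distinct components, so $Q\ge|\text{Im}(f)|\ge Q+1$, a contradiction.

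I do not anticipate a genuine obstacle here — the statement is a clean strengthening of Theorem~\ref{MDG1} and the proof is essentially the same contradiction argument with one extra pigeonhole line. The only mild subtlety worth stating explicitly is that the classes $\mathcal{C}_a$ are guaranteed nonempty precisely because $a$ ranges over $\text{Im}(f)$ (not over the full codomain), which is what makes the count $|\text{Im}(f)|\ge Q+1$ collide with the $Q$ available components.
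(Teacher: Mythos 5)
Your proof is correct and is exactly the argument the paper has in mind: the paper gives no separate proof of Theorem~\ref{MDG2}, stating only that the partition-and-edge argument of Theorem~\ref{MDG1} ``can easily be generalized,'' and your pigeonhole step (each nonempty class $\mathcal{C}_a$ is a union of components, so $Q \ge |\mathrm{Im}(f)| \ge Q+1$, a contradiction) is precisely that generalization, including the correct observation that the classes are nonempty because $a$ ranges over $\mathrm{Im}(f)$.
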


\subsection{Minimum-Distance Graphs of Perfect Codes and MDS Codes}

In this subsection we show that $G(\mathcal{C})$ is connected for the two optimal families of ECCs: 1) perfect codes and 2) Maximum distance separable (MDS) codes.

\subsubsection{Perfect Codes}
The Hamming bound on the size of an error-correcting codes $\mathcal{C}$ with length $n$ and minimum distance $d$  is given in \cite{H1986} as follows.
$$M \leq \frac{q^n}{\sum_{i=0}^{\lfloor \frac{d-1}{2} \rfloor} {n \choose i}(q-1)^i}.$$
A code that achieves the Hamming bound is called a perfect code. If $d=2t+1$ then it is called perfect $t$-error correcting code.

\begin{theorem} \label{MDG_perfect}
The minimum-distance graph of a perfect $t$-error correcting code is connected.
\end{theorem}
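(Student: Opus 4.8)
The plan is to show that any two codewords of a perfect $t$-error correcting code $\mathcal{C}$ can be joined by a path in $G(\mathcal{C})$, i.e. by a sequence of codewords in which consecutive ones are at Hamming distance exactly $d = 2t+1$. Since a perfect code has covering radius exactly $t$, every vector of $\mathbb{F}_q^n$ lies in a unique Hamming ball of radius $t$ around a codeword; this tight packing is the structural fact I would exploit. First I would reduce, without loss of generality, to showing there is a path from an arbitrary codeword $c$ to a chosen codeword, and — if $\mathcal{C}$ is linear, or more generally distance-invariant — to showing that the zero codeword is connected to every other codeword, by translating. For the general (not necessarily linear) case the same argument works verbatim around any fixed codeword, so linearity is a convenience, not a necessity.

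The key step is a "one-step-closer" lemma: given two distinct codewords $c_1 \ne c_2$ with $d(c_1,c_2) = m > 2t+1$, there exists a codeword $c_3$ with $d(c_1,c_3) = 2t+1$ and $d(c_3,c_2) = m - (2t+1) < m$. To produce $c_3$, I would start from $c_1$ and flip $t+1$ of the coordinates in which $c_1$ and $c_2$ differ so as to move toward $c_2$; call the resulting vector $x$, so $d(c_1,x) = t+1$ and $d(x,c_2) = m - (t+1)$. Since the covering radius is $t$, there is a (unique) codeword $c_3$ with $d(x,c_3) \le t$. Then $d(c_1,c_3) \ge d(c_1,x) - d(x,c_3) \ge (t+1) - t = 1$, so $c_3 \ne c_1$, hence $d(c_1,c_3) \ge 2t+1$; on the other hand $d(c_1,c_3) \le d(c_1,x) + d(x,c_3) \le (t+1) + t = 2t+1$. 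Therefore $d(c_1,c_3) = 2t+1$ exactly, so $c_1$ and $c_3$ are adjacent in $G(\mathcal{C})$. It remains to check $d(c_3,c_2) < m$: we have $d(c_3,c_2) \le d(c_3,x) + d(x,c_2) \le t + (m - t - 1) = m - 1 < m$, which is all we need (that it is at least $2t+1$ whenever $c_3 \ne c_2$ follows from the minimum distance, but we only need strict decrease of the distance to $c_2$).

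Iterating this lemma, starting from any pair $c_1, c_2$ of distinct codewords, produces a strictly decreasing sequence of distances to $c_2$, which must terminate; it can only terminate when the current codeword equals $c_2$ or is already adjacent to $c_2$ (distance exactly $2t+1$). This yields a path in $G(\mathcal{C})$ from $c_1$ to $c_2$, proving connectedness. The main obstacle is verifying the distance bounds in the one-step lemma cleanly — in particular arguing that $x$ lies strictly outside the radius-$t$ ball around $c_1$ (so that its nearest codeword $c_3$ is genuinely a new vertex) while still being close enough to $c_2$ that progress is made; both inequalities hinge on the exact covering radius $t$ of a perfect code, and one must be a little careful that $m > 2t+1$ guarantees $m - t - 1 \ge t+1 > 0$ so the intermediate vector $x$ is well defined and the recursion is non-vacuous. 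A minor point to handle at the start is the degenerate case $|\mathcal{C}| = 1$, where $G(\mathcal{C})$ is a single vertex and trivially connected.
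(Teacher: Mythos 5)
Your proposal is correct and takes essentially the same approach as the paper: move $t+1$ of the disagreeing coordinates of the current codeword toward the target, use perfectness to get the unique codeword within radius $t$ of this intermediate vector, show it is at distance exactly $2t+1$ from the current codeword (hence an edge of $G(\mathcal{C})$) and strictly closer to the target, and iterate until the target is reached. The only difference is that you get the strict decrease $d(c_3,c_2)\le m-1$ by a direct triangle inequality rather than the paper's coordinate count with $m_{\text{in}}$ and $m_{\text{out}}$, a minor simplification; note also that the equality $d(c_3,c_2)=m-(2t+1)$ claimed when you first state the lemma is not actually established, but, as you observe, only the strict decrease is needed.
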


\begin{proof}
Let $\mathcal{C}$ be a perfect $t$-error correcting code of length $n$, and $u,v \in \mathcal{C}$. We will construct a path  in $G(\mathcal{C})$ from vertex $u$ to $v$ by moving along edges while reducing the number of coordinates where we disagree with $v$.

Let $S=\{ i\in [n] : u_i \ne v_i \}$, and clearly $|S|=d(u,v)$. If $|S|=0$, then we are already at $v$. Otherwise, $|S|\geq2t+1$ as $u$ and $v$ both are codewords.

Now repeat the following steps while $|S| \geq 2t+1$.
\begin{enumerate}
\item Choose any subset $T \subseteq S$ such that $|T|=t+1$.
\item Form a vector $x\in \mathbb{F}_q^n$ such that
$$x_i=\begin{cases}
v_i & \text{if}\ i\in T \\
u_i & \text{if}\ i\notin T
\end{cases}, \ \text{for all } \ i \in [n], $$
and we have $d(x,u)=|T|=t+1$.

\item Since $\mathcal{C}$ is a perfect code with radius $t$, the Hamming balls of radius $t$ around each codeword cover the entire space $\mathbb{F}_q^n$ without overlap. Therefore, there exists a unique $u' \in \mathcal{C}$ with $d(x,u')\leq t$.

\item By the triangle inequality, we have
$d(u,u') \leq d(x,u)+d(x,u') \leq t+1+t=2t+1.$
Since $u$ and $u'$ both are different codewords of $\mathcal{C}$, we have $d(u,u')=2t+1$. This means that the vertices $u$ and $u'$ are adjacent in $G(\mathcal{C})$.

\item Now we show that $d(u', v) < d(u,v)$.
Let $$m_{\text{in}}=|\{i \in T : u_i'\not= x_i\}|=|\{i \in T : u_i'\not=v_i\}|,$$
$$m_{\text{out}}=|\{i \in [n]\backslash T : u_i'\not= x_i\}|=|\{i \in [n]\backslash T  : u_i'\not=u_i\}|.$$
Since $d(x,u')\leq t$, we have \begin{equation}\label{m_inout}m_{\text{in}}+m_{\text{out}}\leq t.
\end{equation}
While moving from $u$ to $u'$, the distance from $v$ changes in the following ways.
\begin{itemize}
    \item \textbf{Inside $T$:} $u'$ and $v$ disagree only $m_{\text{in}}$ number of coordinates and same on others. Therefore, the total number of disagreements that are fixed now is $|T|-m_{\text{in}}$.
     \item \textbf{Outside $T$:} The maximum number of new disagreements introduced outside $T$ is $m_{\text{out}}$.
\end{itemize}
Therefore, we have 
\begin{align*}
    d(u',v)&\leq d(u,v)-(|T|-m_{\text{in}})+m_{\text{out}} \\
    & =d(u,v) -(t+1) + m_{\text{in}} +m_{\text{out}} \\
    & \leq d(u,v)-t-1+t=d(u,v)-1.
\end{align*}
\item Now the new set is $S=\{i\in [n]: u_i' \not=v_i\}$ with $|S|=d(u',v)$. 
\end{enumerate} 
Since $u'$ and $v$ are codewords, $|S|=d(u',v)\notin \{1,2,\ldots, 2t\}$. Again, if $|S|=0$, then we are at $v$, otherwise if $|S|\geq 2t+1$, repeat the same process for $u'$ instead of $u$ now.

Each move of the above process travels along an edge and strictly decreases the distance from $v$, while never entering the range $1,2,\ldots,2t$. After finitely many steps, we must reach $v$. Hence, there exists a path between $u$ and $v$, and $G(\mathcal{C})$ is connected.
\end{proof}

From Theorems \ref{MDG1} and \ref{MDG_perfect}, the following corollary holds.

\begin{corollary}\label{MDG_col1}
Let $f:\mathbb{F}_q^k \to \mathrm{Im}(f)$ be a function. Then for an $(f\!:d_d,d_f)$-FCC with $d_f>d_d$, we have
$$
r_f(k:d_d,d_f) \ge n - k + 1,
$$
where $n$ is the integer satisfying
$
q^{n-k} = \sum_{i=0}^{\lfloor \frac{d_d-1}{2}\rfloor} {n \choose i} (q - 1)^i.
$
\end{corollary}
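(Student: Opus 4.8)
The plan is to combine Theorem~\ref{MDG1} with Theorem~\ref{MDG_perfect} and then simply unwind the definition of a perfect code to express the length $n$ in closed form. First I would take a perfect $t_d$-error correcting code $\mathcal{C}$ of length $n$, dimension $k$ (i.e. with $q^k$ codewords), and minimum distance $d=2t_d+1\ge d_d$; here $t_d=\lfloor (d_d-1)/2\rfloor$, and $n$ is chosen so that $\mathcal{C}$ is perfect, which by the Hamming bound with equality forces exactly the relation
$$
q^{n-k}=\sum_{i=0}^{\lfloor \frac{d_d-1}{2}\rfloor}\binom{n}{i}(q-1)^i .
$$
By Theorem~\ref{MDG_perfect}, the minimum-distance graph $G(\mathcal{C})$ is connected, so by Theorem~\ref{MDG1}, $\mathcal{C}$ cannot be a strict $(f\!:d_d,d_f)$-FCC for any $f$ with $|\mathrm{Im}(f)|\ge 2$ and $d_f>d_d$. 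Hence any $(f\!:d_d,d_f)$-FCC with $d_f>d_d$ must have strictly more redundancy than this perfect code, which has redundancy exactly $n-k$; therefore $r_f(k:d_d,d_f)\ge n-k+1$.

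The one subtlety I would be careful about is the quantifier on $f$: Corollary~\ref{MDG_col1} is stated for a fixed but arbitrary $f$, and the argument needs $|\mathrm{Im}(f)|\ge 2$ for Theorem~\ref{MDG1} to apply. If $|\mathrm{Im}(f)|=1$ the statement is vacuous (any code works and the inequality is meaningless since $d_f>d_d$ imposes no constraint), so I would implicitly assume $|\mathrm{Im}(f)|\ge 2$, consistent with the hypothesis $d_f>d_d$ being non-trivial. A second point to verify carefully is that a perfect code with these parameters genuinely exists for the relevant $n$ — but the statement is phrased as "$n$ is the integer satisfying $q^{n-k}=\sum\binom{n}{i}(q-1)^i$", i.e. it is really a lower bound of the form "you cannot beat the redundancy that a perfect code would have, plus one", and the argument only uses the code when such $n$ (and the corresponding perfect code) exists; in the general case the packing/Hamming bound still gives $r_f(k:d_d,d_f)\ge n-k$ for that $n$, and the "+1" comes precisely from ruling out equality via connectedness of $G(\mathcal{C})$.

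So the logical skeleton is: (i) identify $n-k$ with the redundancy of a perfect $t_d$-error-correcting code; (ii) invoke Theorem~\ref{MDG_perfect} for connectedness of $G(\mathcal{C})$; (iii) invoke Theorem~\ref{MDG1} to conclude no perfect code of these parameters is a strict FCC; (iv) conclude that the optimal strict-FCC redundancy strictly exceeds $n-k$, i.e. is at least $n-k+1$. The main obstacle, such as it is, is purely expository: making precise what "$n$ is the integer satisfying..." means when the Hamming bound is not met with equality by an actual code, and stating the corollary so that it is a genuine consequence of the preceding theorems rather than an assertion about the existence of perfect codes. Once that is pinned down, the proof is a one-line chaining of the two theorems.
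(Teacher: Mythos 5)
Your overall route---chaining Theorem~\ref{MDG1} with Theorem~\ref{MDG_perfect}---is exactly what the paper intends (it states the corollary with no further argument), and your caveat about needing $|\mathrm{Im}(f)|\ge 2$ is well taken. However, the way you deploy the two theorems has a genuine gap. You fix one perfect $t_d$-error-correcting code $\mathcal{C}$ of length $n$, observe via Theorems~\ref{MDG_perfect} and~\ref{MDG1} that this particular $\mathcal{C}$ cannot be a strict FCC, and then conclude that ``any $(f\!:d_d,d_f)$-FCC with $d_f>d_d$ must have strictly more redundancy than this perfect code.'' That inference does not follow: a hypothetical strict FCC of redundancy $n-k$ is some \emph{other} code with $q^k$ codewords, length $n$, and minimum distance at least $d_d$, and the fact that one specific perfect code fails to be a strict FCC says nothing about it.

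The missing bridge is that the defining equation for $n$ forces \emph{every} such code to be perfect: a length-$n$ code with $q^k$ codewords and minimum distance $\ge d_d\ge 2t_d+1$ meets the Hamming bound with equality, because $\sum_{i=0}^{t_d}\binom{n}{i}(q-1)^i=q^{n-k}$, so it is itself a perfect $t_d$-error-correcting code with minimum distance exactly $2t_d+1$. Only then do Theorems~\ref{MDG_perfect} and~\ref{MDG1} apply to the hypothetical FCC itself (this also supplies the hypothesis $d_f>d$ of Theorem~\ref{MDG1}, since the forced minimum distance $2t_d+1$ equals $d_d$ when $d_d$ is odd, while for even $d_d$ no such code exists at all), giving the contradiction and hence $r_f(k\!:d_d,d_f)\ge n-k+1$. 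Your second paragraph gestures at this (``the Hamming bound still gives $r\ge n-k$\ldots the $+1$ comes from ruling out equality via connectedness of $G(\mathcal{C})$''), but the $\mathcal{C}$ there is still your hand-picked perfect code, and the crucial claim that attaining redundancy $n-k$ forces the FCC to be perfect is never stated. Once that claim is in place, your worry about whether a perfect code actually exists evaporates: the argument never needs one to exist, only that any code achieving redundancy $n-k$ would have to be one.
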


\subsubsection{MDS Codes}

The Singleton bound on the size of an error-correcting code $\mathcal{C}$ of length $n$ and minimum distance $d$ is given in \cite{H1986} as
\[
M \le q^{\,n-d+1},
\]
equivalently, for $k=\log_q M$,
$
d \le n-k+1.
$
A code that meets the Singleton bound with equality is called a \emph{maximum distance separable (MDS)} code. In particular, an $(n,M,d)_q$ code is MDS if and only if $M=q^{\,n-d+1}$. 
The existence of MDS codes depends on the alphabet size $q$ relative to the code length, for example, Reed-Solomon and extended Reed--Solomon constructions give $q$-ary MDS codes of length up to $q+1$ \cite{LX2004}.

To prove that the minimum-distance graph of an MDS code is connected, we use the following lemma.

\begin{lemma}\label{lem:MDG_MDS}
Let $\mathcal{C}$ be an MDS code with parameters $(n,M,d)_q$, and let $u,v\in\mathcal{C}$. Then there exists $u'\in\mathcal{C}$ such that
\begin{enumerate}
    \item $d(u,u')=d$ (i.e., $u'$ is a neighbor of $u$ in $G(\mathcal{C})$);
    \item $d(u',v)\le d(u,v)-1$.
\end{enumerate}
\end{lemma}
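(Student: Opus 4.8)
The plan is to mimic the structure of the proof of Theorem~\ref{MDG_perfect}, but replacing the covering-radius argument (which exploited perfectness) with an interpolation/erasure-correction argument specific to MDS codes. Let $S=\{i\in[n]:u_i\ne v_i\}$, so $|S|=d(u,v)\ge d$ since $u\neq v$ are codewords. The idea is to "flip" $u$ toward $v$ on a carefully chosen set of coordinates, then snap the result to the nearest codeword, and show that this nearest codeword is both a $G(\mathcal{C})$-neighbour of $u$ and strictly closer to $v$.

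First I would pick a subset $T\subseteq S$ with $|T|=d-1$, and define $x\in\mathbb{F}_q^n$ by $x_i=v_i$ for $i\in T$ and $x_i=u_i$ for $i\notin T$, so $d(x,u)=d-1$. The key structural fact about an $(n,M,d)_q$ MDS code is that it can correct any $d-1$ erasures: any codeword is uniquely determined by its values on any $n-(d-1)=n-d+1$ coordinates (this is exactly the statement that an MDS code has the property that deleting any $d-1$ coordinates is injective on $\mathcal{C}$, which follows from the Singleton bound meeting equality). Now consider the coordinates in $[n]\setminus T$: on these, $x$ agrees with $u$, and it also agrees with $v$ there except possibly on $S\setminus T$. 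I would like to produce a codeword $u'$ agreeing with $x$ off $T$; for that I treat the $d-1$ positions of $T$ as erasures and decode. Since $u\in\mathcal{C}$ already agrees with $x$ on $[n]\setminus T$, erasure-decoding $x$ with erasures on $T$ returns exactly $u$ — that is the wrong choice. So the refinement needed is: choose the flip set inside $S$ more cleverly so that the erasure-filled vector forces a genuinely new codeword. Concretely, I would instead form $x$ by setting $x_i=v_i$ for all $i\in T'$ where $T'\subseteq S$ with $|T'|=d$, leave the other $S$-coordinates and all $[n]\setminus S$ coordinates erased or equal to $u$, and look at the codeword $u'$ obtained by matching $v$ on $T'$ and $u$ elsewhere; then $d(u,u')\le |T'|=d$, and since $u\neq u'$ (they differ on $T'$) and both are codewords, $d(u,u')=d$ exactly, giving the edge in $G(\mathcal{C})$.

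For the second property, $d(u',v)\le d(u,v)-1$, I would count as in Theorem~\ref{MDG_perfect}: on $T'$, $u'$ agrees with $v$, so all $|T'|=d$ disagreements between $u$ and $v$ lying in $T'$ get fixed; outside $T'$, $u'$ may differ from $u$ on some set of coordinates, introducing at most $d(u,u')-|\{i\in T': u_i'\ne u_i\}|$ new disagreements with $v$. Since $d(u,u')=d=|T'|$ and $u'$ differs from $u$ on all of $T'$ (it equals $v\ne u$ there), $u'$ agrees with $u$ everywhere outside $T'$, hence introduces no new disagreements with $v$ at all; therefore $d(u',v)=d(u,v)-d\le d(u,v)-1$.

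The main obstacle I anticipate is making rigorous the claim that a codeword $u'$ with the prescribed behaviour ($u'|_{T'}=v|_{T'}$ and $u'|_{[n]\setminus T'}=u|_{[n]\setminus T'}$) actually exists in $\mathcal{C}$: in general no codeword need agree with two different codewords in this patched way. The correct fix is to not prescribe $u'$ off $T'$ but only to erase a size-$(d-1)$ subset and erasure-decode, then bound distances; one has to argue that some choice of the flip set yields $u'\neq u$ and that the resulting codeword stays within distance $d$ of $u$ while strictly decreasing distance to $v$. I expect the clean route is: among all codewords $w\in\mathcal{C}\setminus\{u\}$, pick one minimising $d(w,v)$; a short argument using $d(w,v)\ge d(u,v)-d(u,w)$ and the MDS/Singleton structure (two codewords at distance exactly $d$ can be "aligned" on a size-$(n-d+1)$ support) should show this minimiser is a neighbour of $u$ and satisfies $d(w,v)\le d(u,v)-1$, which is exactly the desired $u'$. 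Turning this existence argument into a fully self-contained proof using only the Singleton bound is the delicate step.
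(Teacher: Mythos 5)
Your proposal has a genuine gap, and you name it yourself without closing it: the existence of a codeword $u'$ with $u'|_{T'}=v|_{T'}$ and $u'|_{[n]\setminus T'}=u|_{[n]\setminus T'}$ is not just ``delicate'' --- it is false in general. Prescribing all $n$ coordinates overdetermines the codeword: the codewords of an MDS code agreeing with $u$ outside a set $T'$ of size $d$ form a set of exactly $q$ words (for a linear MDS code, the coset of a one-dimensional subcode supported on $T'$), and there is no reason any of them should match $v$ on all of $T'$. Your distance accounting in the middle paragraph therefore rests on an object that need not exist. The fallback you sketch at the end is also broken: among $w\in\mathcal{C}\setminus\{u\}$ the minimiser of $d(w,v)$ is $v$ itself (assuming $u\ne v$), which is a neighbour of $u$ in $G(\mathcal{C})$ only when $d(u,v)=d$; for $d(u,v)>d$ this route proves nothing, and constraining the minimisation to neighbours of $u$ just restates the lemma.

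The paper resolves exactly this difficulty by prescribing $u'$ on precisely $k=n-d+1$ coordinates and letting the MDS projection property (the projection $\pi_J:\mathcal{C}\to\mathbb{F}_q^{J}$ is a bijection for every $|J|=k$) supply existence and uniqueness. Concretely: with $S=\{i:u_i\ne v_i\}$ one has $|S^c|=n-d(u,v)\le k-1$, so one can choose $J\supseteq S^c$ with $|J|=k$; pick a single $j\in J\cap S$ and define $u'$ as the unique codeword with $u'_j=v_j$ and $u'_i=u_i$ for $i\in J\setminus\{j\}$. Then $u'$ can differ from $u$ only at $j$ and on the $d-1$ positions of $J^c$, so $d(u,u')\le d$ and hence $=d$ by minimum distance; and $u'$ agrees with $v$ on $S^c\cup\{j\}$, giving at least $n-d(u,v)+1$ agreements, i.e.\ $d(u',v)\le d(u,v)-1$. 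The crucial idea you were missing is to flip only \emph{one} disagreement coordinate toward $v$ inside the prescribed window and let the $d-1$ unconstrained coordinates absorb the rest of the change, rather than trying to force agreement with $v$ on a full set of $d$ positions.
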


\begin{proof}
Let $S=\{i\in[n]:u_i\ne v_i\}$ and $S^c=[n]\setminus S=\{i\in[n]:u_i=v_i\}$. Then $|S|=d(u,v)\ge d=n-k+1$, so $|S^c|=n-|S|\le k-1$.

Choose a set $J\subseteq[n]$ with $|J|=k$ such that $S^c\subseteq J$ (i.e., $J$ contains all agreements of $u$ and $v$, and adds $k-|S^c|$ positions from $S$). Then
\[
|J^c|=n-k=d-1,\]
\[|J\cap S|=k-|S^c|=k-(n-d(u,v))=d(u,v)-(d-1).
\]
Pick any $j\in J\cap S$ and define a vector $t\in\mathbb F_q^{J}$ by
$
t_j=v_j\ne u_j$ and $t_i=u_i\ \text{ for all } i\in J\setminus\{j\}.$
By the MDS projection property (for any $|J|=k$, the projection $\pi_J:\mathcal C\to \mathbb F_q^{J}$ is bijective), there exists a unique $u'\in\mathcal C$ with $\pi_J(u')=t$. Now, we have the following.
\begin{enumerate}
    \item \emph{$d(u,u')=d$.}  
The coordinates where $u$ and $u'$ may differ are: the single index $j\in J$ (forced to differ), plus the indices in $J^c$ (unconstrained). Hence
$$
d(u,u')\le 1+|J^c| = 1+(d-1)=d.
$$
Since $u'\ne u$ and $d$ is the minimum distance of $\mathcal C$, we must have $d(u,u')=d$.

\item \emph{$d(u',v)\le d(u,v)-1$.}  
We count the sure agreements between $u'$ and $v$. On $S^c\subseteq J$ we set $t_i=u_i=v_i$, so $u'_i=v_i$ for all $i\in S^c$; this gives $|S^c|=n-d(u,v)$ agreements. Additionally, at $j$ we enforced $u'_j=v_j$, yielding one more agreement. Therefore, the number of agreements is at least $n-d(u,v)+1$, and hence
\[
d(u',v)\le n-(n-d(u,v)+1) = d(u,v)-1.
\]
\end{enumerate}
This proves both claims.
\end{proof}

\begin{theorem}\label{MDG_MDS}
    The minimum-distance graph of any MDS code is connected.
\end{theorem}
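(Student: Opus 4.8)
The plan is to prove Theorem \ref{MDG_MDS} by essentially the same iterative path-construction argument that was used for perfect codes in Theorem \ref{MDG_perfect}, but now feeding in Lemma \ref{lem:MDG_MDS} as the single-step move. Given two codewords $u,v\in\mathcal{C}$, I would argue that if $u\ne v$ then $d(u,v)\ge d$, so Lemma \ref{lem:MDG_MDS} applies and produces a codeword $u'$ with $d(u,u')=d$ (hence $u$ and $u'$ are adjacent in $G(\mathcal{C})$) and $d(u',v)\le d(u,v)-1$. Replacing $u$ by $u'$ and iterating, the distance to $v$ strictly decreases at each step, so after at most $d(u,v)$ steps we reach a codeword at distance $0$ from $v$, i.e.\ $v$ itself. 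Concatenating the edges traversed gives a walk (hence a path) from $u$ to $v$ in $G(\mathcal{C})$, so $G(\mathcal{C})$ is connected.

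The one point that needs a word of care is the base/edge case of the iteration: when the current codeword is already at distance exactly $d$ from $v$, Lemma \ref{lem:MDG_MDS} gives $u'$ with $d(u',v)\le d-1$; since $d$ is the minimum distance, $d(u',v)\in\{0\}$, forcing $u'=v$, so the process terminates correctly in one more edge rather than getting stuck. (This is exactly the content of the commented-out remark preceding the theorem, which I would either inline or simply note in passing.) I would also remark that the degenerate case $|\mathcal{C}|=1$ makes $G(\mathcal{C})$ trivially connected, and that for $|\mathcal{C}|\ge 2$ we indeed have $d\le n$ so the lemma's hypotheses are never vacuous.

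I do not expect any real obstacle here: all the combinatorial work — in particular the use of the MDS projection property $\pi_J:\mathcal{C}\to\mathbb{F}_q^J$ being a bijection for every $k$-subset $J$ — has already been discharged inside Lemma \ref{lem:MDG_MDS}. The proof of the theorem itself is just the termination/monotonicity bookkeeping. If anything, the only thing to be slightly careful about is phrasing the induction cleanly: I would do induction on $d(u,v)$, with the statement ``there is a path in $G(\mathcal{C})$ from $u$ to any $v\in\mathcal{C}$ with $d(u,v)\le m$'', the case $m=0$ being trivial and the inductive step supplied verbatim by Lemma \ref{lem:MDG_MDS}. Finally, combining Theorem \ref{MDG_MDS} with Theorem \ref{MDG1} yields, exactly as in Corollary \ref{MDG_col1}, that an MDS code cannot be a strict $(f\!:d_d,d_f)$-FCC, which I would state as the natural companion corollary.
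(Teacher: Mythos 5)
Your proposal is correct and follows essentially the same route as the paper: iterate Lemma \ref{lem:MDG_MDS} starting from $u$, noting that each step traverses an edge of $G(\mathcal{C})$ while strictly decreasing the distance to $v$, and that once the current codeword is at distance exactly $d$ from $v$ the next step must land on $v$ itself since $d$ is the minimum distance. Your added remarks on the degenerate case $|\mathcal{C}|=1$ and the clean induction phrasing are fine but not substantively different from the paper's argument.
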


\begin{proof}
Let $G(\mathcal C)$ be the minimum-distance graph of $\mathcal C$, and $u,v\in\mathcal C$. We will construct a path in $G(\mathcal{C})$ from vertex $u$ to $v$ by moving along the edges while reducing the distance from $v$. Start with $x_0=u$, and apply Lemma~\ref{lem:MDG_MDS} to the pair $(x_0,v)$ to obtain $x_1\in\mathcal C$ with
$$
d(x_0,x_1)=d \quad\text{and}\quad d(x_1,v)\le d(x_0,v)-1.
$$
If $d(x_1,v)=0$, then we are already at $v$, otherwise $d(x_1,v)\geq d$. Again use Lemma \ref{lem:MDG_MDS} for the pair $(x_1,v)$, and obtain $x_s$. By repeating this process, we get a sequence $x_0,x_1,x_2,\dots$ in $\mathcal C$ such that
\[
d(x_i,x_{i+1})=d \qquad\text{and}\qquad d(x_{i+1},v)\le d(x_i,v)-1.
\]
Because $d(x_i,v)$ is a nonnegative integer that decreases at each step, after finitely many steps we reach some $x_j$ with $d(x_j,v)=d$. 
The next application of Lemma~\ref{lem:MDG_MDS} gives us $x_{j+1}=v$ with $d(x_j,v)=d$. 
Thus $u=x_0,x_1,\dots,x_j,v$ is a path in $G(\mathcal C)$ from $u$ to $v$. Therefore, $G(\mathcal C)$ is connected.
\end{proof}
From Theorems \ref{MDG1} and \ref{MDG_MDS}, the following corollary holds.

\begin{corollary}\label{col:MDG_MDS}
Let $f:\mathbb{F}_q^k \to \mathrm{Im}(f)$ be a function. 
Assume there exists an MDS $(n,q^k,d)_q$ code, i.e., $n=k+d-1$.
Then for an $(f\!:d,d_f)$-FCC with $d_f>d$, we have
\[
r_f(k:d,d_f) \ge n-k+1 = d,
\]
equivalently, any such FCC must have length $\ge k+d$.
\end{corollary}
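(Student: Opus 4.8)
The plan is to argue by contradiction, showing that an $(f\!:d,d_f)$-FCC of redundancy $r \le n-k$ in the strict regime $d_f > d$ (with $|\mathrm{Im}(f)|\ge 2$, the only regime in which ``strict'' is meaningful) would necessarily be an MDS code, and then to invoke Theorems~\ref{MDG1} and~\ref{MDG_MDS} to get a contradiction. The first observation is that, since $d_d = d \ge 1$, the data-protection condition of the definition already forces the encoding $\mathfrak{C}_f:\mathbb{F}_q^k\to\mathbb{F}_q^{k+r}$ to be injective, so its image $\mathcal{C}'$ is a code with exactly $q^k$ codewords, length $\ell = k+r \le n$, and minimum distance $d_{\min}(\mathcal{C}')\ge d$.

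Next I would apply the Singleton bound to $\mathcal{C}'$: from $q^k \le q^{\,\ell - d_{\min}(\mathcal{C}')+1} \le q^{\,\ell-d+1}$ one gets $\ell \ge k+d-1 = n$, and combined with $\ell \le n$ this pins down $\ell = n$ and $d_{\min}(\mathcal{C}') = d = n-k+1$; hence $\mathcal{C}'$ is an $(n,q^k,d)_q$ MDS code and in particular $r = n-k$. At this point Theorem~\ref{MDG_MDS} says that $G(\mathcal{C}')$ is connected, and Theorem~\ref{MDG1} (applicable because $|\mathrm{Im}(f)|\ge 2$ and $d_f > d = d_{\min}(\mathcal{C}')$) says that a code whose minimum-distance graph is connected cannot be a strict $(f\!:d,d_f)$-FCC, contradicting the assumption on $\mathfrak{C}_f$. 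Therefore $r_f(k\!:d,d_f)\ge n-k+1 = d$, i.e.\ every strict $(f\!:d,d_f)$-FCC has length at least $k+d$.

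I do not expect a serious obstacle, since the two cited theorems do the real work; the only points needing care are (i) the Singleton reduction that upgrades a possibly-shorter FCC into an honest MDS code of length exactly $n$ (so that Theorem~\ref{MDG_MDS} applies), and (ii) making explicit the hypothesis $|\mathrm{Im}(f)|\ge 2$ — for a constant $f$ the statement is simply false, since an ordinary MDS code of length $n$ already serves, so this assumption, implicit in the word ``strict,'' must be stated.
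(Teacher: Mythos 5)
Your proposal is correct and follows essentially the same route the paper intends: the corollary is obtained by combining Theorem~\ref{MDG1} (connected minimum-distance graph forbids a strict FCC) with Theorem~\ref{MDG_MDS} (MDS codes have connected minimum-distance graphs), with the Singleton bound pinning a hypothetical FCC of redundancy at most $n-k$ down to an $(n,q^k,d)_q$ MDS code. Your two explicit care points are well taken and only strengthen the write-up: the Singleton reduction is left implicit in the paper, and the hypothesis $|\mathrm{Im}(f)|\ge 2$, inherited from Theorem~\ref{MDG1}, is indeed needed (for constant $f$ the claimed bound fails, since the MDS code itself is already an $(f\!:d,d_f)$-FCC of redundancy $d-1$).
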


\section{Function-correcting codes for specific functions}\label{specific_functions}
In this section, we focus on specific classes of functions and construct FCCs for them using our method described in Subsection~\ref{subsec:Method}. The classes of functions considered are (i) Locally binary functions, (ii) Locally bounded functions, and (iii) Hamming weight functions.

\subsection{Locally binary function}

\begin{definition}[Function ball \cite{LBWY2023}]\label{FB}
The \emph{function ball} of a function $f:\mathbb{F}_q^k \rightarrow \mathrm{Im}(f)$ with radius $\rho$ around $u \in \mathbb{F}_q^k$
is defined as
$B_f (u, \rho) = \{f(u') \mid u' \in \mathbb{F}_q^k \ \text{and} \ d(u, u') \leq \rho\}.$
\end{definition}

\begin{definition}[Locally binary function \cite{LBWY2023}] \label{rl_func}
A function $f:\mathbb{F}_q^k \rightarrow \mathrm{Im}(f)$ is said to be a \emph{$\rho$-locally binary} 
function if $|B_f (u, \rho)| \leq 2$  for all $u \in \mathbb{F}_q^k.$
\end{definition}

By generalizing the optimal construction of an $(f, t_f)$-FCC for locally binary functions given in~\cite{LBWY2023}, we obtain the construction presented in the proof of the following lemma. 

\begin{lemma}\label{lem:LBF}
   For any $(d_f-1)$-locally binary function $f$, and a systematic $[n, k, d_d]$ linear error-correcting code $\mathcal{C}$, we have (from our construction in  Subsection~\ref{subsec:Method})
$$r_f (k: d_d, d_f) \le n-k+d_f-d_d.$$
\end{lemma}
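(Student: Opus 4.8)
The plan is to follow the two-step construction from Subsection~\ref{subsec:Method}: take the given systematic $[n,k,d_d]$ linear code $\mathcal{C}$ for Step~1, and then build an explicit FCC on the codeword set $\{c_u : u \in \mathbb{F}_q^k\}$ in Step~2 that only needs $r' = d_f - d_d$ extra redundant symbols. By Corollary~\ref{col:r<ND}, this yields $r_f(k : d_d, d_f) \le n - k + r' = n - k + d_f - d_d$, which is exactly the claimed bound. So the work is entirely in producing the Step~2 encoding and verifying its distance property.

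For Step~2, I would mimic the optimal locally binary construction from \cite{LBWY2023}. The idea there is: since $f$ is $(d_f - 1)$-locally binary, one can think of the function values as being essentially "two-colorable" in a local sense, and attach a short redundancy that separates the two local classes. Concretely, I expect one defines, for each $u$, a redundancy vector $p_u \in \mathbb{F}_q^{d_f - d_d}$ that depends only on $f(u)$ (or on a suitable binary-type labelling induced by $f$) such that: whenever $f(u_1) \ne f(u_2)$, the pair $p_{u_1}, p_{u_2}$ contributes at least $d_f - d_d$ to the distance, so that combined with $d(c_{u_1}, c_{u_2}) \ge d_d$ we get
$$d(\mathfrak{C}_f'(c_{u_1}), \mathfrak{C}_f'(c_{u_2})) = d(c_{u_1}, c_{u_2}) + d(p_{u_1}, p_{u_2}) \ge d_d + (d_f - d_d) = d_f.$$
The key structural fact that makes $d_f - d_d$ symbols suffice — rather than needing to worry about arbitrarily many function values simultaneously — is the local binary property: any two information vectors $u_1, u_2$ with $d(u_1, u_2) \le d_f - 1$ and $f(u_1) \ne f(u_2)$ have their function values among a set of size $\le 2$ determined locally, so the labelling can be chosen consistently (e.g.\ via a proper $2$-coloring argument on the relevant conflict graph, or directly as in \cite{LBWY2023}); and for pairs with $d(u_1,u_2) \ge d_f$ — equivalently $d(c_{u_1}, c_{u_2}) \ge d_f$ since $\mathcal{C}$ is systematic so $d(c_{u_1},c_{u_2}) \ge d(u_1,u_2)$ — the distance requirement $d_f$ is already met by $\mathcal{C}$ alone, and no extra separation is needed.

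I would carry out the steps in this order: (1) invoke Step~1 with the given code $\mathcal{C}$; (2) partition the pairs $(u_1, u_2)$ with $f(u_1) \ne f(u_2)$ into the "close" pairs with $d(u_1, u_2) < d_f$ and the "far" pairs with $d(u_1, u_2) \ge d_f$, noting the far pairs are automatically fine because $d(c_{u_1}, c_{u_2}) \ge d(u_1, u_2) \ge d_f$; (3) for the close pairs, use the $(d_f-1)$-local binariness to define the length-$(d_f - d_d)$ redundancy $p_u$ as a function of $f(u)$ so that distinct function values occurring within any radius-$(d_f-1)$ ball get redundancies at Hamming distance exactly $d_f - d_d$ (this is where the construction of \cite{LBWY2023} is adapted — there it gives a length-$(2t_f - 2t_d)$-type gadget, here recast in terms of $d_f - d_d$); (4) verify the distance inequality displayed above in both cases; (5) conclude $r' \le d_f - d_d$ and apply Corollary~\ref{col:r<ND}.

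The main obstacle I anticipate is step~(3): making the local-to-global labelling rigorous. Being $(d_f-1)$-locally binary only controls $|B_f(u, d_f-1)| \le 2$ for each fixed center $u$; one must check that these overlapping local constraints can be stitched into a globally consistent assignment of redundancy vectors that handles every conflicting close pair simultaneously. The cleanest route is probably to reuse verbatim the mechanism from the optimal construction in \cite{LBWY2023} (which already solves exactly this for the $t_d = 0$ case) and observe that replacing the information vectors $u$ by the codewords $c_u$ only \emph{increases} pairwise distances ($d(c_{u_1}, c_{u_2}) \ge d(u_1, u_2)$, with $\ge d_d$ always), so every inequality used there still goes through, with the required extra length dropping from $d_f - 1$ down to $d_f - d_d$ because we get a "free" head start of $d_d$ from the underlying ECC.
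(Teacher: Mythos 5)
Your proposal is correct and takes essentially the same route as the paper: use the given systematic $[n,k,d_d]$ code in Step~1, append a length-$(d_f-d_d)$ tail in Step~2, and split conflicting pairs into far ones (handled by $d(c_{u_1},c_{u_2})\ge d(u_1,u_2)\ge d_f$) and close ones (handled by $d_d+(d_f-d_d)=d_f$). The labelling you defer to \cite{LBWY2023} is exactly what the paper writes out: $p_u$ is the all-ones vector of length $d_f-d_d$ if $f(u)=\max B_f(u,d_f-1)$ and all-zeros otherwise, which is globally consistent because for any close pair with $f(u)\ne f(v)$ both function balls equal the two-element set $\{f(u),f(v)\}$, so exactly one of the two gets the all-ones tail.
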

\begin{proof}
Consider a $(d_f-1)$-locally binary function $f:\mathbb{F}_q^k \rightarrow \mathrm{Im}(f)$, and an $[n, k, d_d]$ linear error-correcting code $\mathcal{C}$ with a generator matrix $G$ of order $k \times n$. For any $u\in \mathbb{F}_q^k$, we denote the corresponding codeword as $c_u=uG$. Now define an encoding function $\mathfrak{C}_f: \mathbb{F}_q^k \rightarrow \mathbb{F}_q^{n+(d_f-d_d)}$ as 
$$
\mathfrak{C}_f(u)=(c_u, p_u), $$ 
where 
$$p_u = \begin{cases}
\underbrace{11\cdots 1}_{(d_f-d_d)\ \text{times}} & \text{if} \ f(u)=\max{B_f(u,d_f-1)}, \\
\underbrace{00\cdots 0}_{(d_f-d_d)\ \text{times}} & \text{otherwise}. \\
\end{cases}$$
 Now we prove that the encoding function defined above is an $(f\!:d_d,d_f)$-FCC with redundancy $r=n-k+d_f-d_d$. For any $u,v \in \mathbb{F}_q^k$,
 \begin{equation} \label{mindist}
d(\mathfrak{C}_f(u), \mathfrak{C}_f(v)) = d(c_u, c_v) + d(p_u, p_v).
\end{equation}
 Since $\mathcal{C}$ is a linear code with minimum distance $2t_d+1$, for any $u,v \in \mathbb{F}_q^k$ such that $u \neq v$, we have $d(\mathfrak{C}_f(u), \mathfrak{C}_f(v))\geq d(c_u,c_v)\geq d_d.$
 Now let $u,v \in \mathbb{F}_q^k$ be such that $f(u) \neq f(v)$.
There are the following two possible cases with vectors $u$ and $v$.\\
\textit{Case 1}: 
If $d(u,v) \geq d_f$, then $d(c_u,c_v) \geq d_f$, as $\mathcal{C}$ is a systematic code. By \eqref{mindist}, we have
$d(\mathfrak{C}_f(u), \mathfrak{C}_f(v)) = d(c_u, c_v) + d(u_p, v_p)\geq d_f.$\\
\textit{Case 2}: If $d(u,v)\leq d_f-1$ then $f(v) \in B_f(u, d_f-1)$, and by the definition of the function $\mathfrak{C}_f(u)$, we have $d(p_u,p_v)=d_f-d_d$, and  
$d(\mathfrak{C}_f(u), \mathfrak{C}_f(v)) = d(c_u, c_v) + d(p_u, p_v)\geq d_d+d_f-d_d=d_f.$
\end{proof}

Lemma \ref{lem:LBF} can also be written in terms of $t_d$ and $t_f$ by putting $d_d=2t_d+1$ and $d_f=2t_f+1$ as follows.
\begin{lemma*}
   For any $2t_f$-locally binary function $f$, and a systematic $[n, k, 2t_d+1]$ linear error-correcting code $\mathcal{C}$, our construction gives 
$$r_f (k, t_d, t_f) \le n-k+2(t_f-t_d).$$
\end{lemma*}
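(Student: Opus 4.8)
The plan is to derive this statement directly from Lemma \ref{lem:LBF} by the change of variables $d_d = 2t_d+1$ and $d_f = 2t_f+1$, which is precisely the relationship between the $(f\!:\!d_d,d_f)$ and $(f,t_d,t_f)$ notations fixed earlier in the paper. First I would observe that under this substitution the hypothesis ``$(d_f-1)$-locally binary'' becomes ``$2t_f$-locally binary'', and the hypothesis ``systematic $[n,k,d_d]$ linear code'' becomes ``systematic $[n,k,2t_d+1]$ linear code'', so the hypotheses of the two statements coincide.

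Next I would simplify the conclusion: Lemma \ref{lem:LBF} gives $r_f(k\!:\!d_d,d_f)\le n-k+d_f-d_d$, and here $d_f-d_d = (2t_f+1)-(2t_d+1) = 2(t_f-t_d)$, so the right-hand side is exactly $n-k+2(t_f-t_d)$. Since $r_f(k,t_d,t_f)$ denotes the same quantity as $r_f(k\!:\!d_d,d_f)$ for this choice of $d_d,d_f$, the claimed bound follows immediately. Equivalently, one can re-run the construction in the proof of Lemma \ref{lem:LBF} verbatim: append $d_f-d_d = 2(t_f-t_d)$ parity coordinates that are all-ones or all-zeros according to whether $f(u)$ equals the larger element of the function ball $B_f(u,d_f-1)=B_f(u,2t_f)$, and the same two-case argument (distance $\ge d_d$ from the underlying code when $u\ne v$, distance $\ge d_f$ when $f(u)\ne f(v)$) shows the result is an $(f,t_d,t_f)$-FCC of the stated redundancy.

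There is essentially no obstacle here; the only points worth checking are bookkeeping ones, namely that choosing $d_d=2t_d+1$ (rather than $2t_d+2$) gives the weakest-hypothesis specialization, and that the number of appended parity symbols is indeed $d_f-d_d=2(t_f-t_d)$ rather than $t_f-t_d$. Both verifications are routine.
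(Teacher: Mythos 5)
Your proposal is correct and matches the paper's own treatment: the paper obtains this lemma precisely by substituting $d_d=2t_d+1$ and $d_f=2t_f+1$ into Lemma \ref{lem:LBF}, so that $(d_f-1)$-locally binary becomes $2t_f$-locally binary and the redundancy bound becomes $n-k+2(t_f-t_d)$. Your alternative remark about re-running the construction with $2(t_f-t_d)$ appended parity symbols is also exactly the argument the paper would use, so there is nothing to add.
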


For $t_d=0$, we have $n=k$ and $r_f (k, t_d, t_f) \le 2t_f$, which is optimal from Theorem \ref{thm1.1}.

In Lemma \ref{lem:LBF}, we can use any class of systematic linear codes. If we use a linear perfect code in the proposed construction described in Subsection~\ref{subsec:Method}, then we get the following result.
\begin{corollary}\label{col:HLB}
If there exists a perfect linear $(n,q^k,d_d=2t_d+1)$-code, where 
$\sum_{i=0}^{t_d} {n \choose i}(q-1)^i = q^{n-k},$ then for any $(d_f-1)$-locally binary function $f$, 
$$r_f(k:d_d,d_f) \leq n-k+d_f-d_d.$$
\end{corollary}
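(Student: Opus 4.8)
The plan is to obtain this statement as an immediate specialization of Lemma~\ref{lem:LBF}. That lemma already shows that for any $(d_f-1)$-locally binary function $f$ and \emph{any} systematic $[n,k,d_d]$ linear error-correcting code $\mathcal{C}$, the two-step construction of Subsection~\ref{subsec:Method} yields an $(f\!:d_d,d_f)$-FCC of redundancy $n-k+d_f-d_d$. Hence the only thing that remains is to argue that a perfect linear code satisfying the stated counting identity qualifies as such a $\mathcal{C}$.

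First I would observe that a perfect linear $(n,q^k,d_d)$-code is, by definition, a $k$-dimensional linear subspace of $\mathbb{F}_q^n$ with minimum distance $d_d=2t_d+1$; that is, it is precisely an $[n,k,d_d]$ linear code, and the perfection hypothesis is exactly $\sum_{i=0}^{t_d}\binom{n}{i}(q-1)^i = q^{n-k}$, which pins down the relation between $n$, $k$, $q$ and $t_d$. Next I would recall the standard fact that every linear code has an information set, so it is permutation-equivalent to a code in systematic form; since a coordinate permutation preserves all Hamming distances (and therefore all of the DRM/CDRM/CFDM quantities and the redundancy count $n-k$), we may without loss of generality take the perfect code to be systematic. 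Then I plug this systematic perfect code into Lemma~\ref{lem:LBF} in the role of $\mathcal{C}$ and read off $r_f(k:d_d,d_f) \le n-k+d_f-d_d$.

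I do not expect a genuine obstacle here: this is phrased as a corollary precisely because all the real content is in Lemma~\ref{lem:LBF} and in the verification in Subsection~\ref{subsec:Method} that the two-step map is an $(f\!:d_d,d_f)$-FCC. The single point deserving a sentence of care is that ``perfect linear code'' does not literally say ``systematic'', so I would explicitly invoke the passage to a permutation-equivalent systematic code. I would also note that the hypothesis is an existence assumption, so the bound is non-vacuous only for those $n$ that actually admit a perfect linear code (for instance, Hamming codes when $t_d=1$, or the binary and ternary Golay codes).
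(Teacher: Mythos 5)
Your proposal is correct and matches the paper's route: the paper states this result as an immediate consequence of Lemma~\ref{lem:LBF}, obtained by using a perfect linear code (which is an $[n,k,d_d=2t_d+1]$ linear code, with the perfection identity fixing $n-k$) as the code $\mathcal{C}$ in the two-step construction. Your extra remark that one may pass to a permutation-equivalent systematic form without affecting distances or the redundancy $n-k$ is the same implicit step the paper relies on (it cites the standard-form equivalence of linear codes elsewhere), so there is no gap.
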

Now, from Corollaries \ref{MDG_col1} and \ref{col:HLB}, we have the following result.
\begin{corollary}\label{col:opti1}
   Let $f:\mathbb{F}_q^k \to \mathrm{Im}(f)$ be a $(d_f-1)$-locally binary function. Then the construction described in the proof of Lemma \ref{lem:LBF} gives an optimal $(f\!:d_d,d_f)$-FCC for $d_f=d_d+1$ if there exists a perfect $(n, q^k, d_d)$ code.
\end{corollary}
\begin{proof}
   For a perfect $(n, q^k, d_d)$ code, we have 
   $\sum_{i=0}^{\lfloor \frac{d_d-1}{2} \rfloor} {n \choose i}(q-1)^i  = q^{n-k}.$
   Using Corollary \ref{MDG_col1}, we have $$r_f(k:d_d,d_f)\geq n-k+1.$$
   Using the proposed construction given in Subsection~\ref{subsec:Method}, we get the bound given in Corollary \ref{col:HLB} as
   $$r_f(k:d_d,d_f) \leq n-k+1.$$
   Therefore, optimality holds.
\end{proof}

\begin{example}
  Let $f:\mathbb{F}_2^4 \rightarrow \{0,1\}$ be defined by $f(u) = \mathrm{wt}(u) \bmod 2$ for all $u \in \mathbb{F}_2^4$. Clearly, this function is a locally binary function. Using our construction method, we construct an $(f, 1, 2)$-FCC. 
We use the $[7,4,3]$ Hamming code in Step 1. Then, in Step 2, we apply the method described in the proof of Lemma~\ref{lem:LBF}, which results in a redundancy of $7-4 + 2(t_f - 1) = 5$, as stated in Corollary~\ref{col:HLB}. 
The generator matrix for $[7,4,3]$ Hamming code is 
$$G=\begin{bmatrix}
    1 & 0 & 0 & 0 & 1 & 1 & 0 \\
    0 & 1 & 0 & 0 & 1 & 0 & 1 \\
    0 & 0 & 1 & 0 & 0 & 1 & 1 \\
    0 & 0 & 0 & 1 & 1 & 1 & 1 
\end{bmatrix}.$$
Further, the encoding $\mathfrak{C}_f : \mathbb{F}_2^4 \rightarrow \mathbb{F}_2^9$ is defined as $\mathfrak{C}_f(u) = (c_u, p_u)$, where $p_u = 00$ if $f(u) = 0$ and $p_u = 11$ if $f(u) = 1$. Using this encoding, we obtain the following code with redundancy of $5$.
\[
\begin{matrix}
\text{Message}  & \text{ECC-parity} & \text{FCC-parity} & \text{Codeword} \\
u \in \mathbb{F}_2^4 & c_u|_{\{5,6,7\}} & p_u & \mathfrak{C}_f(u) \\
\hline 
0000 & 000   & 00 & 000000000 \\
1000 & 110   & 11 & 100011011 \\
0100 & 101   & 11 & 010010111 \\
0010 & 011   & 11 & 001001111 \\
0001 & 111   & 11 & 000111111 \\
1100 & 011   & 00 & 110001100 \\
1010 & 101   & 00 & 101010100 \\
1001 & 001   & 00 & 100100100 \\
0110 & 110   & 00 & 011011000 \\
0101 & 010   & 00 & 010101000 \\
0011 & 100   & 00 & 001110000 \\
1110 & 000   & 11 & 111000011 \\
1101 & 100   & 11 & 110110011 \\
1011 & 010   & 11 & 101101011 \\
0111 & 001   & 11 & 011100111 \\
1111 & 111   & 00 & 111111100 
\end{matrix}
\]

where $c_u|_{\{5,6,7\}}$ denotes the restriction of the vector $c_u=uG$ to its coordinates at positions  5, 6 and 7.

By Theorem \ref{thm1.2}, we have $r_f(4,1,2) \geq 5$, which establishes that the constructed code is optimal.
\end{example}

If we use an MDS code in the construction given in Lemma \ref{lem:LBF},  we get the following result.
\begin{corollary}\label{col:HLB_MDS}
If there exists an MDS $(n,q^k,d_d=n-k+1)$-code, then for any $(d_f-1)$-locally binary function $f$, 
$$r_f(k:d_d,d_f) \leq n-k+d_f-d_d=d_f-1.$$
\end{corollary}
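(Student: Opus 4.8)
The plan is to obtain this as an immediate specialization of Lemma~\ref{lem:LBF}. The only thing to verify before invoking that lemma is that the hypothesis ``systematic $[n,k,d_d]$ linear code'' is available here: an MDS $(n,q^k,d_d)_q$ code is in particular a linear $[n,k,d_d]$ code (with $k=\log_q q^k$), and any linear code can be brought to systematic form by Gaussian elimination on a generator matrix together with, if necessary, a permutation of coordinates; a coordinate permutation is an isometry of the Hamming space, so it preserves the length, dimension, and minimum distance, and in particular the MDS property. Hence there is a systematic $[n,k,d_d]$ linear code to feed into the construction of Subsection~\ref{subsec:Method}.

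Next I would apply Lemma~\ref{lem:LBF} verbatim with $\mathcal{C}$ this systematic MDS code: since $f$ is $(d_f-1)$-locally binary, the encoding $\mathfrak{C}_f(u)=(c_u,p_u)$ described in the proof of Lemma~\ref{lem:LBF} yields an $(f\!:d_d,d_f)$-FCC, so
\[
r_f(k:d_d,d_f) \le n-k+d_f-d_d.
\]
Finally I would substitute the defining relation of an MDS code, namely the Singleton bound met with equality, $d_d=n-k+1$, equivalently $n-k=d_d-1$; plugging this into the right-hand side gives $n-k+d_f-d_d=(d_d-1)+d_f-d_d=d_f-1$, which is the claimed bound.

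There is essentially no obstacle: the statement is a one-line corollary once Lemma~\ref{lem:LBF} is in hand, and the only (routine) point worth a sentence is the reduction of a general MDS code to systematic form, which is standard. One could additionally remark that this mirrors Corollary~\ref{col:HLB} (the perfect-code case), and that combining it with Corollary~\ref{col:MDG_MDS} would, for $d_f=d_d+1$, pin down $r_f(k:d_d,d_f)$ exactly — but that comparison is not needed for the proof of the stated inequality itself.
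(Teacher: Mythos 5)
Your proposal is correct and follows essentially the same route as the paper, which presents Corollary~\ref{col:HLB_MDS} as an immediate specialization of Lemma~\ref{lem:LBF}: run the two-step construction with an MDS code in Step 1 and substitute $n-k=d_d-1$ from the Singleton equality. The only imprecise point is your claim that an MDS $(n,q^k,d_d)$ code is automatically a linear $[n,k,d_d]$ code (nonlinear MDS codes exist, and the corollary's hypothesis does not say ``linear''), but this is harmless: the proof of Lemma~\ref{lem:LBF} only uses the systematic form and the minimum distance, and every MDS code projects bijectively onto any $k$ coordinates (the same property invoked in Lemma~\ref{lem:MDG_MDS}), so a systematic encoding with distance $d_d$ is always available.
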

Now, from Corollaries \ref{col:MDG_MDS} and \ref{col:HLB_MDS}, we have the following result.
\begin{corollary}\label{col:opti2}
   Let $f:\mathbb{F}_q^k \to \mathrm{Im}(f)$ be a $(d_f-1)$-locally binary function. Then the construction described in the proof of Lemma~\ref{lem:LBF} gives an optimal $(f\!:d_d,d_f)$-FCC for $d_f=d_d+1$, if there exists an $(n, q^k, d_d)$ MDS code.
\end{corollary}
\begin{proof}
   For an $(n, q^k, d_d)$ MDS code, we have 
   $d_d=n-k+1$.
   Using Corollary \ref{col:MDG_MDS}, we have $$r_f(k:d_d,d_f)\geq d_d.$$
   Using the proposed construction, we get the bound given in Corollary \ref{col:HLB_MDS} as
   $$r_f(k:d_d,d_f) \leq d_f-1=d_d.$$
   Therefore, optimality holds.
\end{proof}


\subsection{Locally bounded functions}
Locally bounded functions are a generalization of the locally binary functions introduced in~\cite{RRHH2025}, and are defined as follows.

\begin{definition}[Locally bounded function] \label{LBo_func}
A function $f:\mathbb{F}_q^k \rightarrow \mathrm{Im}(f)$ is said to be a \emph{$(\rho, \lambda)$-bounded function} if $|B_f (u, \rho)| \leq \lambda$ for all $u \in \mathbb{F}_q^k.$
\end{definition}

 The following two lemmas are used in the proof of Theorem~\ref{lem:LBoF}.

\begin{lemma}\label{lem:Col}\cite{RRHH2025}
Let $f:\mathbb{F}_2^k \to S$ be a locally $(\rho,\lambda)$-bounded function.
Assume that there exists a total order $\prec$ on $\mathrm{Im}(f)$ such that for every $u \in \mathbb{F}_2^k$, the set $B_f(u,\rho)$ forms a contiguous block with respect to $\prec$. Then there exists a mapping
$
\mathrm{Col}_f : \mathbb{F}_2^k \to [\lambda]
$
such that for any $u,v \in \mathbb{F}_2^k$ with $d(u,v) \le \rho$ and $f(u)\neq f(v)$, we have $\mathrm{Col}_f(u)\neq \mathrm{Col}_f(v)$.
\end{lemma}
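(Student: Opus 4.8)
The plan is to build the coloring explicitly from the given total order $\prec$. Enumerate $\mathrm{Im}(f)$ as a chain $a_1 \prec a_2 \prec \cdots \prec a_m$, write $\mathrm{rank}(a_i) = i$ for the position of a value in this chain, and define
$$\mathrm{Col}_f(u) = \bigl((\mathrm{rank}(f(u)) - 1) \bmod \lambda\bigr) + 1 \in [\lambda].$$
In words, each information vector is colored by the position of its function value along the order, reduced modulo $\lambda$. The correctness argument then rests on a single elementary observation, which I would state first: if two distinct integers both lie in a set of at most $\lambda$ consecutive integers, then their difference is nonzero and strictly less than $\lambda$, so they are incongruent modulo $\lambda$.

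Next I would verify the separation property. Take any $u,v \in \mathbb{F}_2^k$ with $d(u,v) \le \rho$ and $f(u) \neq f(v)$. Since $d(u,u) = 0 \le \rho$ and $d(u,v) \le \rho$, both $f(u)$ and $f(v)$ lie in the function ball $B_f(u,\rho)$. By the locally $(\rho,\lambda)$-bounded hypothesis we have $|B_f(u,\rho)| \le \lambda$, and by the standing assumption $B_f(u,\rho)$ is a contiguous block with respect to $\prec$; hence $\{\mathrm{rank}(a) : a \in B_f(u,\rho)\}$ is a set of at most $\lambda$ consecutive integers. The ranks of $f(u)$ and $f(v)$ are two distinct members of this set, so by the elementary observation they have different residues modulo $\lambda$, which yields $\mathrm{Col}_f(u) \neq \mathrm{Col}_f(v)$, as required.

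I do not anticipate a genuine obstacle here; the proof is short once the coloring is chosen correctly. The only point that needs a little care is to center the function ball at one of the two vectors — we center it at $u$ — so that the same ball simultaneously captures $f(u)$ and $f(v)$, and then to invoke the contiguity hypothesis precisely for that ball. Everything else is the modular-arithmetic bookkeeping above, together with noting that $\mathrm{Col}_f$ is well defined because reduction modulo $\lambda$ followed by a shift lands in $[\lambda]$.
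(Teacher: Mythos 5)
Your proof is correct: centering the function ball at $u$ captures both $f(u)$ and $f(v)$, the contiguity hypothesis makes their ranks two distinct members of a block of at most $\lambda$ consecutive integers, and reducing the rank modulo $\lambda$ therefore separates them. Note that the paper itself states this lemma without proof, citing \cite{RRHH2025}; your rank-mod-$\lambda$ coloring is the natural argument for it and matches the approach used in that reference, so there is nothing to add.
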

 
\begin{lemma}\label{N_4}\cite{RRHH2025}
Let $N(\lambda, 2t)$ be the minimum length of a binary error-correcting code with $\lambda$ codewords and minimum distance $2t$. Then $N(4, 2t)=3t$.
\end{lemma}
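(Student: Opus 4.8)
The plan is to prove the two inequalities $N(4,2t)\le 3t$ and $N(4,2t)\ge 3t$ separately, which together yield $N(4,2t)=3t$.

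For the upper bound, I would write down an explicit binary code of length $3t$ whose four codewords are pairwise at distance exactly $2t$. Partition the $3t$ coordinates into three blocks $A,B,C$ of size $t$, and let $c_1$ be the all-zero word, $c_2$ the indicator vector of $A\cup B$, $c_3$ the indicator of $A\cup C$, and $c_4$ the indicator of $B\cup C$. Any two of these words agree on exactly one block and disagree on the other two, so every pairwise distance equals $2t$; hence a code with the required parameters exists at length $3t$, i.e.\ $N(4,2t)\le 3t$.

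For the lower bound, take any binary code $\mathcal{C}=\{c_1,c_2,c_3,c_4\}\subseteq\mathbb{F}_2^n$ with all pairwise distances at least $2t$; the goal is $n\ge 3t$. The key step is to double-count the quantity $\Sigma:=\sum_{1\le i<j\le 4} d(c_i,c_j)$. On one hand $\Sigma\ge \binom{4}{2}\cdot 2t = 12t$. On the other hand, summing over coordinates: in a coordinate where exactly $a$ of the four codewords carry a $1$, the number of codeword pairs that disagree there is $a(4-a)\le 4$, so $\Sigma\le 4n$. Comparing, $12t\le 4n$, i.e.\ $n\ge 3t$.

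I do not expect a real obstacle: both halves are elementary. The one thing to get right is the choice of quantity to double-count, namely the per-coordinate count $a(4-a)$ of disagreeing pairs, which is maximized at the balanced value $a=2$; the explicit construction above is precisely the code that attains $a=2$ in every coordinate, which is why the lower and upper bounds meet. (The lower bound is just the Plotkin bound specialized to four codewords and even minimum distance, so one could alternatively quote it directly instead of redoing the counting.)
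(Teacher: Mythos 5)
Your proof is correct. Note that the paper does not prove this lemma itself---it imports it from \cite{RRHH2025}---so there is no in-paper argument to compare against; your two halves are exactly the standard route. Your lower bound is precisely the even-$M$ case of the generalized Plotkin bound that the paper already quotes from \cite{LBWY2023} (Lemma~\ref{Lem:Plot_from_[1]} with $M=4$ and all off-diagonal entries equal to $2t$ gives $N \ge \tfrac{4}{16}\cdot 12t = 3t$), so you could indeed cite that lemma instead of redoing the per-coordinate count $a(4-a)\le 4$. Your upper bound, the three-block construction with codewords $\mathbf{0}$, $\mathbf{1}_{A\cup B}$, $\mathbf{1}_{A\cup C}$, $\mathbf{1}_{B\cup C}$, is the balanced code that attains $a=2$ in every coordinate, and all six pairwise distances are exactly $2t$ as you claim. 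Both directions check out, and the observation that the construction saturates the counting bound is the right way to see why the two bounds meet.
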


The construction for FCC given for locally binary functions in the proof of Lemma \ref{lem:LBF} can be generalized for $(2t_f, \lambda)$-bounded functions, and we get the following result.

\begin{theorem}\label{lem:LBoF}
   For any \((2t_f, \lambda)\)-bounded function \(f\) satisfying the contiguous block condition given in Lemma~\ref{lem:Col}, and a systematic $[n, k, 2t_d+1]$ linear error-correcting code $\mathcal{C}$, we have
$$r_f (k, t_d, t_f) \le n-k+N(\lambda, 2(t_f-t_d)),$$
where $N(\lambda, d)$ is the minimum length of an error-correcting code with $\lambda$ codewords and minimum distance $d$.
Particularly, for $q=2$ and $\lambda=4$,
$$r_f (k, t_d, t_f) \le n-k+3(t_f-t_d).$$
\end{theorem}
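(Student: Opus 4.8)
The plan is to generalize the two-case argument from the proof of Lemma~\ref{lem:LBF}, replacing the single redundancy bit (which distinguished between two function values in a local ball) with a short error-correcting code that distinguishes between up to $\lambda$ values. First I would invoke Lemma~\ref{lem:Col}: since $f$ is $(2t_f,\lambda)$-bounded and satisfies the contiguous block condition, there is a colouring $\mathrm{Col}_f:\mathbb{F}_2^k\to[\lambda]$ such that $d(u,v)\le 2t_f$ and $f(u)\ne f(v)$ force $\mathrm{Col}_f(u)\ne\mathrm{Col}_f(v)$. Next, fix a binary code $\mathcal{A}=\{a_1,\dots,a_\lambda\}\subseteq\mathbb{F}_2^{N(\lambda,2(t_f-t_d))}$ with $\lambda$ codewords and minimum distance $2(t_f-t_d)$, which exists by definition of $N(\lambda,\cdot)$. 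Then, following Step~1 of the construction in Subsection~\ref{subsec:Method}, take the systematic $[n,k,2t_d+1]$ linear code $\mathcal{C}$ with $c_u=uG$, and define the encoding
$$
\mathfrak{C}_f(u)=(c_u,\,a_{\mathrm{Col}_f(u)})\in\mathbb{F}_2^{n+N(\lambda,2(t_f-t_d))}.
$$

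The verification that $\mathfrak{C}_f$ is an $(f,t_d,t_f)$-FCC splits exactly as in Lemma~\ref{lem:LBF}. For the data-protection requirement: for any $u\ne v$, since $\mathcal{C}$ is systematic and has minimum distance at least $2t_d+1$, we get $d(\mathfrak{C}_f(u),\mathfrak{C}_f(v))\ge d(c_u,c_v)\ge 2t_d+1$. For the function-protection requirement, suppose $f(u)\ne f(v)$. If $d(u,v)\ge 2t_f+1$, then $d(c_u,c_v)\ge d(u,v)\ge 2t_f+1$ because the systematic part of $c_u$ is $u$, and we are done. If $d(u,v)\le 2t_f$, then by Lemma~\ref{lem:Col} we have $\mathrm{Col}_f(u)\ne\mathrm{Col}_f(v)$, hence $d(a_{\mathrm{Col}_f(u)},a_{\mathrm{Col}_f(v)})\ge 2(t_f-t_d)$; combining with $d(c_u,c_v)\ge d(u,v)$ and splitting the Hamming distance across the two blocks,
$$
d(\mathfrak{C}_f(u),\mathfrak{C}_f(v))=d(c_u,c_v)+d(a_{\mathrm{Col}_f(u)},a_{\mathrm{Col}_f(v)})\ge (2t_d+1)+2(t_f-t_d)=2t_f+1.
$$
This establishes $r_f(k,t_d,t_f)\le n-k+N(\lambda,2(t_f-t_d))$. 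The specialization to $q=2$, $\lambda=4$ is then immediate from Lemma~\ref{N_4}, which gives $N(4,2(t_f-t_d))=3(t_f-t_d)$.

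I do not expect a genuine obstacle here — the argument is essentially a careful bookkeeping of which of the two blocks (the codeword $c_u$ or the colour-label $a_{\mathrm{Col}_f(u)}$) supplies the needed distance in each regime, and the only subtle point is making sure Case~1 (large $d(u,v)$) uses the systematic property of $\mathcal{C}$ while Case~2 (small $d(u,v)$) uses both the trivial bound $d(c_u,c_v)\ge d(u,v)\ge\max(d(u,v), 2t_d+1)$ and the colouring lemma. One should double-check that in Case~2 the bound $d(c_u,c_v)\ge 2t_d+1$ (rather than $\ge d(u,v)$) is the one to use, since $d(u,v)$ could be as small as $1$; this is exactly the role played by the error-correcting code chosen in Step~1. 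The contiguous-block hypothesis is needed only to apply Lemma~\ref{lem:Col} and otherwise plays no role in the distance computation.
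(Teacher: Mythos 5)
Your proposal is correct and follows essentially the same route as the paper's proof: apply Lemma~\ref{lem:Col} to get the colouring, append the codeword of a $\lambda$-word code of minimum distance $2(t_f-t_d)$ indexed by the colour, and split into the two cases $d(u,v)\ge 2t_f+1$ (systematic part supplies the distance) and $d(u,v)\le 2t_f$ (colour codeword supplies $2(t_f-t_d)$ on top of $d(c_u,c_v)\ge 2t_d+1$), finishing with Lemma~\ref{N_4} for $\lambda=4$. The only cosmetic blemish is the garbled remark "$d(c_u,c_v)\ge d(u,v)\ge\max(d(u,v),2t_d+1)$" at the end, but your displayed inequality correctly uses the bound $d(c_u,c_v)\ge 2t_d+1$, exactly as the paper does.
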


\begin{proof}
The proof follows in a similar manner as the proof of Lemma \ref{lem:LBF}. 
Let $f$ be a locally $(2t_f, \lambda)$-bounded function, and $\mathcal{C}$ be an $[n, k, 2t_d+1]$ linear systematic error-correcting code with a generator matrix $G$ of order $k \times n$. For any $u\in \mathbb{F}_q^k$, we denote the corresponding codeword as $c_u=uG$. 

By Lemma \ref{lem:Col}, there exists a mapping $\mathrm{Col}_f:\mathbb{F}_q^k \rightarrow [\lambda]$ for  a function $f$ such that for any $u,v \in \mathbb{F}_q^k $,  $\mathrm{Col}_f (u) \neq \mathrm{Col}_f(v)$ if $f(u)\ne f(v)$ and $d(u,v)\leq 2t_f$. Let $\mathcal{C}'$ be an error-correcting code with $\lambda$ codewords, minimum distance ${d'}=2(t_f-t_d)$ and length $N(\lambda, d')$, and let the codewords of $\mathcal{C}'$ be denoted by $c'_1, c'_2, \ldots, c'_{\lambda}$.
Define an encoding function $\mathfrak{C}_f: \mathbb{F}_q^k \rightarrow  \mathbb{F}_q^{n+N(\lambda, d')}$ as 
$$\mathfrak{C}_f(u)=(c_u, p_u), \ \text{where}  \ p_u =c'_{ \mathrm{Col}_f(u)}.$$
 Now we prove that the encoding function defined above is an $(f, t_d, t_f)$-FCC with redundancy $r=n-k+N(\lambda, d')$. 
 For any $u,v \in \mathbb{F}_q^k$,
 $
d(\mathfrak{C}_f(u), \mathfrak{C}_f(v)) = d(c_u, c_v) + d(p_u, p_v).
$
 Since $\mathcal{C}$ is a linear code with minimum distance $2t_d+1$, for any $u,v \in \mathbb{F}_q^k$ such that $u \neq v$, we have $d(\mathfrak{C}_f(u), \mathfrak{C}_f(v))\geq d(c_u,c_v)\geq 2t_d+1.$
 Now let $u,v \in \mathbb{F}_q^k$ be such that $f(u) \neq f(v)$.
Then we have the following two possible cases with vectors $u$ and $v$.

\textit{Case 1}: If $d(u,v) \geq 2t_f+1$, then we have
$d(\mathfrak{C}_f(u), \mathfrak{C}_f(v)) = d(c_u, c_v) + d(p_u, p_v)\geq 2t_f+1,$ as the code $\mathcal{C}$ is in systematic form.

\textit{Case 2}: If $d(u,v) \leq 2t_f$ then $f(v) \in B_f(u, 2t_f)$, and by the definition of  $\mathrm{Col}_f$, we have $\mathrm{Col}_f(u) \neq \mathrm{Col}_f(v)$. Therefore, $d(p_u, p_v) =d(c'_{\mathrm{Col}_f(u)}, c'_{\mathrm{Col}_f(v)}) \geq d'=2(t_f-t_d)$ as the minimum distance of the code $\mathcal{C}'$ is $d'$. Since $u \neq v$, we have $d(c_u,c_v) \geq 2t_d+1$ and 
$d(\mathfrak{C}_f(u), \mathfrak{C}_f(v)) = d(c_u, c_v) + d(p_u, p_v)\geq 2t_d+1+2(t_f-t_d)=2t_f+1.$

Further, from Lemma \ref{N_4}, we have $N(4, 2(t_f-t_d))=3(t_f-t_d)$ for $q=2$. Therefore, for $\lambda=4, q=2$, it follows that $r_f (k, t_d, t_f) \le n-k+3(t_f-t_d).$
\end{proof}

\subsection{Hamming weight function}
Hamming weight function is defined as $f:\mathbb{F}_q^k\rightarrow \{0,1,2,\ldots,k\},\ f(u)=wt(u)$, where $wt(u)$ denotes the Hamming weight of the vector $u$. As clear from the definition, this function is a $(2t_f, 4t_f+1)$- bounded function, and for
the Hamming weight function each neighborhood of $B_f (u, 2t_f)$ forms a contiguous block in the natural order of weights. Therefore, using Theorem  \ref{lem:LBoF}, we get 
$$r_f (k, t_d, t_f) \le n-k+N(4t_f+1, 2(t_f-t_d)).$$
However, we can find a tighter bound on the redundancy of the FCC for Hamming weight function given as follows.
\begin{lemma}\label{lem:HF}
   For Hamming weight function $f:\mathbb{F}_q^k \rightarrow \mathrm{Im}(f)$, and a systematic $[n, k, 2t_d+1]$ linear error-correcting code $\mathcal{C}$, we have
$$r_f (k, t_d, t_f) \le n-k+N(2t_f+1, 2(t_f-t_d)),$$
or 
$$r_f (k, t_d, t_f) \le N(q^k, 2t_d+1)+N(2t_f+1, 2(t_f-t_d))-k,$$
where $N(\lambda, d)$ is the minimum length of an error-correcting code with $\lambda$ codewords and minimum distance $d$.
\end{lemma}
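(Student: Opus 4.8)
The plan is to re-use the template from the proof of Theorem~\ref{lem:LBoF}, appending a short parity block built from a small error-correcting code, but to choose the ``colour'' of a vector $u$ to be $\mathrm{wt}(u) \bmod (2t_f+1)$ rather than the generic $\mathrm{Col}_f$ of Lemma~\ref{lem:Col}. The point is that the Hamming weight function has stronger structure than a general $(2t_f,4t_f+1)$-bounded function: if $f(u)\ne f(v)$ and $d(u,v)\le 2t_f$, then automatically $0 < |\mathrm{wt}(u)-\mathrm{wt}(v)| \le 2t_f$, so the only pairs of weights that ever need to be separated lie in a common window of $2t_f+1$ consecutive integers, on which reduction modulo $2t_f+1$ is injective. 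This is what lets us use $2t_f+1$ colours instead of $4t_f+1$.

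Concretely, I would fix a systematic $[n,k,2t_d+1]$ linear code $\mathcal{C}$ with generator matrix $G$, write $c_u=uG$, and let $\mathcal{C}'=\{c'_0,\dots,c'_{2t_f}\}$ be an error-correcting code with $2t_f+1$ codewords, minimum distance $2(t_f-t_d)$, and length $N(2t_f+1,2(t_f-t_d))$. Define $\mathfrak{C}_f(u)=(c_u,p_u)$ with $p_u=c'_{\,\mathrm{wt}(u)\bmod (2t_f+1)}$; the redundancy is then $n-k+N(2t_f+1,2(t_f-t_d))$, and the second displayed bound follows by taking $\mathcal{C}$ of shortest possible length $n=N(q^k,2t_d+1)$. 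To check that this is an $(f,t_d,t_f)$-FCC, note first that for $u\ne v$ we have $d(\mathfrak{C}_f(u),\mathfrak{C}_f(v))\ge d(c_u,c_v)\ge 2t_d+1$, which is the data-protection condition. For $f(u)\ne f(v)$, split into cases: if $d(u,v)\ge 2t_f+1$ then $d(c_u,c_v)\ge d(u,v)\ge 2t_f+1$ since $\mathcal{C}$ is systematic; and if $d(u,v)\le 2t_f$ then, by the observation above, $\mathrm{wt}(u)\not\equiv\mathrm{wt}(v)\pmod{2t_f+1}$, so $p_u\ne p_v$ are distinct codewords of $\mathcal{C}'$, giving $d(p_u,p_v)\ge 2(t_f-t_d)$, and hence $d(\mathfrak{C}_f(u),\mathfrak{C}_f(v))\ge (2t_d+1)+2(t_f-t_d)=2t_f+1$.

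There is no serious technical obstacle: the entire content is the structural observation in the first paragraph that $f(u)\ne f(v)$ together with $d(u,v)\le 2t_f$ forces the two weights into a window of length $2t_f+1$. Once that is isolated, the distance bookkeeping is identical to that of Theorem~\ref{lem:LBoF} and Lemma~\ref{lem:LBF}, and the only thing to be slightly careful about is the systematic hypothesis on $\mathcal{C}$, which is precisely what simultaneously yields $d(c_u,c_v)\ge d(u,v)$ (used in the first case) and $d(c_u,c_v)\ge 2t_d+1$ (used in the second case).
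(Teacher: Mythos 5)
Your proposal is correct and matches the paper's own proof: the same parity $p_u=c'_{\,\mathrm{wt}(u)\bmod(2t_f+1)}$ appended to a systematic $[n,k,2t_d+1]$ codeword, with the same distance bookkeeping. The only cosmetic difference is that you split cases on $d(u,v)\lessgtr 2t_f+1$ while the paper splits on $|\mathrm{wt}(u)-\mathrm{wt}(v)|$, which is equivalent via $|\mathrm{wt}(u)-\mathrm{wt}(v)|\le d(u,v)$.
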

\begin{proof}
      Let $G$ be a generator matrix of code $\mathcal{C}$, and for any $u\in \mathbb{F}_q^k$, $c_u=uG$. Furthermore, let $\mathcal{C}'$ be an error-correcting code with $2t_f+1$ codewords, minimum distance $2(t_f-t_d)$, and length $N(2t_f+1, 2(t_f-t_d))$. Denote the codewords of $\mathcal{C}'$ by $c'_0, c'_1, \ldots, c'_{2t_f}$. Define an encoding function $\mathfrak{C}_f: \mathbb{F}_q^k \rightarrow  \mathbb{F}_q^{n+N(2t_f+1, 2(t_f-t_d))}$ as 
$$\mathfrak{C}_f(u)=(c_u, p_u), \ \text{where} \  p_u = c'_{f(u) \bmod{(2t_f+1)}}.$$
 Now we prove that the encoding function defined above is a $(f, t_d, t_f)$-FCC. Since $\mathcal{C}$ is a linear code with minimum distance $2t_d+1$, for any $u,v \in \mathbb{F}_q^k$ such that $u \neq v$, we have $d(\mathfrak{C}_f(u), \mathfrak{C}_f(v))\geq d(c_u,c_v)\geq 2t_d+1.$
 Now let $u,v \in \mathbb{F}_q^k$ be such that $f(u) \neq f(v)$. Then we have the following cases.

\textit{Case 1}: If $0<|f(u)-f(v)| \leq 2t_f$, then
$c'_{f(u) \bmod{(2t_f+1)}} \neq c'_{f(v) \bmod{(2t_f+1)}}$. Since the minimum distance of the code $\mathcal{C}'$ is $2(t_f-t_d)$, we have $d(p_u, p_v) \geq 2(t_f-t_d)$. Since $u \neq v$, we have $d(c_u,c_v) \geq 2t_d+1$ and 
$d(\mathfrak{C}_f(u), \mathfrak{C}_f(v)) = d(c_u, c_v) + d(p_u, p_v)\geq 2t_f+1.$

\textit{Case 2}:  If $|f(u) - f(v)| > 2t_f$, then WLOG assuming $f(u)>f(v)$, we have $f(u)-f(v)  \geq 2t_f+1$. 
Since $d(u,v) \geq wt(u)-wt(v)$, we have 
$d(u, v) \geq 2t_f+1$, and
$d(\mathfrak{C}_f(u), \mathfrak{C}_f(v)) = d(c_u, c_v) + d(p_u, p_v)\geq 2t_f+1.$
\end{proof}

\begin{example}
    Consider the Hamming weight function $f$ over the space $\mathbb{F}_2^8$, and let $t_d = 1$ and $t_f = 2$. For a linear code, we have $N(q^k, 2t_d + 1) = N(2^8, 3) = 12$. Additionally, $N(2t_f + 1, 2(t_f - t_d)) = N(5, 2) = 4$. Therefore, the redundancy bound satisfies  
$$r_f(8, 1, 2) \leq 12 + 4 - 8 = 8.$$
Furthermore, an $(f, 1, 2)$-FCC with this level of redundancy can be constructed using our proposed method: by employing a $[12, 8, 3]$ linear code in Step 1, and a $(4, 5, 2)$ code as $\mathcal{C}'$ in Step 2, following the procedure explained in the proof of Lemma~\ref{lem:HF}.
\end{example}


\section{Linear $(f: d_d,d_f)$-FCC} \label{linearFCC}
In this section, we consider linear function-correcting codes with data protection, where linearity is defined in the same way as in traditional error-correcting codes. We then focus on linear $(f\!:d_d,d_f)$-FCCs for linear functions, and present their connection to coset codes.

\begin{definition}
    For a function $f:\mathbb{F}_q^k \rightarrow \mathrm{Im}(f)$ and two nonzero integers $d_d$ and $d_f$ such that $d_d < d_f$, an $(f\!:d_d,d_f)$-FCC of length $n$ is called a linear FCC if it is a subspace of the vector space $\mathbb{F}_q^n$ with dimension $k$.
\end{definition}
As clear from the definition, a linear $(f\!:d_d,d_f)$-FCC is also a linear $[n,k,d_d]$ error-correcting code. Any linear code can be expressed using its generator matrix $G$ of size $k \times n$ or its parity-check matrix $H$ of size $(n-k) \times n$. If the generator matrix $G$ has the form $G=[I_k \mid P]$, where $I_k$ denotes the $k \times k$ identity matrix and $P$ is some $k \times (n-k)$ matrix, then $G$ is in standard form. From \cite[Theorem 4.6.3]{LX2004}, we know that any linear code $C$ is equivalent to a linear code $C'$ with a generator matrix in standard form. So in our further discussions, we will only consider linear codes in standard form unless otherwise mentioned specifically. Therefore, a linear FCC can also be defined as follows.

   For a function $f:\mathbb{F}_q^k \rightarrow \mathrm{Im}(f)$, a subspace $\mathcal{C}$ of the vector space $\mathbb{F}_q^n$ with dimension $k$ is called a linear $(f\!:d_d,d_f)$-FCC if 
  \begin{enumerate}
      \item weight of any non-zero codeword in $\mathcal{C}$ is at least $d_d$.
      \item For any $c_1=(u_1,p_1), c_2=(u_2,p_2) \in \mathcal{C}$ such that $f(u_1)\neq f(u_2)$, we have $d(c_1,c_2)\geq d_f.$
  \end{enumerate}

We note that linear $(f\!:d_d,d_f)$-FCCs can exist for both linear and nonlinear functions $f$. The following example illustrates that even when $f$ is nonlinear, it is possible to obtain a linear FCC. In the remainder of the paper, however, we will focus on linear FCCs corresponding to linear functions.
\begin{example}
Consider the nonlinear function $f:\mathbb{F}_2^3\to\mathbb{F}_2$ defined as $f(x_1,x_2,x_3)=x_1x_2.$
 Let $\mathcal C$ be the binary linear code generated by
$$
G=\begin{bmatrix}
0&0&1&1&1&1&0&0&1\\
1&1&0&0&1&1&0&1&0\\
0&0&0&0&0&0&1&1&1
\end{bmatrix}.
$$
The minimum distance of $\mathcal{C}$ is $d_d=3$. Further, whenever $f(u)\neq f(v)$, the corresponding codewords satisfy $d(c_u,c_v) \geq 5.$
Therefore, $d_f=5$, and $\mathcal{C}$ is a linear $(f:3,5)$-FCC. The corresponding function values and codewords for the message vectors are given below.
$$
\begin{matrix}
\text{Message $u$} & \text{Function value }f(u) & \text{Codeword }c_u \\
000 & 0 & 000000000\\
001 & 0 & 000000111\\
010 & 0 & 110011010\\
011 & 0 & 110011101\\
100 & 0 & 001111001\\
101 & 0 & 001111110\\
110 & 1 & 111100011\\
111 & 1 & 111100100
\end{matrix}
$$
\end{example}

\subsection{Linear $(f\!:d_d,d_f)$-FCC for linear functions}

If $f:\mathbb{F}_q^k \rightarrow \mathrm{Im}(f)$ is a linear function, then Ker$(f)$ is a subspace of $\mathbb{F}_q^k$ and the coset partition of Ker$(f)$ partitions the space $\mathbb{F}_q^k$.

\begin{lemma}
    If $f:\mathbb{F}_q^k \rightarrow \mathrm{Im}(f)$ is a linear function and $\mathcal{C}$ is a linear $(f\!:d_d,d_f)$-FCC of length $n$ in standard form, then the subcode $D_f=\{c=(u,p) \in \mathcal{C} \mid u \in \text{Ker}(f)\}$ forms a subspace of $\mathcal{C}$. Furthermore, the dimension of $D_f$ is the same as the dimension of Ker$(f)$.
\end{lemma}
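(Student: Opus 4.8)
The plan is to use the standard-form hypothesis to identify $\mathcal{C}$ with $\mathbb{F}_q^k$ through its information coordinates. Since $\mathcal{C}$ has generator matrix $G=[I_k \mid P]$, every codeword has the shape $c_u=(u,uP)$ for a \emph{unique} $u\in\mathbb{F}_q^k$; equivalently, the projection $\pi:\mathcal{C}\to\mathbb{F}_q^k$ onto the first $k$ coordinates, $\pi(u,p)=u$, is a linear isomorphism, and the redundancy block $p$ is forced to equal $uP$. With this dictionary in hand, $D_f=\{c_u\mid u\in\mathrm{Ker}(f)\}=\pi^{-1}(\mathrm{Ker}(f))$.

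First I would check that $D_f$ is a subspace of $\mathcal{C}$. Because $f$ is linear, $\mathrm{Ker}(f)$ is a subspace of $\mathbb{F}_q^k$, and since $\pi$ is a linear map, the preimage $D_f=\pi^{-1}(\mathrm{Ker}(f))$ is a subspace of $\mathcal{C}$. Equivalently and directly: $0\in D_f$, and for $c_i=(u_i,p_i)\in D_f$ with $i=1,2$ and scalars $\alpha,\beta\in\mathbb{F}_q$, the vector $\alpha c_1+\beta c_2=(\alpha u_1+\beta u_2,\ \alpha p_1+\beta p_2)$ lies in $\mathcal{C}$ by linearity of $\mathcal{C}$, and its information part $\alpha u_1+\beta u_2$ lies in $\mathrm{Ker}(f)$ since $\mathrm{Ker}(f)$ is a subspace; hence $\alpha c_1+\beta c_2\in D_f$.

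For the dimension statement, I would restrict the isomorphism $\pi$ to $D_f$. By the definition of $D_f$, this restriction maps $D_f$ \emph{onto} $\mathrm{Ker}(f)$ (each $u\in\mathrm{Ker}(f)$ is the image of $c_u=(u,uP)\in D_f$), and it is injective as the restriction of an injective map; hence $\pi|_{D_f}:D_f\to\mathrm{Ker}(f)$ is an isomorphism, so $\dim D_f=\dim\mathrm{Ker}(f)$.

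There is no genuinely hard step here; the only point that must not be glossed over is the explicit use of the standard-form hypothesis. It is precisely standard form that makes $u\mapsto c_u$ a bijection, so that ``the information part lies in $\mathrm{Ker}(f)$'' is a well-posed condition on codewords and so that the restricted projection lands surjectively on all of $\mathrm{Ker}(f)$; without it one could only conclude an inequality between the two dimensions. Since the excerpt has already reduced attention to standard-form codes via \cite[Theorem 4.6.3]{LX2004}, invoking this hypothesis is harmless.
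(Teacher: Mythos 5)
Your proof is correct and, for the subspace claim, takes essentially the same route as the paper: a direct check that a linear combination of elements of $D_f$ stays in $\mathcal{C}$ and has information part in $\mathrm{Ker}(f)$ (your preimage-under-$\pi$ phrasing is just a repackaging of this). You go slightly further than the paper's proof, which only verifies the subspace property: your observation that the standard form makes $\pi|_{D_f}:D_f\to\mathrm{Ker}(f)$ a bijection supplies the argument for the dimension statement $\dim D_f=\dim\mathrm{Ker}(f)$, which the lemma asserts but the paper's proof leaves unproved.
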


\begin{proof}
    Let $d_1,d_2 \in D_f$, then $d_1=(u_1, p_1)$ and $d_2=(u_2,p_2)$ for some $u_1, u_2 \in \text{Ker}(f)$. For $\lambda, \mu \in \mathbb{F}_q$, we have 
    $$\lambda d_1+ \mu d_2=\lambda (u_1, p_1) + \mu(u_2, p_2)= (\lambda u_1+ \mu u_2, \lambda p_1+ \mu p_2).$$
    Since Ker$(f)$ is a subspace of $\mathbb{F}_q^k$, we have $\lambda u_1+ \mu u_2 \in \text{Ker}(f)$, and hence $\lambda d_1+ \mu d_2 \in D_f$. Therefore, $D_f$ is a subspace of $\mathcal{C}$.
\end{proof}

From the above lemma, a linear $(f;d_d, d_f)$-FCC for a linear function $f$ has a minimum distance $d_d$ and the minimum distance between cosets of $D_f$ is $d_f$, that is, for any $c_1 \in v_i+D_f$ and $c_2 \in v_j+D_f$, we have $d(c_1,c_2) \geq d_f$, where $v_i+D_f$ and $v_j+D_f$ are two different cosets of $D_f$ in $\mathcal{C}$. Since $d(x,y)=wt(x-y)$ for any two vectors $x$ and $y$, we have
$$\min \{\text{wt}(c_1 - c_2) \mid c_1 \in v_i+D_f,c_2\in v_j+D_f\} \geq d_f.$$

\begin{lemma}
  Let $\mathcal{C}$  be a subspace of $\mathbb{F}_q^n$ and $D$ be a subspace of $C$. Then for any $v_i, v_j \in \mathcal{C}$ such that $v_i\ne v_j$,
   $$
      \min \{\text{wt}(c_1 - c_2) \mid c_1 \in v_i+D,c_2\in v_j+D\} = \min_{c\in (v_i-v_j)+D}\text{wt}(c)\\= \min_{d\in D}\text{wt}(v_i-v_j+d). 
   $$
\end{lemma}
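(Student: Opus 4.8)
The plan is to prove the two claimed equalities by exhibiting elementary set-theoretic identifications between the relevant collections of vectors and then reducing the optimization to a single coset. The statement is really a routine consequence of the fact that the Hamming distance is translation invariant, i.e.\ $d(x,y)=\mathrm{wt}(x-y)$, combined with the group structure of a vector space, so the main work is just bookkeeping about which differences $c_1-c_2$ can arise.

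First I would establish the right-hand equality $\min_{c\in(v_i-v_j)+D}\mathrm{wt}(c)=\min_{d\in D}\mathrm{wt}(v_i-v_j+d)$, which is immediate: the coset $(v_i-v_j)+D$ is by definition the set $\{(v_i-v_j)+d\mid d\in D\}$, so the two minima are taken over literally the same set of vectors. This step is essentially a change of notation and requires no argument beyond unwinding the definition of a coset.

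Next I would prove the left-hand equality by a double inclusion / two-inequalities argument on the sets of attainable weights. For the inequality $\min\{\mathrm{wt}(c_1-c_2)\mid c_1\in v_i+D,\ c_2\in v_j+D\}\ge \min_{c\in(v_i-v_j)+D}\mathrm{wt}(c)$: given any $c_1=v_i+d_1$ and $c_2=v_j+d_2$ with $d_1,d_2\in D$, we have $c_1-c_2=(v_i-v_j)+(d_1-d_2)$, and since $D$ is a subspace, $d_1-d_2\in D$, so $c_1-c_2\in(v_i-v_j)+D$; hence the weight of any such difference is bounded below by the minimum over that coset. For the reverse inequality, take any $c=(v_i-v_j)+d$ with $d\in D$ achieving (or approaching) the minimum weight over the coset, and set $c_1=v_i+d\in v_i+D$ and $c_2=v_j+0=v_j\in v_j+D$; then $c_1-c_2=c$, so the minimum on the left is at most $\mathrm{wt}(c)$. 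Combining the two inequalities gives equality.

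I do not anticipate a genuine obstacle here; the only point requiring any care is confirming that $d_1-d_2$ (respectively $d$) lies in $D$, which is exactly where the hypothesis that $D$ is a \emph{subspace} of $\mathcal{C}$ is used — closure under subtraction is what makes the set of attainable differences a single coset rather than something larger. One should also note that the equalities hold verbatim whether or not $v_i\ne v_j$; the hypothesis $v_i\ne v_j$ is only needed so that $(v_i-v_j)+D$ is a nonzero coset and the minimum weight is positive, which is the form in which the lemma is applied in the surrounding discussion about minimum distance between cosets of $D_f$.
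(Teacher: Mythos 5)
Your proof is correct and follows essentially the same route as the paper, which simply invokes the coset identity $(x+D)+(y+D)=(x+y)+D$; your double-inclusion argument just spells out the detail the paper leaves implicit, namely that the set of differences $\{c_1-c_2 \mid c_1\in v_i+D,\ c_2\in v_j+D\}$ is exactly the coset $(v_i-v_j)+D$.
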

\begin{proof}
    This is easy to prove as $(x+D)+(y+D)=(x+y)+D$ for any $x,y\in C$.
\end{proof}

This means the second condition in a linear $(f; d_d, d_f)$-FCC for a linear function $f$ will be 
$$\min_{d\in D_f}\text{wt}(v_i-v_j+d) \geq d_f,$$
where $v_i$ and $v_j$ are the vectors in different cosets of $D_f$, i.e., $v_i-v_j \notin D_f$.

Since $D_f$ is a subspace, it contains zero vector. Therefore, to maintain the coset distance requirement, the following result holds.

\begin{lemma}
   Let $\mathcal{C}$ be a linear $(f\!:d_d,d_f)$-FCC of a linear function $f$. Then $\text{wt}(v) \geq d_f$ for all $v\in \mathcal{C}\setminus D_f$.
\end{lemma}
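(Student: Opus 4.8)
The plan is to exploit the fact that a linear code always contains the zero codeword, which in standard form is the encoding of the zero message vector, together with the fact that a linear function vanishes at the origin. First I would record that, since $\mathcal{C}$ is linear and (by the standing convention of this section) in standard form, the all-zero vector lies in $\mathcal{C}$ and has the form $\mathfrak{C}_f(0)=(0,0)$; moreover, because $f$ is linear, $f(0)=0$, so $0\in\mathrm{Ker}(f)$ and hence the zero codeword belongs to $D_f$.

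Next, take an arbitrary $v\in\mathcal{C}\setminus D_f$ and write it in standard form as $v=(u,p)$ with $u\in\mathbb{F}_q^k$. By the definition of the subcode $D_f$, the condition $v\notin D_f$ is exactly $u\notin\mathrm{Ker}(f)$, i.e. $f(u)\neq 0=f(0)$. Applying the second defining property of an $(f\!:d_d,d_f)$-FCC to the pair of information vectors $u$ and $0$, whose function values differ, gives
$$
\mathrm{wt}(v)=d\big(\mathfrak{C}_f(u),\mathfrak{C}_f(0)\big)\geq d_f,
$$
which is the assertion. Equivalently, one may phrase this through the immediately preceding lemma: the coset $v+D_f$ is distinct from the coset $D_f$ (the coset containing the zero codeword), so the minimum weight over $v+D_f$ is at least $d_f$, and evaluating at the representative obtained by adding the zero element of $D_f$ yields $\mathrm{wt}(v)\geq d_f$.

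There is essentially no obstacle here; the only point requiring a little care is the appeal to the standard form, which is what guarantees simultaneously that the zero codeword is present in $\mathcal{C}$ and that it corresponds to the zero information vector, so that the coset of the zero codeword is precisely $D_f$. Without the standing assumption that all linear codes are taken in standard form, one would instead need to note that some codeword of $D_f$ still realizes $f=0$ on its information part and argue from there; but this subtlety is already removed by the conventions fixed at the start of Section~\ref{linearFCC}.
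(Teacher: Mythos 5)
Your proposal is correct and follows essentially the paper's reasoning: the paper justifies the lemma by noting that $D_f$ contains the zero vector, so the coset-distance (equivalently, the different-function-value distance) requirement applied to $v$ and the zero codeword gives $\mathrm{wt}(v)=d(v,0)\geq d_f$, which is exactly your argument (your coset rephrasing at the end matches the paper's phrasing most closely).
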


\begin{example}
Consider a linear function $f:\mathbb{F}_2^2 \rightarrow \mathrm{Im}(f)$ with Ker$(f)=\{00, 11\}$. Then the following linear code $\mathcal{C}$ has the minimum distance $d_d=2$ and the minimum coset distance $d_f=3$.
$$\mathcal{C}=\{0000, 0111, 1100, 1011\}.$$
Therefore, it is an $(f:2,3)$-FCC.
\end{example}

Since the second condition in a linear 
 $(f: d_d,d_f)$-FCC for a linear function $f$ is entirely determined by the coset structure of Ker$(f)$, we can equivalently redefine these codes in terms of coset codes as follows.

\begin{definition}[Coset code]
    Let $ C \subseteq \mathbb{F}_q^n $ be a subspace of the vector space $ \mathbb{F}_q^n $, and $ D \subseteq C$ be a subspace of $C$.
The coset code $C/D$ is the collection of all cosets of $D$ in $C$, i.e.,
$$C/D=\{v+D \mid v \in C\}$$
The set of all such cosets partitions the entire subspace $C$ into $q^{k-r}$ disjoint subsets, where Dim$(C)=k$ and Dim$(D)=r$. 
\end{definition}

Here, we follow \cite[ch. 3]{MakkonenPhD2025} that provides a convenient collection of definitions and results related to coset codes matching our purpose.
The minimum distance of a coset code is defined as
$$d(C/D)=\min_{u,v\in C, u-v \notin D}\{\min\{d(c_1,c_2) \mid c_1 \in u+D, c_2 \in v+D\}\}.$$

\begin{definition}[Linear $(f\!:d_d,d_f)$-FCC for linear function]\label{def:linearFCC}
     For a linear function $f:\mathbb{F}_q^k \rightarrow \mathrm{Im}(f)$, a subspace $\mathcal{C}$ of the vector space $\mathbb{F}_q^n$ with dimension $k$ is called a linear $(f\!:d_d,d_f)$-FCC if 
  \begin{enumerate}
      \item $d(\mathcal{C})\geq d_d$.
      \item $d(\mathcal{C}/D_f)\geq d_f,$ where $D_f=\{c=(u,p) \in \mathcal{C} \mid u \in \text{Ker}(f)\}$
  \end{enumerate}
\end{definition}

Using the following straightforward lemma, we demonstrate how linear FCCs for linear functions can be systematically obtained via our proposed two-step construction method.

\begin{lemma}[Linearity of Concatenated Code via a Linear Function]\label{Linear_cat}
Let 
$f:\mathbb{F}_q^k \to \mathbb{F}_q^{\ell}$ be a linear function. 
Further, let $\mathcal{C} \subseteq \mathbb{F}_q^n$ and 
$\mathcal{D} \subseteq \mathbb{F}_q^{r'}$ be linear codes of dimensions 
$k$ and $\ell$, respectively, represented by the linear maps
$
C:\mathbb{F}_q^k \to \mathbb{F}_q^{n},\ 
D:\mathbb{F}_q^{\ell} \to \mathbb{F}_q^{r'}.
$
as usual. Then the concatenated code
\[
\mathcal{C}_{\mathrm{cat}} 
= \{\, (C(u),\, D(f(u))) \;:\; u \in \mathbb{F}_q^k \,\}
\]
is a linear code in $\mathbb{F}_q^{\,n+r'}$ of dimension $k$.
\end{lemma}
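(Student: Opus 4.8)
The plan is to exhibit $\mathcal{C}_{\mathrm{cat}}$ as the image of a single injective linear map from $\mathbb{F}_q^k$, which immediately yields both that it is a subspace and that its dimension is $k$. Concretely, I would define
\[
\Phi : \mathbb{F}_q^k \to \mathbb{F}_q^{\,n+r'}, \qquad \Phi(u) = \bigl(C(u),\, D(f(u))\bigr),
\]
so that $\mathcal{C}_{\mathrm{cat}} = \Phi(\mathbb{F}_q^k)$ by definition. The first task is to check that $\Phi$ is $\mathbb{F}_q$-linear. This follows by a one-line computation: for $\lambda,\mu \in \mathbb{F}_q$ and $u,v \in \mathbb{F}_q^k$, linearity of $C$, of $f$, and of $D$ gives
\[
\Phi(\lambda u + \mu v) = \bigl(C(\lambda u+\mu v),\, D(f(\lambda u+\mu v))\bigr) = \bigl(\lambda C(u)+\mu C(v),\, \lambda D(f(u))+\mu D(f(v))\bigr) = \lambda\,\Phi(u) + \mu\,\Phi(v),
\]
where in the second equality I use $f(\lambda u+\mu v) = \lambda f(u)+\mu f(v)$ and then linearity of $D$. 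Hence $\mathcal{C}_{\mathrm{cat}} = \operatorname{Im}\Phi$ is a subspace of $\mathbb{F}_q^{\,n+r'}$, establishing linearity.

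For the dimension claim, I would argue that $\Phi$ is injective. Since $\mathcal{C}$ has dimension $k$ and is represented by the linear map $C:\mathbb{F}_q^k \to \mathbb{F}_q^n$ (equivalently, $C$ is given by a $k\times n$ generator matrix of rank $k$), the map $C$ itself is injective. If $\Phi(u) = \Phi(u')$ then in particular their first $n$ coordinates agree, i.e. $C(u) = C(u')$, forcing $u = u'$. Therefore $\Phi$ is an injective linear map, so $\dim \mathcal{C}_{\mathrm{cat}} = \dim \operatorname{Im}\Phi = \dim \mathbb{F}_q^k = k$, which completes the proof. One may also note explicitly that if $G$ is a $k\times n$ generator matrix of $\mathcal{C}$, $G'$ an $\ell\times r'$ generator matrix of $\mathcal{D}$, and $F$ the $k\times \ell$ matrix of $f$, then $\mathcal{C}_{\mathrm{cat}}$ has generator matrix $[\,G \mid F G'\,]$, whose left block $G$ has full row rank $k$.

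There is no real obstacle here; the statement is elementary and the only point requiring a moment's care is the justification that the representing map $C$ is injective, which is exactly the content of $\mathcal{C}$ having dimension $k$ (the systematic/standard form assumption in force throughout the section makes this transparent, as the first $k$ coordinates of $C(u)$ are $u$ itself). The subtlety worth flagging to the reader is merely that linearity of the concatenation does depend on $f$ being linear — without it, $u \mapsto D(f(u))$ need not be linear — which is precisely why this lemma is stated only for linear functions $f$.
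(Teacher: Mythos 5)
Your proposal is correct and follows essentially the same route as the paper: you verify that the encoding $u \mapsto (C(u), D(f(u)))$ is linear using linearity of $C$, $f$, and $D$ (the paper phrases this as closure of $\mathcal{C}_{\mathrm{cat}}$ under linear combinations), and you establish dimension $k$ by the same injectivity argument from the first block, $C(u)=C(u')\Rightarrow u=u'$ since $\mathcal{C}$ has dimension $k$. The explicit generator matrix $[\,G \mid FG'\,]$ is a nice addition but not needed.
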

Intuitively, the linear function $f$ partitions $C$ into $q^{\ell}$ subsets, and 
the appended block $D(f(u))$ depends \emph{linearly} on the subset label.
The mapping $u \mapsto (C(u), D(f(u)))$ is therefore linear, so its image $C_{\mathrm{cat}}$ is again a linear code.
If either $C$, $D$ or $f$ were nonlinear, this linearity would fail, and $C_{\mathrm{cat}}$ would not in general be a subspace.

\subsection{Construction of Linear FCCs for Linear Functions}\label{Linear_FCC_construction}
\noindent\textbf{Construction for linear $(f:\!d_d,d_f)$-FCC for linear function:}
Let $f:\mathbb{F}_q^k \to \mathbb{F}_q^{\ell}$ be a linear function. Let $\mathcal{C}\subseteq\mathbb{F}_q^n$ be a systematic $[n,k,d_d]$ linear code, and let 
$\mathcal{D}\subseteq\mathbb{F}_q^{r'}$ be a linear $[r',\ell,\,d_f-d_d]$ code.
Define the encoding
\[
\mathfrak{C}_f:\mathbb{F}_q^k\to\mathbb{F}_q^{\,n+r'},\qquad
u\mapsto \big(C(u),\,D(f(u))\big),
\]
where $C$ and $D$ are linear encoders for $\mathcal{C}$ and $\mathcal{D}$, respectively. 
The image 
$
\mathcal{C}_{\mathrm{cat}}=\{ \mathfrak{C}_f(u) : u\in\mathbb{F}_q^k \}
$
is an $(f\!:\!d_d,d_f)$-FCC. 

\begin{theorem}[Correctness]
The image $\mathcal{C}_{\mathrm{cat}}=\{\,(C(u),D(f(u))) : u\in\mathbb{F}_q^k\,\}\subseteq\mathbb{F}_q^{\,n+r'}$ is a linear
$(f\!:\!d_d,d_f)$-FCC of dimension $k$ and total redundancy $r_s=(n-k)+r'$.
\end{theorem}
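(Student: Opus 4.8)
The plan is to verify the three assertions of the theorem in turn: linearity together with $\dim \mathcal{C}_{\mathrm{cat}} = k$, the redundancy count, and the two distance conditions of Definition~\ref{def:linearFCC}. The first is exactly the content of Lemma~\ref{Linear_cat}, instantiated with the linear encoders $C$ and $D$ and the linear function $f$, so I would simply invoke it. The redundancy is then immediate bookkeeping: $\mathcal{C}_{\mathrm{cat}} \subseteq \mathbb{F}_q^{\,n+r'}$ has dimension $k$, hence $r_s = (n+r') - k = (n-k) + r'$.

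For the distance conditions I would exploit linearity to reduce everything to weights of single codewords. Since $\mathcal{C}_{\mathrm{cat}}$ is linear, $d(\mathcal{C}_{\mathrm{cat}}) = \min_{u \neq 0} \mathrm{wt}(\mathfrak{C}_f(u))$. For $u \neq 0$ we have $\mathrm{wt}(\mathfrak{C}_f(u)) = \mathrm{wt}(C(u)) + \mathrm{wt}(D(f(u))) \geq \mathrm{wt}(C(u)) \geq d_d$, because $C$ is injective (its image $\mathcal{C}$ has dimension $k$) and $\mathcal{C}$ has minimum distance $d_d$. This establishes condition~1, namely $d(\mathcal{C}_{\mathrm{cat}}) \geq d_d$.

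For condition~2, take $u_1, u_2 \in \mathbb{F}_q^k$ with $f(u_1) \neq f(u_2)$; equivalently $w := u_1 - u_2 \notin \mathrm{Ker}(f)$, and in particular $w \neq 0$. By linearity of $\mathfrak{C}_f$,
\[
d\big(\mathfrak{C}_f(u_1),\mathfrak{C}_f(u_2)\big) = \mathrm{wt}\big(\mathfrak{C}_f(w)\big) = \mathrm{wt}\big(C(w)\big) + \mathrm{wt}\big(D(f(w))\big).
\]
Since $w \neq 0$, the first summand is $\geq d_d$; since $f(w) = f(u_1) - f(u_2) \neq 0$ and $D$ is injective with image an $[r',\ell,d_f-d_d]$ code, the second summand is $\geq d_f - d_d$. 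Adding, $d(\mathfrak{C}_f(u_1),\mathfrak{C}_f(u_2)) \geq d_f$, which is precisely $d(\mathcal{C}_{\mathrm{cat}}/D_f) \geq d_f$, since $D_f$ is by construction the subcode $\{\mathfrak{C}_f(u): u\in\mathrm{Ker}(f)\}$ and two codewords lie in distinct cosets of $D_f$ iff their information parts differ by an element outside $\mathrm{Ker}(f)$, i.e.\ iff their $f$-values differ.

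There is no genuine obstacle here; the only care required is routine bookkeeping. One should note that $f(0)=0$ and $D(0)=0$, so that the all-zero vector is the unique preimage yielding the zero codeword and $D_f$ is indeed indexed exactly by $\mathrm{Ker}(f)$; and one may observe that the systematic form of $\mathcal{C}$, although assumed in the construction statement, is not actually needed for these distance bounds, since $\mathrm{wt}(C(w)) \geq d_d$ for every nonzero $w$ already by linearity of $\mathcal{C}$. I would include that remark so the argument is self-contained.
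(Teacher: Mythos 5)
Your proposal is correct and follows essentially the same route as the paper: invoke Lemma~\ref{Linear_cat} for linearity and dimension, then decompose the distance between two codewords into the $\mathcal{C}$-block and $\mathcal{D}$-block contributions and apply the minimum distances $d_d$ and $d_f-d_d$. Phrasing the bounds via weights of $\mathfrak{C}_f(u_1-u_2)$ rather than pairwise distances, and spelling out the coset-of-$D_f$ interpretation, are only cosmetic refinements of the paper's argument.
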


\begin{proof}
The linearity and dimension of the code $\mathcal{C}_{\mathrm{cat}}$ follow directly from Lemma~\ref{Linear_cat}. We now prove the distance guarantees of this code.
Let $x_1=(C(u_1),D(f(u_1))), x_2=(C(u_2), D(f(u_2)))\in\mathcal{C}_{\mathrm{cat}},$ for some $u_1,u_2\in\mathbb{F}_q^k$.
Then the Hamming distance between them is
\[
d(x_1,x_2)=d\big(C(u_1),C(u_2)\big)+d\big(D(f(u_1)),D(f(u_2))\big).
\]
If $f(u_1)=f(u_2)$, then the second term is $0$, while 
$d\big(C(u_1),C(u_2)\big)\ge d_d$, hence $d(x_1,x_2)\ge d_d$.
If $f(u_1)\ne f(u_2)$, then
$$
d\big(C(u_1),C(u_2)\big)\ge d_d $$
and
$$d\big(D(f(u_1)),D(f(u_2))\big)\ge d(\mathcal{D})=d_f-d_d,
$$
so $d(x_1,x_2)\ge d_d+(d_f-d_d)=d_f$.
Thus, $\mathcal{C}_{\mathrm{cat}}$ is an $(f\!:\!d_d,d_f)$-FCC.
\end{proof}

\begin{example}\label{ex:linFCC_2x3}
Let $f:\mathbb{F}_2^3\to\mathbb{F}_2^2$ be a linear function defined as
\[
f(u)=\begin{bmatrix}1&1&0\\[2pt]0&0&1\end{bmatrix}u,
\]
i.e., $f(u_1,u_2,u_3)=(u_1\oplus u_2,\ u_3)$.

Using the construction presented in Subsection \ref{Linear_FCC_construction}, we construct an $(f:4,6)$-FCC using the following codes.

\begin{itemize}
\item \textbf{Step 1 (Code $\mathcal{C}$):} 
A systematic binary $[7,3,4]$ code $\mathcal{C}$ with generator
\[
G_{\mathcal{C}}=\begin{bmatrix}
1&0&0&1&1&0&1\\
0&1&0&1&0&1&1\\
0&0&1&0&1&1&1
\end{bmatrix}.
\]

\item \textbf{Step 2 (Code $\mathcal{D}$):}
Choose a binary $[3,2,2]$ code $\mathcal{D}$ with generator matrix
\[
G_{\mathcal{D}}=\begin{bmatrix}
1&0&1\\
0&1&1
\end{bmatrix}.
\]
\end{itemize}

The final linear $(f:4,6)$-FCC, which is also a $[10,3,4]$ linear code, is given as:

\[ \begin{matrix} \text{Message} & \text{$\mathcal{C}$-parity} & \text{$\mathcal{D}$-parity} & \text{Codeword} \\
000 & 0000 & 000 & 0000000\,000\\
110 & 0110 & 000 & 1100110\,000\\
100 & 1101 & 101 & 1001101\,101\\
010 & 1011 & 101 & 0101011\,101\\
001 & 0111 & 011 & 0010111\,011\\
111 & 0001 & 011 & 1110001\,011 \\
101 & 1010 & 110 & 1011010\,110\\
011 & 1100 & 110 & 0111100\,110
\end{matrix}
\]

There exists a $[10,3,5]$ code \cite{CodeTable}. However, to construct a code with a minimum distance of $6$, one requires a length of at least $11$. Hence, it is evident that achieving the desired pair of values $(d_d, d_f)$ involves an inherent trade-off between data and function protection.
\end{example}



\section{Extension of Bounds from Error-Correcting Codes to Function-Correcting Codes}\label{bounds}

In this section, we consider certain bounds on error-correcting codes and extend them to the cases of $(f, t_f)$-FCCs and $(f, t_d, t_f)$-FCCs.

The generalized Plotkin bound and Gilbert-Varshamov bound for binary FCC given in \cite{LBWY2023} will work for the new distance requirement matrix $D_f(t_d, t_f: u_1,u_2,\ldots, u_{2^k})$
for an $(f,t_d,t_f)$-FCC as follows.

\begin{lemma}[\!\cite{LBWY2023}] \label{Lem:Plot_from_[1]}
For any distance matrix $\mathcal{D} \in \mathbb{N}^{M \times M}$,
$$N(\mathcal{D}) \geq \begin{cases}
\frac{4}{M^2} \sum_{i,j,i<j} [D]_{i, j} & \text{if $M$ even,} \\
\frac{4}{M^2-1} \sum_{i, j, i<j} [D]_{i, j} & \text{if $M$ odd}.
\end{cases}$$
\end{lemma}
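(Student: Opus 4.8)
The plan is to reuse the classical double-counting argument behind the Plotkin bound, transported to the setting of $\mathcal{D}$-codes. Fix $r=N(\mathcal{D})$ and a $\mathcal{D}$-code $\mathcal{P}=\{p_1,\dots,p_M\}\subseteq\mathbb{F}_2^r$ of length $r$, ordered so that $d(p_i,p_j)\ge[\mathcal{D}]_{i,j}$ for all $i,j$. I would estimate the total pairwise distance $S:=\sum_{i<j}d(p_i,p_j)$ from both sides. The lower estimate is immediate from the definition of a $\mathcal{D}$-code: $S\ge\sum_{i<j}[\mathcal{D}]_{i,j}$, and note that only entries with $i<j$ are used, so no symmetry assumption on $\mathcal{D}$ is needed.

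For the upper estimate I would swap the order of summation and count contributions coordinate by coordinate. If $x_\ell$ denotes the number of codewords of $\mathcal{P}$ having a $1$ in coordinate $\ell$, then coordinate $\ell$ contributes a disagreement to exactly $x_\ell(M-x_\ell)$ unordered pairs, so $S=\sum_{\ell=1}^{r}x_\ell(M-x_\ell)$. The integer-valued map $x\mapsto x(M-x)$ is maximized at $x=M/2$ (value $M^2/4$) when $M$ is even, and at $x=(M\pm 1)/2$ (value $(M^2-1)/4$) when $M$ is odd; in both cases the maximum equals $\lfloor M^2/4\rfloor$. Hence $S\le r\,\lfloor M^2/4\rfloor$.

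Combining the two estimates gives $\sum_{i<j}[\mathcal{D}]_{i,j}\le S\le r\,\lfloor M^2/4\rfloor$, i.e.
$$N(\mathcal{D})=r\ge\frac{\sum_{i<j}[\mathcal{D}]_{i,j}}{\lfloor M^2/4\rfloor},$$
which is exactly the asserted inequality once the floor is written out by the parity of $M$ (equal to $M^2/4$ for $M$ even and $(M^2-1)/4$ for $M$ odd). There is no genuine obstacle here; this is the standard Plotkin computation, and the only point deserving care is that the per-coordinate maximum of $x(M-x)$ is attained at an integer value of $x$, which is precisely the reason the even/odd case split appears in the statement. (The same scheme yields a $q$-ary analogue, replacing $x_\ell(M-x_\ell)$ by $\tfrac12\sum_{a\neq b}n_{\ell,a}n_{\ell,b}$ where $n_{\ell,a}$ counts codewords with symbol $a$ in coordinate $\ell$, and maximizing by balancing the symbol counts; but for the binary FCCs considered here the stated form suffices.)
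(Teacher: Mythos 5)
Your proof is correct and follows essentially the same double-counting argument the paper uses: the paper cites this binary lemma from \cite{LBWY2023} and proves its own $q$-ary generalization in the same subsection by exactly this method (lower-bounding the total pairwise distance by the matrix entries and upper-bounding it coordinate-wise via balanced symbol counts), with your binary count $x_\ell(M-x_\ell)\le\lfloor M^2/4\rfloor$ being the $q=2$ specialization noted in the paper's remark. Nothing is missing; the parity case split arises precisely as you say.
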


\begin{lemma}[\!\cite{LBWY2023}] 
For any distance matrix $D \in \mathbb{N}_0^{M\times M}$, and any
permutation $\pi : [M] \rightarrow [M]$
\begin{equation*}
    N(\boldsymbol{D}) \leq \min _{r \in \mathbb{N}}\left\{r: 2^{r}>\max _{j \in[M]} \sum_{i=1}^{j-1} V\left(r,[\boldsymbol{D}]_{\pi(i) \pi(j)}-1\right)\right\} .
\end{equation*}
    
\end{lemma}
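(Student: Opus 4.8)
The plan is to establish this by a Gilbert--Varshamov-style greedy construction, building a $\mathcal{D}$-code of length $r$ one coordinate vector at a time. First I would recall that $V(r,s)$ denotes the volume $\sum_{i=0}^{s}\binom{r}{i}$ of a Hamming ball of radius $s$ in $\mathbb{F}_2^{r}$, with the convention $V(r,-1)=0$ (this is exactly the value that appears opposite the zero entries of $\mathcal{D}$). Since $N(\mathcal{D})$ is invariant under simultaneously permuting the rows and columns of $\mathcal{D}$, and since the right-hand side of the claimed inequality is evaluated for the particular permutation $\pi$, I may relabel indices so that $\pi$ is the identity. It then suffices to prove: whenever $r$ satisfies $2^{r}>\max_{j\in[M]}\sum_{i=1}^{j-1}V\big(r,[\mathcal{D}]_{i,j}-1\big)$, there exists a $\mathcal{D}$-code of length $r$, so that $N(\mathcal{D})\le r$; taking the minimum over all such admissible $r$ then gives the bound. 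I would also note that the admissible set is nonempty, hence the minimum is well defined: the required distances are finitely many fixed integers, so the right-hand side grows only polynomially in $r$ while $2^{r}$ grows exponentially.

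For the construction, pick $p_{1}\in\mathbb{F}_2^{r}$ arbitrarily (the constraint set for $j=1$ is empty), and proceed inductively: suppose $p_{1},\dots,p_{j-1}$ have been chosen with $d(p_{a},p_{b})\ge[\mathcal{D}]_{a,b}$ for all $a,b<j$. A candidate $x\in\mathbb{F}_2^{r}$ is an \emph{illegal} choice for $p_{j}$ exactly when $d(x,p_{i})<[\mathcal{D}]_{i,j}$ for some $i<j$, i.e.\ when $x\in\bigcup_{i=1}^{j-1}B\big(p_{i},[\mathcal{D}]_{i,j}-1\big)$. The number of illegal candidates is therefore at most $\sum_{i=1}^{j-1}\big|B(p_{i},[\mathcal{D}]_{i,j}-1)\big|=\sum_{i=1}^{j-1}V\big(r,[\mathcal{D}]_{i,j}-1\big)<2^{r}$ by hypothesis, so a legal $p_{j}$ exists and the induction goes through, producing $p_{1},\dots,p_{M}$ with $d(p_{i},p_{j})\ge[\mathcal{D}]_{i,j}$ for all $i<j$. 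Because every distance-requirement matrix used in this paper is symmetric (its entries depend only on $d(u_i,u_j)$ and on whether $f(u_i)=f(u_j)$), the inequalities for $i<j$ already give $d(p_{i},p_{j})\ge[\mathcal{D}]_{i,j}$ for all $i,j$, so $\{p_{1},\dots,p_{M}\}$ is a $\mathcal{D}$-code of length $r$; for a fully general $\mathcal{D}\in\mathbb{N}_0^{M\times M}$ one runs the identical argument with $[\mathcal{D}]_{i,j}$ replaced by $\max([\mathcal{D}]_{i,j},[\mathcal{D}]_{j,i})$, which for FCC matrices is no change at all.

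I do not anticipate a genuine mathematical obstacle: this is the classical greedy sphere-covering bound, and the delicate points are bookkeeping rather than substance. The things to handle carefully are (i) that when placing $p_{j}$ only the already-chosen codewords $p_{1},\dots,p_{j-1}$ impose constraints, which is precisely where the partial sum $\sum_{i=1}^{j-1}$ and hence the ordering $\pi$ enter the statement, and why one is free to optimize over $\pi$; (ii) the degenerate entries $[\mathcal{D}]_{i,j}=0$, absorbed by the convention $V(r,-1)=0$; and (iii) the symmetry remark above, reconciling the ordered distance condition in the $\mathcal{D}$-code definition with the unordered pairwise checks performed in the construction. It is worth emphasizing the freedom in $\pi$, since ordering the indices so that those requiring large mutual distances are placed early (when few balls must be avoided) can make the bound noticeably tighter, exactly as in the usual heuristic for Gilbert--Varshamov-type arguments.
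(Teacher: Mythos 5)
The paper states this lemma without proof, citing it from \cite{LBWY2023}; your greedy sphere-covering construction is exactly the argument used there, and it is correct. Your side remarks — the convention $V(r,-1)=0$ for zero entries, the fact that only previously placed codewords constrain $p_j$ (which is where the ordering $\pi$ enters), and the observation that the stated bound implicitly relies on the symmetry of the distance-requirement matrices — all match how the cited reference treats the result.
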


\subsection{Generalized Plotkin bound for $(f, d_d, d_f)$-FCCs }

In the following lemma, we extend the result of Lemma \ref{Lem:Plot_from_[1]}  to the arbitrary field $\mathbb{F}_q.$

\begin{lemma}\label{lem:plot_gen}
	For any distance matrix $D\in \mathbb{N}^{M\times M}$ and for irregular
	distance codes over $\mathbb{F}_q$, we have
	\[
	N(D) \ge
	\frac{2q}{M^2(q-1)-a(q-a)}
	\sum_{1 \le i < j \le M} [D]_{i,j},
	\]
	where $a = M \bmod q$.
\end{lemma}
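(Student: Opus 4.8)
## Proof Proposal

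The plan is to mimic the classical Plotkin averaging argument, but carried out over $\mathbb{F}_q$ rather than $\mathbb{F}_2$, and to use the optimality of a balanced column distribution. Suppose $\mathcal{P} = \{p_1, \ldots, p_M\} \subseteq \mathbb{F}_q^N$ is a $D$-code of length $N = N(D)$, so that $d(p_i, p_j) \ge [D]_{i,j}$ for all $i,j$. First I would sum the pairwise Hamming distances over all ordered pairs and write
\[
\sum_{i \ne j} d(p_i, p_j) = \sum_{\ell=1}^{N} \sum_{i \ne j} \mathbf{1}[p_i(\ell) \ne p_j(\ell)],
\]
where $p_i(\ell)$ is the $\ell$-th coordinate of $p_i$. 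For a fixed coordinate $\ell$, if $m_0, m_1, \ldots, m_{q-1}$ denote the number of codewords taking each of the $q$ possible symbol values in that coordinate (so $\sum_s m_s = M$), then the inner sum equals $M^2 - \sum_s m_s^2$. The key combinatorial fact is the maximum of $M^2 - \sum_s m_s^2$ over nonnegative integers $m_s$ summing to $M$: this is achieved by the most balanced split, where $a = M \bmod q$ of the $m_s$ equal $\lceil M/q \rceil$ and the remaining $q - a$ equal $\lfloor M/q \rfloor$. A short computation gives that this maximum equals $M^2 - \left(\frac{M^2 - a(q-a)}{q}\right)\cdot\frac{\text{something}}{}$ — more precisely, $\frac{(q-1)M^2 - a(q-a)}{q}$, which I'd verify by writing $M = q\lfloor M/q\rfloor + a$ and expanding $\sum m_s^2 = a\lceil M/q\rceil^2 + (q-a)\lfloor M/q\rfloor^2$.

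Combining these, for each coordinate $\ell$ the contribution is at most $\frac{(q-1)M^2 - a(q-a)}{q}$, so
\[
\sum_{i \ne j} d(p_i, p_j) \le N \cdot \frac{(q-1)M^2 - a(q-a)}{q}.
\]
On the other hand, $\sum_{i\ne j} d(p_i, p_j) = 2\sum_{i<j} d(p_i,p_j) \ge 2\sum_{i<j}[D]_{i,j}$. Chaining the two inequalities and solving for $N$ gives
\[
N \ge \frac{2q}{(q-1)M^2 - a(q-a)} \sum_{i<j}[D]_{i,j} = \frac{2q}{M^2(q-1) - a(q-a)}\sum_{i<j}[D]_{i,j},
\]
which is exactly the claimed bound. (One sanity check I'd run: for $q = 2$, $a = M \bmod 2$; when $M$ is even $a = 0$ and the denominator is $M^2$, giving $N \ge \frac{4}{M^2}\sum[D]_{i,j}$; when $M$ is odd $a = 1$ and the denominator is $M^2 - 1$, recovering Lemma~\ref{Lem:Plot_from_[1]} exactly.)

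The main obstacle I anticipate is purely the bookkeeping in the extremal step: correctly identifying that $M^2 - \sum_s m_s^2$ is \emph{maximized} (not minimized) by balancing, and then getting the closed form of that maximum right, including the $a(q-a)$ term with the correct sign. This is a standard convexity/smoothing argument — if two of the $m_s$ differ by at least $2$, moving one unit from the larger to the smaller strictly increases $M^2 - \sum m_s^2$ — so the optimum is the balanced configuration; the only real care needed is the arithmetic simplification of $a\lceil M/q\rceil^2 + (q-a)\lfloor M/q\rfloor^2$ down to $\frac{M^2 + a(q-a)(\text{correction})}{q}$ form, which ultimately feeds the $-a(q-a)$ into the denominator. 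Everything else is the routine double-counting that underlies every Plotkin-type bound.
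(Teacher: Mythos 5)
Your proposal is correct and follows essentially the same argument as the paper: a per-coordinate (Plotkin-type) double count of pairwise distances, the observation that each coordinate's contribution is maximized by the balanced symbol distribution with $a=M\bmod q$ symbols occurring $\lceil M/q\rceil$ times, the identity $a\lceil M/q\rceil^2+(q-a)\lfloor M/q\rfloor^2=\frac{M^2+a(q-a)}{q}$ yielding the bound $\frac{(q-1)M^2-a(q-a)}{q}$ per coordinate, and finally $d(p_i,p_j)\ge[D]_{i,j}$ to chain the inequalities and solve for $N$. The only differences are cosmetic (ordered versus unordered pairs, and your explicit smoothing justification of the extremal step, which the paper merely asserts), so no further changes are needed.
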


\begin{proof}
	Let $p_1,p_2,\ldots,p_M$ be a $D$-code of length $r$.  
	Consider the $j$-th coordinate of all vectors. 
	The contribution of this $j$-th coordinate to the total pairwise distance 
	$
	\sum_{1 \le i < j \le M} d(p_i,p_j),
	$ is maximized when the symbols of $\mathbb{F}_q$ appear as evenly as possible.
	Let $a = M \bmod q$, then in the most evenly distributed case, $a$ symbols appear $\left\lceil \frac{M}{q}\right\rceil$ times and
	the remaining $(q-a)$ symbols appear $\left\lfloor \frac{M}{q}\right\rfloor$ times.
	Thus,
	$$
	\sum_{1 \le i < j \le M} d(p_i,p_j)
	\le
	\frac{r}{2}\left[
	a\Big\lceil \tfrac{M}{q}\Big\rceil\Big(M-\Big\lceil \tfrac{M}{q}\Big\rceil\Big)
	+ \right.\\
	\left. (q-a)\Big\lfloor \tfrac{M}{q}\Big\rfloor\Big(M-\Big\lfloor \tfrac{M}{q}\Big\rfloor\Big)
	\right].
	$$
	On the other hand, by the definition of a $D$-code,
	\[
	d(p_i,p_j)\ge [D]_{i,j} \quad \text{for all} \quad 1\le i,j\le M.
	\]
	Hence,
	\[
	\sum_{1 \le i < j \le M} [D]_{i,j}
	\le
	\frac{r}{2}\,A,
	\]
	where
	$
	A
	=
	a\Big\lceil \tfrac{M}{q}\Big\rceil\Big(M-\Big\lceil \tfrac{M}{q}\Big\rceil\Big)
	+
	(q-a)\Big\lfloor \tfrac{M}{q}\Big\rfloor\Big(M-\Big\lfloor \tfrac{M}{q}\Big\rfloor\Big).
	$
	
	Using 
	$
	\Big\lfloor \tfrac{M}{q}\Big\rfloor=\frac{M-a}{q}=\alpha,
	\Big\lceil \tfrac{M}{q}\Big\rceil=\frac{M+q-a}{q}=\alpha+1,
	$
	we get
	\begin{align*}
		A &=  a(\alpha +1)(M-(\alpha+1)) + (q-a) \alpha (M-\alpha)\\
		&=Ma+Mq\alpha-2a\alpha -q\alpha^2-a \\
		&= Ma + Mq\left( \frac{M-a}{q} \right)-2a \left( \frac{M-a}{q} \right)-q\left( \frac{M-a}{q} \right)^2-a \\
		&=\frac{1}{q}\left(M^2(q-1)-a(q-a)\right).
	\end{align*}
	
	Therefore,
	\[
	\sum_{1 \le i < j \le M} [D]_{i,j}
	\le
	\frac{r}{2}\cdot \frac{1}{q}\left(M^2(q-1)-a(q-a)\right).
	\]
	
	Since this is true for any $D$-code of length $r$, we have the desired bound as
	\[
	N(D)
	\ge
	\frac{2q}{M^2(q-1)-a(q-a)}
	\sum_{1 \le i < j \le M} [D]_{i,j}.
	\]
\end{proof}

\begin{remark}
	For $q=2$ we have $a = M \bmod 2 \in \{0,1\}$, and the bound in Lemma \ref{lem:plot_gen}
	reduces to
	\[
	N(D) \ge
	\frac{4}{M^2 - a(2-a)}
	\sum_{1 \le i < j \le M} [D]_{i,j}.
	\]
	In particular,
	$$
	N(D) \ge
	\begin{cases}
		\displaystyle \frac{4}{M^2}
		\displaystyle \sum_{1 \le i < j \le M} [D]_{i,j}, & \text{if $M$ is even},\\[8pt]
		\displaystyle \frac{4}{M^2-1}
		\displaystyle \sum_{1 \le i < j \le M} [D]_{i,j}, & \text{if $M$ is odd}.
	\end{cases}
	$$
\end{remark}


However, this general bound requires knowledge of all pairwise distances. Therefore, we present an alternative form of the Plotkin bound, expressed in terms of the minimum distances $d_d$ and $d_f$, as defined in Section \ref{main}. 

\begin{theorem}\label{thm:Plotkin_bound}
    For an $(f\!:d_d,d_f)$-FCC over $\mathbb{F}_q$, where $f:\mathbb{F}_q^k \rightarrow \mathrm{Im}(f)$, we have
    $$r_f(k: d_d,d_f)\geq \frac{1}{q^{k-1}(q-1)}\left((L-1)d_d+(q^k-L)d_f\right)-k,$$
    where $L=\max_{\alpha\in \mathrm{Im}(f)}|f^{-1}(\alpha)|$ and $d_f>d_d$.
\end{theorem}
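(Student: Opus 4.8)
The plan is to carry out the classical double-counting argument behind the Plotkin bound, but feeding it the two-level distance requirement of an $(f\!:d_d,d_f)$-FCC in place of a single minimum distance. Fix an optimal $(f\!:d_d,d_f)$-FCC $\mathfrak{C}_f:\mathbb{F}_q^k\to\mathbb{F}_q^{k+r}$ with $r=r_f(k:d_d,d_f)$; write $n=k+r$ for its length, $M=q^k$ for its number of codewords, and let $S=\sum_{u\ne v}d(\mathfrak{C}_f(u),\mathfrak{C}_f(v))$, the sum of Hamming distances over all ordered pairs of distinct message vectors. I will bound $S$ from below using the FCC property and from above using a per-coordinate count, and then compare the two.

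For the lower bound, fix $u\in\mathbb{F}_q^k$. Among the $q^k-1$ other message vectors, exactly $|f^{-1}(f(u))|-1$ share the function value $f(u)$ and the remaining $q^k-|f^{-1}(f(u))|$ do not; so the defining inequalities of an $(f\!:d_d,d_f)$-FCC give
\[
\sum_{v\ne u}d(\mathfrak{C}_f(u),\mathfrak{C}_f(v))\ \ge\ \bigl(|f^{-1}(f(u))|-1\bigr)d_d+\bigl(q^k-|f^{-1}(f(u))|\bigr)d_f .
\]
The map $m\mapsto (m-1)d_d+(q^k-m)d_f$ is non-increasing in $m$ because $d_f>d_d$, and $|f^{-1}(f(u))|\le L$; hence the right-hand side is at least $(L-1)d_d+(q^k-L)d_f$. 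Summing over all $u$ yields $S\ge q^k\bigl((L-1)d_d+(q^k-L)d_f\bigr)$.

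For the upper bound, look at a single coordinate $\ell\in[n]$: if $n_{\ell,s}$ denotes the number of codewords whose $\ell$-th symbol equals $s\in\mathbb{F}_q$, then the number of ordered distinct pairs that differ in that coordinate is $M^2-\sum_{s}n_{\ell,s}^2\le M^2-M^2/q=q^{2k-1}(q-1)$, using $\sum_s n_{\ell,s}=q^k$ and Cauchy--Schwarz (equality is attainable since $q\mid q^k$). Summing over the $n$ coordinates gives $S\le n\,q^{2k-1}(q-1)$. Combining the two estimates gives $n\,q^{2k-1}(q-1)\ge q^k\bigl((L-1)d_d+(q^k-L)d_f\bigr)$, i.e. $n\ge\frac{1}{q^{k-1}(q-1)}\bigl((L-1)d_d+(q^k-L)d_f\bigr)$, and substituting $n=k+r_f(k:d_d,d_f)$ finishes the proof. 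The computation is essentially mechanical; the only step that requires thought is the monotonicity replacement of $|f^{-1}(f(u))|$ by the maximal fiber size $L$, which is precisely where the hypothesis $d_f>d_d$ enters and which is what lets the bound depend only on $L,d_d,d_f$ rather than on the whole fiber-size profile of $f$.
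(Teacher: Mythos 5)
Your proposal is correct and follows essentially the same route as the paper's own proof: a Plotkin-style double count of the total pairwise distance, bounded below per codeword via the two-level distance requirement together with the monotonicity argument replacing the fiber size by $L$ (using $d_f>d_d$), and bounded above coordinate-wise by $q^{2k-1}(q-1)$ per position. No gaps; the details match the paper's argument step for step.
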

\begin{proof}
  Let $\mathfrak{C}$ be an $(f\!:d_d,d_f)$-FCC over $\mathbb{F}_q$, consisting of $q^k$ codewords, each of length $n=k + r$, with minimum Hamming distance $d_d$, and minimum Hamming distance $d_f$ between any two codewords corresponding to different function values.
We prove this bound by bounding the total Hamming distance between all pairs of distinct codewords in $\mathfrak{C}$, i.e., $\sum _{(x,y)\in \mathfrak{C}^{2},x\neq y}d(x,y)$.

Consider a codeword $x \in \mathfrak{C}$, with $u_x = \mathfrak{C}^{-1}(x) \in \mathbb{F}_q^k$ as the corresponding message vector, and let the associated function value be $f(u_x) = f_x$.
Then the distance of $x$ from all the codewords $y \in \mathfrak{C}$ such that $f_y=f_x$, is at least $d_d$ and its  distance from all the codewords $z \in \mathfrak{C}$ such that $f_z\neq f_x$, is at least $d_f$. Therefore, 
\begin{align*}
    S(x)&=\sum_{y\in \mathfrak{C}, y\neq x}d(x,y)\\
    &\geq \left(\ |f^{-1}(f_x)|-1 \right) d_d + \left(q^k-|f^{-1}(f_x)|\ \right)d_f \\ 
    &=q^kd_f-d_d-(d_f-d_d)\ |f^{-1}(f_x)|.
\end{align*}

Since $d_f>d_d$ and $L=\max_{\alpha\in \mathrm{Im}(f)} |f^{-1}(\alpha)|$, for all $x\in \mathfrak{C}$, we have 
$$S(x)\geq q^kd_f-d_d-(d_f-d_d) L=(L-1)d_d+(q^k-L)d_f.$$
Therefore,
\begin{equation}\label{PB:eq1}
\sum _{(x,y) \in \mathfrak{C}^{2},x\neq y}d(x,y) =\sum_{x\in \mathfrak{C}} S(x) \ge q^k \left((L-1)d_d+(q^k-L)d_f \right).   
\end{equation}

To upper bound this sum, we check the contribution of each coordinate. For a fixed coordinate position, let $n_a$ denote the number of codewords in which that coordinate equals the symbol $a \in \mathbb{F}_q$. The number of pairs that differ in that coordinate is
$$
\sum_{a \in \mathbb{F}_q} n_a (q^k - n_a) = q^{2k} - \sum_{a \in \mathbb{F}_q} n_a^2.
$$
This is maximized when the codewords are evenly distributed, i.e., each symbol appears equally often, i.e., when $n_a = q^k/q=q^{k-1}$ for all $a$, i.e., when,
$$
\sum_{a \in \mathbb{F}_q} n_a^2 = \frac{q^{2k}}{q}=q^{2k-1}.
$$
Thus, the number of differing pairs per coordinate is at most
$
q^{2k} - q^{2k-1} = q^{2k-1} (q -1).
$
Summing over all $n$ coordinates, the total pairwise distance is at most
\begin{equation}\label{PB:eq2}
    \sum _{(x,y)\in \mathfrak{C}^{2},x\neq y}d(x,y)\leq
n q^{2k-1} (q -1).
\end{equation}
Now from \eqref{PB:eq1} and \eqref{PB:eq2}, we obtain
$
q^k \left((L-1)d_d+(q^k-L)d_f \right) \leq
n q^{2k-1} (q -1).
$
Since, $n=k+r$, we get 

 $$r\geq \frac{1}{q^{k-1}(q-1)}\left((L-1)d_d+(q^k-L)d_f\right)-k.$$
Hence the bound.
\end{proof}

The following example demonstrates that the bounds in Theorems~\ref{thm1.2} and~\ref{thm:Plotkin_bound} are each tighter than the other for different choices of parameters. 

\begin{example}\label{Ex:PB}
   Consider the function $f:\mathbb{F}_2^4 \rightarrow\{0,1,2,3\}$ defined as $f(u)=wt(u) \bmod 4$, for all $u\in \mathbb{F}_2^4$. For this function we have, $L=\max_{\alpha\in \{0,1,2,3\}} |f^{-1}(\alpha)|=6$.
   We have 
   \begin{center}
\begin{tabular}{ |c|c|c| } 
 \hline
  & Bound in Thm. \ref{thm1.2} & Bound in Thm. \ref{thm:Plotkin_bound} \\ 
  \hline
 $d_d=3, d_f=5$ & $5$ & $4.1\simeq5$ \\ 
$d_d=5, d_f=7$ & $8$ & $7.87\simeq8$ \\
$d_d=7, d_f=9$ & $11$ & $11.6\simeq 12$ \\
$d_d=9, d_f=11$ & $14$ & $15.3\simeq16$ \\ 
 \hline
\end{tabular}
\end{center}
Clearly, the variant of the Plotkin bound from Theorem \ref{thm:Plotkin_bound} is tighter for this function.

Now consider the OR function $g:\mathbb{F}_2^4\rightarrow\{0,1\}$ defined as $g(0000)=0$ and $g(u)=1$ for all $u\in\mathbb{F}_2^4, u \neq0000$. The value of $L$ is $15$.
\begin{center}
\begin{tabular}{ |c|c|c| } 
 \hline
  & Bound in Thm. \ref{thm1.2} & Bound in Thm. \ref{thm:Plotkin_bound} \\ 
  \hline
 $d_d=3, d_f=5$ & $5$ & $1.88\simeq2$ \\ 
$d_d=5, d_f=7$ & $8$ & $5.63\simeq6$ \\
$d_d=7, d_f=9$ & $11$ & $9.275\simeq 10$ \\
$d_d=9, d_f=11$ & $14$ & $13.125\simeq14$ \\ 
 \hline
\end{tabular}
\end{center}
\end{example}

In the variant of the Plotkin bound given in Theorem \ref{thm:Plotkin_bound}, the value of $L$ can be easily obtained for some known classes of functions, such as
\begin{itemize}
    \item \textbf{Hamming weight function:}
    For the Hamming weight function $f:\mathbb{F}_2^k \rightarrow\{0,1,\ldots,k\}$ defined as $f(u)=wt(u)$, for all $u\in \mathbb{F}_2^k$, the value of $L$ is
$$L=\max_{\alpha\in \{0,1,\ldots,k\}} |f^{-1}(\alpha)|=\max_{\alpha\in \{0,1,\ldots,k\}} {k\choose\alpha}={k\choose\lfloor k/2 \rfloor}.$$
\item \textbf{Linear function:} For a linear function $f:\mathbb{F}_q^k \rightarrow \mathbb{F}_q^{\ell}$, we have
$$L=\max_{\alpha\in \mathbb{F}_q^{\ell}} |f^{-1}(\alpha)|=\frac{q^k}{q^{\ell}}=q^{k-\ell}.$$
\end{itemize}

\subsection{Generalized Hamming bound for $(f,t)-FCCs$}

In this subsection, we generalize the Hamming bound for $(f,t)-FCCs$.

\begin{theorem}[Generalization of Hamming bound for $(f,t)$-FCC]\label{thm:HB}
    Consider a function $f:\mathbb{F}_q^k \rightarrow \mathrm{Im}(f)$ with $\mathrm{Im}(f)=\{f_1,f_2,\ldots,f_E\}$. 
   Let $\ell=\min_{i\in [E]} |f^{-1}(f_i)|$, then there exists an $(f,t)$-FCC with length $n$ if  
    $$E\leq \frac{q^n}{\left| \cup_{j=1}^{\ell} B(v_j,t) \right|},$$
    where $v_1, v_2, \ldots, v_{\ell}$ are any distinct  vectors in $\mathbb{F}_q^n$ for which $\left| \cup_{j=1}^{\ell} B(v_j,t) \right|$ is minimum.
\end{theorem}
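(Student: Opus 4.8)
The plan is to run the classical sphere‑packing (Hamming) argument one function‑value class at a time instead of one codeword at a time. Fix an $(f,t)$‑FCC $\mathcal{C}\colon\mathbb{F}_q^k\to\mathbb{F}_q^n$ and assume, as is the case for the systematic encodings used in the definition of an FCC, that $\mathcal{C}$ is injective (codewords attached to messages with different function values are in any case forced to be distinct, since their distance is at least $2t+1>0$). For each $i\in[E]$ put
\[
S_i=\{\,\mathcal{C}(u)\ :\ u\in f^{-1}(f_i)\,\}\subseteq\mathbb{F}_q^n,\qquad
T_i=\bigcup_{c\in S_i}B(c,t),
\]
so that $T_i$ is the set of words within Hamming distance $t$ of some codeword of class $i$ and $|S_i|=|f^{-1}(f_i)|$. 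The whole proof reduces to two facts: the sets $T_1,\dots,T_E$ are pairwise disjoint, and $|T_i|\ge V$ for every $i$, where $V=\bigl|\bigcup_{j=1}^{\ell}B(v_j,t)\bigr|$ is the minimum union volume appearing in the statement.

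For disjointness, let $i\ne j$ and pick $c\in S_i$, $c'\in S_j$. Then $c=\mathcal{C}(u)$, $c'=\mathcal{C}(u')$ with $f(u)=f_i\ne f_j=f(u')$, so the defining inequality of an $(f,t)$‑FCC gives $d(c,c')\ge 2t+1$. If some word $x$ were in $B(c,t)\cap B(c',t)$, the triangle inequality would give $d(c,c')\le d(c,x)+d(x,c')\le 2t$, a contradiction; hence $B(c,t)\cap B(c',t)=\emptyset$. Taking the union over $c\in S_i$ and $c'\in S_j$ yields $T_i\cap T_j=\emptyset$. Since $T_1,\dots,T_E$ are pairwise disjoint subsets of $\mathbb{F}_q^n$,
\[
\sum_{i=1}^{E}|T_i|\ \le\ q^n .
\]

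For the volume bound, fix $i$. Since $|S_i|=|f^{-1}(f_i)|\ge\ell$, choose $\ell$ distinct centers $v_1,\dots,v_\ell\in S_i$; because adding balls can only enlarge a union, and because $V$ is the global minimum of $\bigl|\bigcup_{j=1}^{\ell}B(v_j,t)\bigr|$ over all $\ell$‑tuples of distinct points of $\mathbb{F}_q^n$ (a quantity independent of $i$),
\[
|T_i|\ =\ \Bigl|\bigcup_{c\in S_i}B(c,t)\Bigr|\ \ge\ \Bigl|\bigcup_{j=1}^{\ell}B(v_j,t)\Bigr|\ \ge\ V .
\]
Substituting into the previous display gives $E\,V\le\sum_{i=1}^{E}|T_i|\le q^n$, that is,
\[
E\ \le\ \frac{q^n}{\bigl|\bigcup_{j=1}^{\ell}B(v_j,t)\bigr|},
\]
which is the generalized Hamming bound; equivalently, any $(f,t)$‑FCC of length $n$ satisfies $q^n\ge E\,\bigl|\bigcup_{j=1}^{\ell}B(v_j,t)\bigr|$, so its redundancy $r=n-k$ obeys $r\ge\log_q\!\bigl(E\,\bigl|\bigcup_{j=1}^{\ell}B(v_j,t)\bigr|\bigr)-k$.

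The step I expect to require the most care is the cluster‑volume bound: one has to note that discarding all but $\ell$ of the centers in each class is harmless (monotonicity of union volume) and, more importantly, that the \emph{same} constant $V$ lower‑bounds $|T_i|$ for every class — which is valid precisely because every preimage has size at least $\ell=\min_i|f^{-1}(f_i)|$, making the minimizing configuration of $\ell$ radius‑$t$ balls the correct universal benchmark. The only other thing to keep track of is the injectivity of $\mathcal{C}$, needed so that $|S_i|=|f^{-1}(f_i)|$; this is automatic for systematic encodings and entails no loss of generality here. Apart from that, the argument is the standard disjoint‑sphere volume count; one could sharpen it slightly by using $|T_i|\ge$ (minimum volume of a union of $|f^{-1}(f_i)|$ balls) and summing, but $E\cdot V$ is the clean form recorded in the theorem.
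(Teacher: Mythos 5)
Your proof is correct and takes essentially the same route as the paper's: a sphere-packing count in which the per-class unions $\bigcup_{u\in f^{-1}(f_i)}B(\mathcal{C}(u),t)$ are shown to be pairwise disjoint via the $d\ge 2t+1$ requirement, and each such union is lower-bounded by the minimum volume of a union of $\ell$ radius-$t$ balls. Your version is in fact slightly more explicit than the paper's (triangle inequality for disjointness, injectivity of the systematic encoding to guarantee $|S_i|=|f^{-1}(f_i)|\ge\ell$), but the underlying argument is identical.
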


\begin{proof}
    If an $(f,t)$-FCC code with encoding $\mathcal{C}$ that can correct up to $t$ errors in function values, then the Hamming balls of radius $t$ around any codeword $\mathcal{C}(u)$ where $f(u) = f_i$ for some $i\in[E]$ must not contain any codeword $\mathcal{C}(v)$ for which $f(v)\neq f_i$. That means all the unions $\cup_{u\in f^{-1}(f_i)} B(u,t)$ must be disjoint for each $i\in [E]$. Otherwise, a received word could be decoded to another codeword with different function value, violating the function-correcting property.

Since the partition of $\mathbb{F}_q^k$ done by the preimage sets of $f$ has $E$ sets in it, the total number of vectors covered by these disjoint unions $\cup_{u\in f^{-1}(f_i)} B(u,t), i \in [E]$ is at least
$$E \min_{i\in[E]}\left|\cup_{u\in f^{-1}(f_i)} B(u,t)\right| \geq E \left|\cup_{j=1}^{\ell} B(v_j,t)\right|,$$ 
 where $\ell=\min_{i\in [E]} |f^{-1}(f_i)|$, and $v_1, v_2, \ldots, v_{\ell}$ be any distinct vectors in $\mathbb{F}_q^n$ for which $\left| \cup_{j=1}^{\ell} B(v_j,t) \right|$ is minimum.
This must fit within the total space $\mathbb{F}_q^n$ of size $q^n$, hence the result.
\end{proof}

The bound given in Theorem \ref{thm:HB} works for any function $f$ with image $\mathrm{Im}(f)=\{f_1, f_2, \ldots, f_E\}$ such that $|f^{-1}(f_i)|\geq \ell$ for all $i\in [E]$. 
 If $f:\mathbb{F}_q^k \rightarrow \mathrm{Im}(f)$ is a linear function, 
 then $\ell= |f^{-1}(f_i)|$ for all $i\in [E]$.  
 
 \begin{example}\label{ex8}
    Consider a function $f$ from $\mathbb{F}_2^4$ to $\mathbb{F}_2^2$ defined as
$$ f:\qquad x \mapsto \begin{bmatrix}  1 & 1& 1& 0 \\ 0 & 1 & 1 & 0  \end{bmatrix} x, $$
for which we have $$f^{-1}(00) = \{0000, 0001, 0110, 0111\},$$  
$$f^{-1}(11) = \{0010, 0011, 0100, 0101\},$$
$$f^{-1}(10) = \{1000, 1001, 1110, 1111\},$$ $$ f^{-1}(01) = \{1100, 1101, 1010, 1011\}\}.$$
Here $\ell=\min_{i\in [4]} |f^{-1}(f_i)|=4$.
From Theorem \ref{thm:HB}, we have 
$$E\leq \frac{2^n}{\left| 
\cup_{j=1}^{4} B(v_j,t) \right|},$$
where $v_1,v_2, v_3, v_4\in \mathbb{F}_2^k$ distinct vectors for which $\left| 
\cup_{j=1}^{4} B(v_j,t) \right|$ is minimum. Also 
\begin{equation}\label{eq:ex8.1}
    E\leq \frac{2^n}{\left| 
\cup_{j=1}^{4} B(v_j,t) \right|}\leq \frac{2^n}{\left| 
\cup_{j=1}^{3} B(v_j,t) \right|}.
\end{equation}
As in $\mathbb{F}_2^n$, the nearest possible distinct vectors will be in a similar form to $v_1=(0,0,\ldots,0), v_2=(1,0,\ldots,0)$ and $v_3=(0,1,0,\ldots,0)$, which leads to the minimum size of the union of Hamming balls around three vectors.   
For $t=2$, from Theorem \ref{thm:3BF2}, we have 
\begin{align*}
& \left|\cup_{j=1}^{3} B(v_j,t)\right|& \\
 &=3\sum_{i=0}^{t} {n \choose i} - 6 \sum_{i=0}^{t-1} {n-1 \choose i} + {n-2 \choose t-1}  + 4 \sum_{i=0}^{t-2} {n-2 \choose i}\\
 &= 3\left(1+n+\frac{n(n-1)}{2}\right)-6(1+(n-1))+(n-2)+4(1)\\
 &=\frac{3}{2}n^2-\frac{7}{2}+5.
\end{align*}
Now, from \eqref{eq:ex8.1}, we have $$ E\leq  \frac{2^n}{\left| 
\cup_{j=1}^{3} B(v_j,t) \right|}=\frac{2^n}{\frac{3}{2}n^2-\frac{7}{2}+5}.$$
That means, $4(\frac{3}{2}n^2-\frac{7}{2}+5)\leq 2^n$ from which we get $n\ge 9$. As verified with the trial-and-error method, the optimal length for this function with $t=2$ is $10$. 
\end{example}
 
 
\begin{corollary}
   Since $\left| 
\cup_{j=1}^{\ell} B(v_j,t) \right|\geq |B(v,t)|$ for any $v\in \mathbb{F}_q^k$ and $|B(v,t)|=\sum_{i=0}^t {n \choose i} (q-1)^i$, we have 
$$E\leq \frac{q^n}{\sum_{i=0}^t {n \choose i} (q-1)^i}.$$
\end{corollary}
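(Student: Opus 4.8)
The plan is to obtain the statement as an immediate weakening of Theorem~\ref{thm:HB}, which already supplies the inequality
$$E \le \frac{q^n}{\bigl|\bigcup_{j=1}^{\ell} B(v_j,t)\bigr|},$$
where $v_1,\dots,v_\ell$ are distinct vectors of $\mathbb{F}_q^n$ chosen so that the union of radius-$t$ Hamming balls around them has minimum size, and $\ell=\min_{i\in[E]}|f^{-1}(f_i)|$. So the only work is to lower-bound that union by the volume of a single ball and then insert the standard formula for the ball size.

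First I would observe that for any finite family of sets the union contains each member, so $\bigl|\bigcup_{j=1}^{\ell} B(v_j,t)\bigr| \ge |B(v_1,t)|$; here $\ell\ge 1$ since $\operatorname{Im}(f)$ is nonempty and every fibre $f^{-1}(f_i)$ is nonempty, so at least one center $v_1$ is available. Next I would recall the standard volume count, already used in the proof of Theorem~\ref{thm:HB} and in Corollary~\ref{MDG_col1}: a Hamming ball $B(v,t)\subseteq\mathbb{F}_q^n$ has cardinality $\sum_{i=0}^{t}\binom{n}{i}(q-1)^i$, independent of the center $v$, since a word at distance $i$ from $v$ is obtained by choosing the $i$ differing coordinates ($\binom{n}{i}$ ways) and a nonzero symbol for each ($(q-1)^i$ ways).

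Finally I would chain the two facts: $\bigl|\bigcup_{j=1}^{\ell} B(v_j,t)\bigr| \ge |B(v_1,t)| = \sum_{i=0}^{t}\binom{n}{i}(q-1)^i$, and since dividing $q^n$ by a larger denominator yields a smaller fraction,
$$E \le \frac{q^n}{\bigl|\bigcup_{j=1}^{\ell} B(v_j,t)\bigr|} \le \frac{q^n}{\sum_{i=0}^{t}\binom{n}{i}(q-1)^i},$$
which is exactly the claimed $q$-ary Hamming-type bound. I do not anticipate any real obstacle: the only point needing a moment's care is keeping the inequality directions consistent when passing from the union of balls to a single ball (the union is the \emph{larger} set, hence gives the \emph{weaker}, larger-denominator, bound on $E$), together with the remark that the ball-size formula does not depend on which center $v_1$ was selected.
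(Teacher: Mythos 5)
Your proposal is correct and matches the paper's own argument, which likewise derives the bound by combining Theorem~\ref{thm:HB} with the observation that the union of the $\ell$ balls contains a single ball of volume $\sum_{i=0}^{t}\binom{n}{i}(q-1)^i$, so the larger denominator can be replaced by the single-ball volume. The inequality directions and the center-independence of the ball volume are handled exactly as in the paper, so there is nothing to add.
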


If $f$ is a bijection, then $(f,t)$-FCC turns into an ECC with minimum distance $2t+1$. Then $E=q^k, \ell=1$ and the bound given in Theorem \ref{thm:HB} turns into the Hamming bound as
$$q^k\leq \frac{q^n}{|B(v,t)|}=\frac{q^n}{\sum_{i=0}^{t}{n \choose i} (q-1)^i},$$
where $v$ is a vector in $\mathbb{F}_q^k$.

\subsection{ Hamming bound for $(f\!:d_d,d_f)$-FCCs}
\begin{theorem}[Generalization of Hamming bound for $(f\!:d_d,d_f)$-FCC] \label{thm:HBfortdtf}
    Consider a function $f:\mathbb{F}_q^k \rightarrow \mathrm{Im}(f)$ with $\mathrm{Im}(f)=\{f_1,f_2,\ldots,f_E\}$. 
   Let $\ell=\min_{i\in [E]} |f^{-1}(f_i)|$ and $v_1, v_2, \ldots, v_{\ell}$ be any distinct vectors in $\mathbb{F}_q^n$ with the condition that $d(v_i,v_j)\geq d_d$ for all $i,j\in[\ell], i\ne j$, for which $\left|\cup_{j=1}^{\ell} B(v_j,t_f) \right|$ is minimum. Then there exists an $(f\!:d_d,d_f)$-FCC with length $n$ if
    $$E\leq \frac{q^n}{\left| 
\cup_{j=1}^{\ell} B(v_j,t_f) \right|}.$$
\end{theorem}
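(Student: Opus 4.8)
The plan is to read the statement literally as an existence (achievability) claim and to try to build an $(f\!:d_d,d_f)$-FCC of length $n$ directly, by a greedy, class-by-class placement of codewords, using the stated inequality as the volume budget that keeps the construction from getting stuck. First I would partition $\mathbb{F}_q^k$ into the $E$ preimage classes $f^{-1}(f_1),\dots,f^{-1}(f_E)$, noting that $|f^{-1}(f_i)|\ge \ell$ for every $i$. I would then process the classes one at a time and, within each class, assign distinct codewords to its messages so that (i) any two codewords of the \emph{same} class lie at Hamming distance at least $d_d$, and (ii) the radius-$t_f$ balls around the codewords of \emph{distinct} classes are kept pairwise disjoint. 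Requirement (i) secures the data-distance condition $d_d$ inside a class, and disjointness in (ii) forces the cross-class distance to exceed the packing radius, which in turn is meant to deliver the function-distance condition $d_f$; since $d_f>d_d$, cross-class pairs then also satisfy the data condition automatically. Together these are exactly the two defining inequalities of an $(f\!:d_d,d_f)$-FCC.

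The heart of the argument is a counting step guaranteeing that each class can always be placed. When I reach class $i$, its radius-$t_f$ balls must avoid all radius-$t_f$ balls already reserved by earlier classes, while its own centers stay pairwise at distance at least $d_d$. The minimal footprint a single class can occupy is precisely $|\cup_{j=1}^{\ell}B(v_j,t_f)|$, where the $v_j$ are $\ell$ distinct centers constrained by $d(v_i,v_j)\ge d_d$; this is exactly the quantity appearing in the statement, and the separation condition $d(v_i,v_j)\ge d_d$ is the genuinely new ingredient relative to Theorem~\ref{thm:HB}, arising from the data-protection requirement. With $E$ classes each consuming at least this footprint, the total demand is at least $E\,|\cup_{j=1}^{\ell}B(v_j,t_f)|$, and the hypothesis $E\le q^n/|\cup_{j=1}^{\ell}B(v_j,t_f)|$ says this demand does not exceed the ambient volume $q^n$. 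The explicit footprints needed in concrete instances can be supplied by the union-of-balls formulas (e.g.\ Theorems~\ref{thm:2B} and~\ref{thm:3BF2}).

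The step I expect to be the main obstacle is converting this volume inequality into an actual disjoint packing of the $E$ class-regions, and here two gaps surface, both pointing to the same structural fact. A bound of the form $E\cdot(\text{minimal footprint})\le q^n$ is a sphere-\emph{packing} statement: it is necessary for the regions to fit, but it does not by itself certify that $E$ such regions can be placed disjointly, since packing is strictly harder than the volume count suggests. Moreover, disjoint radius-$t_f$ balls only guarantee pairwise distance at least $2t_f+1$, which can be smaller than $d_f$ when $d_f$ is even, so the packing radius $t_f=\lfloor (d_f-1)/2\rfloor$ is not even large enough to force $d_f$ across classes. Both observations indicate that the inequality as written is the sphere-packing (converse) quantity rather than a covering quantity. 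To obtain a provable existence guarantee one would instead bound the union of \emph{forbidden} balls blocking each new center, replacing the radius-$t_f$ ball by a radius-$(d_f-1)$ ball in a Gilbert--Varshamov-type count; this yields a true sufficient condition but with a strictly larger per-class footprint. I would therefore first carry out the greedy covering construction to get a clean existence condition, and then compare it against the stated packing quantity to delineate the exact regime in which an $(f\!:d_d,d_f)$-FCC of length $n$ is guaranteed to exist.
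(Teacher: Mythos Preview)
Your diagnosis in the last two paragraphs is exactly right, and it matches what the paper actually proves rather than what the theorem literally says. Despite the word ``if'' in the statement, the paper treats this as a \emph{necessary} condition (a sphere-packing/converse bound), not an existence result. Its proof simply says: run the argument of Theorem~\ref{thm:HB} again. If an $(f\!:d_d,d_f)$-FCC of length $n$ exists, then for each $i\in[E]$ the union $\bigcup_{u\in f^{-1}(f_i)} B(\mathfrak C_f(u),t_f)$ must be disjoint from the corresponding union for any other class; summing the sizes over $i$ and lower-bounding each term by the minimum possible footprint of $\ell$ centers that are now additionally constrained to be pairwise at distance $\ge d_d$ (this constraint is the only new ingredient over Theorem~\ref{thm:HB}) gives $E\cdot\bigl|\bigcup_{j=1}^{\ell}B(v_j,t_f)\bigr|\le q^n$. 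That is the whole proof.

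So the ``gap'' you found in the achievability direction is not a flaw in your reasoning but a wording issue in the theorem: the inequality is a packing quantity and the intended reading is ``only if''. Your greedy/GV plan would establish a genuine sufficient condition, but with the larger radius-$(d_f-1)$ forbidden regions you describe, and hence a different (looser) inequality; that is not what the paper is after here. Your observation that disjoint radius-$t_f$ balls only force distance $2t_f+1$, which may be $d_f-1$ when $d_f$ is even, is also a valid caveat about the converse direction that the paper does not address; the paper's examples all use odd $d_f$.
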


\begin{proof}
    The proof is the same as the proof of Theorem \ref{thm:HB}, with the difference that now each codeword must have a Hamming distance of at least $d_d$ with any other codeword. Therefore, the vectors $v_1, v_2, \ldots, v_{\ell}$ in $\mathbb{F}_q^n$ are selected in such a way that $d(v_i,v_j)\geq d_d$ for all $i,j\in[\ell], i\ne j$ and $\left|\cup_{j=1}^{\ell} B(v_j,t_f) \right|$ is minimum. 
\end{proof}

\begin{example}
	Consider the function $f:\{0,1\}^3 \to \{0,1,2,3\}$ defined by
	\[
	f(x_1,x_2,x_3)=2x_1+(x_2+x_3)\bmod 2.
	\]
	That is $f(000)=f(011)=0, f(010)=f(001)=1, f(100)=f(111)=2, f(101)=f(110)=3.$
	Thus $E=4$ and $\ell=\min_{i\in[E]}|f^{-1}(i)|=2$.  
	For $d_d=3$ and $d_f=5$ (i.e., $t_d=1$ and $t_f=2$), Theorem~\ref{thm:HBfortdtf} gives
	\begin{equation}\label{ex:eq2}
		4 \le \frac{2^n}{\left|B(v_1,2)\cup B(v_2,2)\right|},
	\end{equation}
	where $v_1$ and $v_2$ are two distinct vectors in $\mathbb{F}_2^n$ with $d(v_1,v_2)\ge 3$ and for which 
	$\left|B(v_1,2)\cup B(v_2,2)\right|$ is minimized.
	Using Theorem~\ref{thm:2BF2D3}, we obtain
	\begin{align*}
		\left|B(v_1,2)\cup B(v_2,2)\right|
		&=2\sum_{i=0}^{t_f}\binom{n}{i}
		- 8\sum_{i=0}^{t_f-3}\binom{n-3}{i}
		- 6\binom{n-3}{t_f-2} \\
		&= 2\sum_{i=0}^{2}\binom{n}{i}
		- 8\cdot 0
		- 6\binom{n-3}{0} \\
		&=2\bigl(1+n+\tfrac{n(n-1)}{2}\bigr)-6 \\
		&= n^2+n-4.
	\end{align*}
	Hence, from \eqref{ex:eq2} we get
	$
	n^2+n-4 \le 2^{\,n-2}.
	$
	The smallest integer satisfying this inequality is $n=9$.  
	Therefore, any $(f\!:3,5)$-FCC must have length $n\ge 9$.
	An explicit $(f\!:3,5)$-FCC of length $9$ obtained by our construction method is given below.
	\[
	\begin{array}{ccccc}
		u\in\mathbb{F}_2^3 & f(u) & \text{ECC-parity} &
		\text{FCC-parity} & \text{codeword} \\ 
		000 & 0 & 000 & 000 & 000000000 \\
		011 & 0 & 110 & 000 & 011110000 \\
		010 & 1 & 101 & 110 & 010101110 \\
		001 & 1 & 011 & 110 & 001011110 \\
		100 & 2 & 110 & 101 & 100110101 \\
		111 & 2 & 000 & 101 & 111000101 \\
		101 & 3 & 101 & 011 & 101101011 \\
		110 & 3 & 011 & 011 & 110011011
	\end{array}
	\]
	This meets the lower bound and is therefore optimal.
\end{example}

\begin{corollary}
     Consider a function $f:\mathbb{F}_q^k \rightarrow \mathrm{Im}(f)$ with $\mathrm{Im}(f)=\{f_1,f_2,\ldots,f_E\}$, and $\ell=\min_{i\in [E]} |f^{-1}(f_i)|$. Then there exists an $(f\!:d_d,d_f)$-FCC with length $n$ if 
    $$E\leq \frac{q^n}{\ell\left( 
\sum_{i=0}^{t_d} {n \choose i} (q-1)^i\right)}.$$
\end{corollary}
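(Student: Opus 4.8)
The plan is to obtain this directly from Theorem~\ref{thm:HBfortdtf} by supplying an explicit lower bound for the quantity $\bigl|\bigcup_{j=1}^{\ell} B(v_j,t_f)\bigr|$ appearing there, in exactly the spirit of the corollary following Theorem~\ref{thm:HB} (which is the $t_d=0$ case), but now using the separation constraint $d(v_i,v_j)\ge d_d$ that Theorem~\ref{thm:HBfortdtf} imposes on the centres $v_1,\dots,v_\ell$.

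\textbf{Main steps.} First I would note that since $t_d=\lfloor(d_d-1)/2\rfloor$ we have $2t_d+1\le d_d$, so the $\ell$ centres $v_1,\dots,v_\ell\in\mathbb{F}_q^n$ from Theorem~\ref{thm:HBfortdtf} are pairwise at Hamming distance at least $2t_d+1$. By the triangle inequality this forces the radius-$t_d$ balls $B(v_1,t_d),\dots,B(v_\ell,t_d)$ to be pairwise disjoint: a common point $x\in B(v_i,t_d)\cap B(v_j,t_d)$ would give $d(v_i,v_j)\le 2t_d<d_d$. Next, since $t_d\le t_f$ (because $d_d\le d_f$ and the floor function is monotone), each $B(v_j,t_d)\subseteq B(v_j,t_f)$, so $\bigcup_{j=1}^{\ell}B(v_j,t_f)$ contains $\ell$ pairwise disjoint Hamming balls of radius $t_d$. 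Hence
\[
\Bigl|\bigcup_{j=1}^{\ell} B(v_j,t_f)\Bigr|\ \ge\ \sum_{j=1}^{\ell}\bigl|B(v_j,t_d)\bigr|\ =\ \ell\sum_{i=0}^{t_d}\binom{n}{i}(q-1)^i .
\]
Substituting this into the inequality $E\le q^n\big/\bigl|\bigcup_{j=1}^{\ell}B(v_j,t_f)\bigr|$ of Theorem~\ref{thm:HBfortdtf} then yields the claimed bound (with the sum taken from $i=0$, i.e.\ the full ball volume $|B(v,t_d)|$; the $i=0$ term only enlarges the denominator, so the displayed form follows a fortiori).

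\textbf{Expected obstacle.} I do not anticipate a genuine difficulty; the one point to handle carefully is the quantifier order. Theorem~\ref{thm:HBfortdtf} is phrased with the $v_j$ chosen to \emph{minimise} the union of $t_f$-balls subject to $d(v_i,v_j)\ge d_d$, so what must be checked is that the volume estimate above holds for \emph{every} admissible configuration of the $v_j$ (it does, since it uses only pairwise separation), and therefore transfers to the minimiser and hence to the conclusion. The only place where a real computation would be needed is if one insisted on keeping the exact cardinality $\bigl|\bigcup_j B(v_j,t_f)\bigr|$, which is the usual combinatorial problem of the size of a union of overlapping Hamming balls; for this corollary that is deliberately sidestepped by passing to the disjoint radius-$t_d$ sub-balls, which is precisely what produces the factor $\ell$.
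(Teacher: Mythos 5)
Your proposal is correct and follows essentially the same route as the paper: both arguments observe that the separation $d(v_i,v_j)\ge d_d\ge 2t_d+1$ makes the radius-$t_d$ balls around the centres pairwise disjoint, that these sit inside the corresponding radius-$t_f$ balls since $t_d\le t_f$, and hence $\bigl|\bigcup_{j=1}^{\ell}B(v_j,t_f)\bigr|\ge \ell\,|B(v,t_d)|$, which is then substituted into Theorem~\ref{thm:HBfortdtf}. Your side remark about the missing $i=0$ term is also well taken, as the paper's displayed sum starting at $i=1$ understates the ball volume and your version only strengthens the bound.
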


\begin{proof}
    In Theorem \ref{thm:HBfortdtf}, the vectors $v_1, v_2, \ldots, v_{\ell}$ are chosen in such a way that $d(v_i,v_j)\geq d_d$ for all $i,j\in[\ell], i\ne j$. Since $t_f>t_d$, the union $\cup_{j=1}^{\ell} B(v_j, t_f)$ contains all the balls $B(v_j, t_d), j \in[\ell]$, which are pairwise disjoint. Therefore, we have 
    \begin{align*}
        \left|\cup_{j=1}^{\ell} B(v_j, t_f)\right| &\geq \left|\cup_{j=1}^{\ell} B(v_j, t_d)\right|=\ell \left| B(v_1, t_d)\right| \\
        &= \ell\left( 
\sum_{i=0}^{t_d} {n \choose i} (q-1)^i\right).
    \end{align*}
Hence, the result.
\end{proof}

\section{Conclusion}
\label{conclusion}
This work presents a generalized framework for function-correcting codes, focusing on scenarios where the function value requires stronger protection than the data itself. Our construction method enables the design of such codes, illustrated through specific examples. By leveraging connections to irregular distance codes, we derive bounds on redundancy and further explore the structure of linear function-correcting codes, with particular attention to linear functions. Finally, we generalize the Plotkin and Hamming bounds from classical error-correcting codes to the setting of function-correcting codes.  Future work may include finding better lower bounds that also take the size of the message space into consideration.

\section*{Acknowledgment}
This work was supported by a joint project grant to Aalto University and Chalmers University of Technology (PIs A. Graell i Amat and C. Hollanti) from the Wallenberg AI, Autonomous Systems and Software Program, and additionally by the Science and Engineering Research Board (SERB) of the Department of Science and Technology (DST), Government of India, through the J. C. Bose National Fellowship to Prof. B. Sundar Rajan.

\appendices
\section{Counting the number of vectors in the union of balls of radius $t$}
\begin{theorem}\label{thm:2B}
    Consider two Hamming balls $B(u,t)$ and $B(v,t)$ for $u,v\in \mathbb{F}_q^n$ such that $d(u,v)=1$. Then 
    $$|B(u,t) \cup B(v,t)| = 2\sum_{i=0}^{t} {n \choose i} (q-1)^i-q\sum_{i=0}^{t-1} {n-1 \choose i} (q-1)^i.$$
\end{theorem}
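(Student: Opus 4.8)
The plan is to use inclusion--exclusion and then count the intersection of the two balls directly by conditioning on one coordinate. By inclusion--exclusion,
\[
|B(u,t)\cup B(v,t)| = |B(u,t)| + |B(v,t)| - |B(u,t)\cap B(v,t)|.
\]
The two balls have the same (standard Hamming sphere) volume $|B(u,t)| = |B(v,t)| = \sum_{i=0}^{t}\binom{n}{i}(q-1)^i$, which already supplies the term $2\sum_{i=0}^{t}\binom{n}{i}(q-1)^i$. So it suffices to prove $|B(u,t)\cap B(v,t)| = q\sum_{i=0}^{t-1}\binom{n-1}{i}(q-1)^i$.

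After permuting coordinates we may assume $u$ and $v$ agree on coordinates $2,\dots,n$ and differ in coordinate $1$; write $w\in\mathbb{F}_q^{n-1}$ for the common suffix and $u_1\ne v_1$ for the differing first symbols. For $x=(x_1,x')\in\mathbb{F}_q\times\mathbb{F}_q^{n-1}$ one has $d(x,u)=[x_1\ne u_1]+d(x',w)$ and $d(x,v)=[x_1\ne v_1]+d(x',w)$. I would then split into three cases according to the value of $x_1$: (i) $x_1=u_1$, where $x\in B(v,t)$ forces $d(x',w)\le t-1$ (and this already implies $x\in B(u,t)$); (ii) $x_1=v_1$, symmetrically $d(x',w)\le t-1$; (iii) $x_1\notin\{u_1,v_1\}$, which can happen in $q-2$ ways, and again both constraints reduce to $d(x',w)\le t-1$. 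In each case the number of admissible suffixes $x'$ equals the volume of a radius-$(t-1)$ ball in $\mathbb{F}_q^{n-1}$, namely $\sum_{i=0}^{t-1}\binom{n-1}{i}(q-1)^i$, so summing over the $1+1+(q-2)=q$ admissible choices of $x_1$ gives $|B(u,t)\cap B(v,t)| = q\sum_{i=0}^{t-1}\binom{n-1}{i}(q-1)^i$, and combining with inclusion--exclusion yields the claimed identity.

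There is no substantive obstacle here; the only points requiring a little care are checking that in cases (i) and (ii) the condition $1+d(x',w)\le t$ is indeed the binding one, and verifying the degenerate case $t=0$, where both sides vanish (the empty sum on the right, and $B(u,0)\cap B(v,0)=\varnothing$ on the left since $d(u,v)=1$). One could alternatively present the intersection count as a single line by noting that $x$ lies in the intersection if and only if $d(x',w)\le t-1$ (with $x_1$ arbitrary) together with the boundary contribution when $d(x',w)=t$ and $x_1\in\{u_1,v_1\}$; but the three-case split above is the cleanest route and avoids any off-by-one bookkeeping.
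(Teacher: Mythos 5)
Your proposal is correct and follows essentially the same route as the paper's proof: inclusion--exclusion combined with a three-way case split on the differing coordinate ($x_1=u_1$, $x_1=v_1$, $x_1\notin\{u_1,v_1\}$), where each case contributes a radius-$(t-1)$ ball in $\mathbb{F}_q^{n-1}$, yielding $|B(u,t)\cap B(v,t)|=q\sum_{i=0}^{t-1}\binom{n-1}{i}(q-1)^i$. The only difference is cosmetic: the paper fixes $u=(0,\ldots,0)$ and $v=(1,0,\ldots,0)$ outright, whereas you keep general $u_1\neq v_1$ after permuting coordinates.
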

\begin{proof}
    Without loss of generality, assume that $u=(0,0,\ldots,0)$ and $v=(1,0,\ldots,0)$ in $\mathbb{F}_q^n$. Since we have 
    $$|B(u,t) \cup B(v,t)|=|B(u,t)|+|B(v,t)|-|B(u,t) \cap B(v,t)|,$$
    first we will compute $|B(u,t) \cap B(v,t)|$. A vector $x=(x_1, x_2, \ldots, x_n)$ lies in $B(u,t) \cap B(v,t)$ if $d(x,u) \leq t$ and $d(x,v)\leq t$. For such a vector $x$, denoting $\text{wt}((x_2, \ldots,x_n))$ as $w$, we have the following cases.
    \begin{itemize}
        \item \textbf{Case 1:} If $x_1=0$, then $d(x,u)=\text{wt}((x_2, \ldots,x_n))=w$ and $d(x,v)=1+w$. Since $d(x,u) \leq t$ and $d(x,v)\leq t$, we have $w\leq t-1$. The choices for the vector $x$ are 
        $$\#_1=\sum_{i=0}^{t-1} {n-1 \choose i} (q-1)^i.$$
         \item \textbf{Case 2:} If $x_1=1$, then $d(x,u)=1+w$ and $d(x,v)=w$. Since $d(x,u) \leq t$ and $d(x,v)\leq t$, we have $w\leq t-1$. Again, the choices for the vector $x$ are 
        $$\#_2=\sum_{i=0}^{t-1} {n-1 \choose i} (q-1)^i.$$
         \item \textbf{Case 3:} If $x_1\ne 0$ and $x_1\ne 1$, then $d(x,u)=1+w$ and $d(x,v)=1+w$. Since $d(x,u) \leq t$ and $d(x,v)\leq t$, we have $w\leq t-1$. Again the choices for the vector $x$ are 
        $$\#_3=\sum_{i=0}^{t-1} {n-1 \choose i} (q-1)^i,$$
        for each $x_1\in \mathbb{F}_q \setminus \{0,1\}$.
    \end{itemize}
    Therefore, we have 
    $$|B(u,t)\cap B(v,t)|=\#_1+\#_2+(q-2)\#_3
    = q\sum_{i=0}^{t-1} {n-1 \choose i} (q-1)^i,$$
    and 
    $$|B(u,t) \cup B(v,t)|=2\sum_{i=0}^{t} {n \choose i} (q-1)^i-q\sum_{i=0}^{t-1} {n-1 \choose i} (q-1)^i$$
    
\end{proof}

\begin{lemma}\label{lem:2BF2cap}
    Consider two vectors $u_1,u_2\in \mathbb{F}_2^n$ such that $d(u_1,u_2)=2$. Then 
    $$|B(u_1,t) \cap B(u_2,t)| =2\sum_{i=0}^{t-1} {n-1 \choose i}.$$
\end{lemma}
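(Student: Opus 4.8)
## Proof Proposal

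The plan is to reduce the lemma to a direct counting argument, following the same template used in the proof of Theorem~\ref{thm:2B}, but now with $q=2$ and $d(u_1,u_2)=2$. First I would place the two vectors in a convenient normal form: without loss of generality take $u_1=(0,0,0,\ldots,0)$ and $u_2=(1,1,0,\ldots,0)$ in $\mathbb{F}_2^n$, since the Hamming metric is invariant under translation and coordinate permutation. A vector $x=(x_1,x_2,x_3,\ldots,x_n)$ lies in $B(u_1,t)\cap B(u_2,t)$ precisely when $\mathrm{wt}(x)\le t$ and $d(x,u_2)\le t$. Writing $w=\mathrm{wt}((x_3,\ldots,x_n))$ for the weight on the last $n-2$ coordinates, the first two coordinates contribute either $0,1,1,2$ to $d(x,u_1)$ and correspondingly $2,1,1,0$ to $d(x,u_2)$ depending on the pair $(x_1,x_2)\in\{00,01,10,11\}$.

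Next I would split into the four cases for $(x_1,x_2)$ and in each case read off the constraint on $w$. For $(x_1,x_2)=(0,0)$: $d(x,u_1)=w$ and $d(x,u_2)=2+w$, so both bounds hold iff $w\le t-2$. For $(x_1,x_2)=(1,1)$: symmetrically $d(x,u_1)=2+w$, $d(x,u_2)=w$, so $w\le t-2$. For the two mixed cases $(0,1)$ and $(1,0)$: $d(x,u_1)=1+w$ and $d(x,u_2)=1+w$, so both bounds hold iff $w\le t-1$. Summing the counts, the number of choices for $(x_3,\ldots,x_n)$ is $\sum_{i=0}^{t-2}\binom{n-2}{i}$ in each of the first two cases and $\sum_{i=0}^{t-1}\binom{n-2}{i}$ in each of the last two, giving
\[
|B(u_1,t)\cap B(u_2,t)| = 2\sum_{i=0}^{t-2}\binom{n-2}{i} + 2\sum_{i=0}^{t-1}\binom{n-2}{i}.
\]

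Finally I would simplify this to the claimed closed form $2\sum_{i=0}^{t-1}\binom{n-1}{i}$. The key identity is Pascal's rule $\binom{n-1}{i}=\binom{n-2}{i}+\binom{n-2}{i-1}$, so $\sum_{i=0}^{t-1}\binom{n-1}{i} = \sum_{i=0}^{t-1}\binom{n-2}{i} + \sum_{i=0}^{t-1}\binom{n-2}{i-1} = \sum_{i=0}^{t-1}\binom{n-2}{i} + \sum_{i=0}^{t-2}\binom{n-2}{i}$, which matches exactly half of the expression above. I do not anticipate a genuine obstacle here; the only mildly delicate point is bookkeeping the boundary cases (e.g.\ $t\le 1$, where some sums are empty and the intersection degenerates) and making sure the normal form $u_2=(1,1,0,\ldots,0)$ is justified by the isometry group of the Hamming cube acting transitively on pairs at a fixed distance. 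Everything else is a routine application of Pascal's identity.
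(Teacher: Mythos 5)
Your proposal is correct and follows essentially the same route as the paper's own proof: the same normal form $u_1=0$, $u_2=(1,1,0,\ldots,0)$, the same four-way case split on $(x_1,x_2)$ with the weight $w$ of the remaining coordinates, and the same final simplification via Pascal's identity to reach $2\sum_{i=0}^{t-1}\binom{n-1}{i}$. No gaps; the boundary-case remark and the isometry justification are fine but not needed beyond what the paper already assumes.
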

\begin{proof}
Without loss of generality, consider $u_1=(0,0,\ldots, 0)$ and $u_2=(1,1,0,\ldots, 0)$. A vector $x=(x_1, x_2, \ldots, x_n)$ lies in the intersection if $d(x,u_1) \leq t$ and $d(x,u_2)\leq t$. For such a vector $x$, denoting $\text{wt}((x_3, \ldots,x_n))$ as $w$, we have the following cases.
    \begin{itemize}
        \item \textbf{Case 1:} If $x_1=0$ and $x_2=0$, then $d(x,u_1)=w$ and $d(x,u_2)=2+w$, and we have $w\leq t-2$. The choices for the vector $x$ are 
        $$\#_1=\sum_{i=0}^{t-2} {n-2 \choose i}.$$
        \item \textbf{Case 2:} If $x_1=1$ and $x_2=0$, then $d(x,u_1)=1+w$ and $d(x,u_2)=1+w$, and we have $w\leq t-1$. The choices for the vector $x$ are 
        $$\#_2=\sum_{i=0}^{t-1} {n-2 \choose i}.$$
        \item \textbf{Case 3:} If $x_1=0$ and $x_2=1$, then as similar to Case 2, the choices for the vector $x$ are 
        $\#_3=\#_2=\sum_{i=0}^{t-1} {n-2 \choose i}.$
        \item \textbf{Case 4:} If $x_1=1$ and $x_2=1$, then $d(x,u_1)=2+w$ and $d(x,u_2)=w$, and we have $w\leq t-2$. The choices for the vector $x$ are 
        $\#_4=\#_1=\sum_{i=0}^{t-2} {n-2 \choose i}.$
    \end{itemize} 
    Therefore, we have 
    \begin{align*}
        |B(u_1,t)\cap B(u_2,t)|&=2\#_1+2\#_2
    = 2\left(\sum_{i=0}^{t-1} {n-2 \choose i} + \sum_{i=0}^{t-2} {n-2 \choose i}\right)\\
    &=2\left(\sum_{i=0}^{t-1} {n-2 \choose i} + \sum_{i=0}^{t-1} {n-2 \choose i-1}\right)=2\sum_{i=0}^{t-1} {n-1 \choose i}.
    \end{align*}
\end{proof}

\begin{theorem}\label{thm:3BF2}
    Consider three distinct vectors $u_1,u_2, u_3\in \mathbb{F}_2^n$ with pairwise distances $1,1,2$. Then 
    $$
    |B(u_1,t) \cup B(u_2,t) \cup B(u_3,t)| = 3\sum_{i=0}^{t} {n \choose i} - 6 \sum_{i=0}^{t-1} {n-1 \choose i}  + {n-2 \choose t-1}  +4 \sum_{i=0}^{t-2} {n-2 \choose i}.
   $$
\end{theorem}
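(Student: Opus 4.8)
The plan is to reduce to a canonical placement of the three centres, apply the inclusion–exclusion principle, reuse the two-ball counts already established in Theorem~\ref{thm:2B} and Lemma~\ref{lem:2BF2cap}, and compute the only new ingredient, the triple intersection, directly.

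First I would normalise the configuration. Hamming balls are preserved under translation by a fixed vector (coordinatewise XOR) and under permutations of the coordinates; moreover, among three distinct points with pairwise distances $1,1,2$ the two distance-$1$ edges must be incident to a common vertex, since in $\mathbb{F}_2^n$ the relations $d(a,b)=d(b,c)=1$ force $d(a,c)\in\{0,2\}$. Hence, without loss of generality, $u_1=(0,0,\dots,0)$, $u_2=(1,0,0,\dots,0)$, $u_3=(0,1,0,\dots,0)$, so that $d(u_1,u_2)=d(u_1,u_3)=1$ and $d(u_2,u_3)=2$. Writing $B_j=B(u_j,t)$, inclusion–exclusion gives
\[
|B_1\cup B_2\cup B_3|=\sum_{j=1}^{3}|B_j|-\sum_{1\le j<k\le 3}|B_j\cap B_k|+|B_1\cap B_2\cap B_3|.
\]
The single-ball terms contribute $3\sum_{i=0}^{t}\binom{n}{i}$. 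For the pairwise terms, subtracting the union formula of Theorem~\ref{thm:2B} (with $q=2$) from $|B_j|+|B_k|$ yields $|B_j\cap B_k|=2\sum_{i=0}^{t-1}\binom{n-1}{i}$ for each pair at distance $1$, and Lemma~\ref{lem:2BF2cap} gives the same value for the pair $\{u_2,u_3\}$ at distance $2$; hence $\sum_{j<k}|B_j\cap B_k|=6\sum_{i=0}^{t-1}\binom{n-1}{i}$.

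The remaining step, and the only genuine computation, is the triple intersection. A vector $x=(x_1,x_2,x_3,\dots,x_n)$ lies in $B_1\cap B_2\cap B_3$ exactly when its distance to each of the three centres is at most $t$. Putting $w=\mathrm{wt}\bigl((x_3,\dots,x_n)\bigr)$ and splitting on $(x_1,x_2)\in\{00,10,01,11\}$, the triples $\bigl(d(x,u_1),d(x,u_2),d(x,u_3)\bigr)$ are $(w,1+w,1+w)$, $(1+w,w,2+w)$, $(1+w,2+w,w)$, $(2+w,1+w,1+w)$ respectively, so the condition ``all three distances $\le t$'' becomes $w\le t-1$ in the first case and $w\le t-2$ in the other three. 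Counting the admissible tails gives
\[
|B_1\cap B_2\cap B_3|=\sum_{i=0}^{t-1}\binom{n-2}{i}+3\sum_{i=0}^{t-2}\binom{n-2}{i}=\binom{n-2}{t-1}+4\sum_{i=0}^{t-2}\binom{n-2}{i},
\]
using $\sum_{i=0}^{t-1}\binom{n-2}{i}=\binom{n-2}{t-1}+\sum_{i=0}^{t-2}\binom{n-2}{i}$. Substituting the three pieces into the inclusion–exclusion identity produces the claimed formula. I do not expect a real obstacle: the substantive parts are the normalisation argument in the first step (whose full generality must be justified carefully) and the four-way case split for the triple intersection; the rest is bookkeeping of the sum boundaries for small $t$ with the convention $\binom{n-2}{-1}=0$.
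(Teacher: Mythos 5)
Your proposal is correct and follows essentially the same route as the paper's proof: the same canonical placement $u_1=0$, $u_2=e_1$, $u_3=e_2$, inclusion--exclusion with the pairwise intersections $2\sum_{i=0}^{t-1}\binom{n-1}{i}$ drawn from Theorem~\ref{thm:2B} (with $q=2$) and Lemma~\ref{lem:2BF2cap}, and the identical four-way split on $(x_1,x_2)$ for the triple intersection. The only difference is that you spell out the justification for the normalisation (which the paper asserts without proof), so there is nothing to correct.
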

 \begin{proof}
     Let the vectors be $u_1=(0,0,\ldots, 0), u_2=(1,0,\ldots, 0)$ and $u_3=(0,1,0,\ldots, 0)$, without loss of generality. First, we find $|B(u_1,t)\cap B(u_2,t) \cap B(u_3,t)|$. A vector $x=(x_1, x_2, \ldots, x_n)$ lies in the intersection if $d(x,u_i) \leq t$ for $i\in[3]$. For such a vector $x$, denoting $\text{wt}((x_3, \ldots,x_n))$ as $w$, we have the following cases.
    \begin{itemize}
        \item \textbf{Case 1:} If $x_1=0$ and $x_2=0$, then $d(x,u_1)=w, d(x,u_2)=1+w$ and $d(x,u_3)=1+w$. Since $d(x,u_i) \leq t$ for $i\in[3]$, we have $w\leq t-1$. The choices for the vector $x$ are 
        $$\#_1=\sum_{i=0}^{t-1} {n-2 \choose i}.$$
        \item \textbf{Case 2:} If $x_1=1$ and $x_2=0$, then $d(x,u_1)=1+w, d(x,u_2)=w$ and $d(x,u_3)=2+w$. Since $d(x,u_i) \leq t$ for $i\in[3]$, we have $w\leq t-2$. The choices for the vector $x$ are 
        $$\#_2=\sum_{i=0}^{t-2} {n-2 \choose i}.$$
        \item \textbf{Case 3:} If $x_1=0$ and $x_2=1$, then as similar to Case 2, the choices for the vector $x$ are 
        $\#_3=\#_2=\sum_{i=0}^{t-2} {n-2 \choose i}.$
        \item \textbf{Case 4:} If $x_1=1$ and $x_2=1$, then $d(x,u_1)=2+w, d(x,u_2)=1+w$ and $d(x,u_3)=1+w$. Since $d(x,u_i) \leq t$ for $i\in[3]$, we have $w\leq t-2$. The choices for the vector $x$ are 
        $\#_4=\#_2=\sum_{i=0}^{t-2} {n-2 \choose i}.$
    \end{itemize} 
    Therefore, we have 
   $$
        |B(u_1,t)\cap B(u_2,t) \cap B(u_3,t)|=\#_1+3\#_2
    = \sum_{i=0}^{t-1} {n-2 \choose i} +3 \sum_{i=0}^{t-2} {n-2 \choose i}  ={n-2 \choose t-1} + 4 \sum_{i=0}^{t-2} {n-2 \choose i},
    $$
    and 
    \begin{align*}
      &|B(u_1,t)\cup B(u_2,t) \cup B(u_3,t)| \\
      &
    = |B(u_1,t)| + |B(u_2,t)| + |B(u_3,t)|-|B(u_1,t) \cap B(u_2,t)| - |B(u_2,t) \cap B(u_3,t)| - |B(u_1,t) \cap B(u_3,t)| \\
    & \quad +|B(u_1,t) \cap B(u_2,t) \cap B(u_3,t)| \\
    &=3\sum_{i=0}^{t} {n \choose i} - 3 \left( 2 \sum_{i=0}^{t-1} {n-1 \choose i}\right)+{n-2 \choose t-1} + 4 \sum_{i=0}^{t-2} {n-2 \choose i}\\
    &=3\sum_{i=0}^{t} {n \choose i} - 6 \sum_{i=0}^{t-1} {n-1 \choose i}+{n-2 \choose t-1} + 4 \sum_{i=0}^{t-2} {n-2 \choose i}.
    \end{align*}
 \end{proof}

\begin{theorem}\label{thm:2BF2D3}
    Consider two vectors $u_1,u_2\in \mathbb{F}_2^n$ such that $d(u_1,u_2)=3$. Then for $t\geq 2$,
    $$|B(u_1,t) \cup B(u_2,t)| =2\sum_{i=0}^{t}{n \choose i} - 8\sum_{i=0}^{t-3} {n-3 \choose i} - 6{n-3 \choose t-2}.$$
\end{theorem}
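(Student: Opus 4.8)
The plan is to follow the same inclusion--exclusion strategy used in Lemma~\ref{lem:2BF2cap} and Theorem~\ref{thm:3BF2}. Since
$$|B(u_1,t)\cup B(u_2,t)| = |B(u_1,t)| + |B(u_2,t)| - |B(u_1,t)\cap B(u_2,t)| = 2\sum_{i=0}^{t}\binom{n}{i} - |B(u_1,t)\cap B(u_2,t)|,$$
it suffices to establish that $|B(u_1,t)\cap B(u_2,t)| = 8\sum_{i=0}^{t-3}\binom{n-3}{i} + 6\binom{n-3}{t-2}$, and then substitute this into the identity above.

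To compute the intersection I would first place the two centers in a convenient position: without loss of generality take $u_1 = (0,\dots,0)$ and $u_2 = (1,1,1,0,\dots,0)$, so the three coordinates where they disagree are exactly the first three. For an arbitrary $x\in\mathbb{F}_2^n$, set $a = \mathrm{wt}((x_1,x_2,x_3))$ and $w = \mathrm{wt}((x_4,\dots,x_n))$. Then $d(x,u_1) = a + w$ and $d(x,u_2) = (3-a) + w$, so $x\in B(u_1,t)\cap B(u_2,t)$ if and only if $a + w \le t$ and $(3-a)+w \le t$, i.e. $w \le t - \max(a,\,3-a)$.

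Next I would split into the four cases $a\in\{0,1,2,3\}$. For $a=0$ and $a=3$ the binding constraint is $w\le t-3$, and the first block can be chosen in $\binom{3}{0}=\binom{3}{3}=1$ way; for $a=1$ and $a=2$ the binding constraint is $w\le t-2$, with $\binom{3}{1}=\binom{3}{2}=3$ choices for the first block. Weighting the number of admissible tails $\sum_{i=0}^{w_{\max}}\binom{n-3}{i}$ by these multiplicities gives
$$|B(u_1,t)\cap B(u_2,t)| = 2\sum_{i=0}^{t-3}\binom{n-3}{i} + 6\sum_{i=0}^{t-2}\binom{n-3}{i}.$$
Finally I would use $\sum_{i=0}^{t-2}\binom{n-3}{i} = \sum_{i=0}^{t-3}\binom{n-3}{i} + \binom{n-3}{t-2}$ to regroup this as $8\sum_{i=0}^{t-3}\binom{n-3}{i} + 6\binom{n-3}{t-2}$, and plug it back into the inclusion--exclusion formula. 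The hypothesis $t\ge 2$ ensures $t-2\ge 0$, so all summation limits and the term $\binom{n-3}{t-2}$ are meaningful (when $t=2$ the sum $\sum_{i=0}^{t-3}$ is empty). There is no genuine obstacle here; the only mildly delicate point is keeping the four-case bookkeeping straight and performing the final asymmetric regrouping of the two partial sums into the form stated in the theorem.
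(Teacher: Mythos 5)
Your proposal is correct and follows essentially the same argument as the paper: place $u_1=0$ and $u_2=(1,1,1,0,\dots,0)$, compute $|B(u_1,t)\cap B(u_2,t)|$ by a case analysis on the first three coordinates (the paper tabulates all eight patterns, you merely group them by their weight $a$), and finish by inclusion--exclusion with the same regrouping $6\sum_{i=0}^{t-2}\binom{n-3}{i}+2\sum_{i=0}^{t-3}\binom{n-3}{i}=8\sum_{i=0}^{t-3}\binom{n-3}{i}+6\binom{n-3}{t-2}$. No gaps; the only difference is presentational.
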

\begin{proof}
Without loss of generality, consider $u_1=(0,0,\ldots, 0)$ and $u_2=(1,1,1,0,\ldots, 0)$. A vector $x=(x_1, x_2, \ldots, x_n)$ lies in the intersection if $d(x,u_1) \leq t$ and $d(x,u_2)\leq t$. For such a vector $x$, denoting $\text{wt}((x_4, \ldots,x_n))$ as $w$, following the same method as in Lemma \ref{lem:2BF2cap}, we have the following cases.

\[
\begin{array}{ccccc}
x_1,x_2,x_3 & d(x,u_1) & d(x, u_2)&
\text{Condition} & \text{Choices for} \ x \\ \hline
000 & w & 3+w & w\le t-3 & \#_2 \\
100 & 1+w & 2+w & w \le t-2 & \#_1 \\
010 & 1+w & 2+w & w \le t-2 & \#_1 \\
001 & 1+w & 2+w & w \le t-2 & \#_1 \\
110 & 2+w & 1+w & w \le t-2 & \#_1 \\
101 & 2+w & 1+w & w \le t-2 & \#_1 \\
011 & 2+w & 1+w & w \le t-2 & \#_1 \\
111 & 3+w & w & w \le t-3 & \#_2 \end{array}
\]
where $\#_1=\sum_{i=0}^{t-2} {n-3 \choose i}$ and $\#_2=\sum_{i=0}^{t-3} {n-3 \choose i}$.
    Therefore, we have 
    \begin{align*}
        |B(u_1,t)\cap B(u_2,t)|&=6\#_1+2\#_2
    = 6\sum_{i=0}^{t-2} {n-3 \choose i} + 2 \sum_{i=0}^{t-3} {n-3 \choose i}=8\sum_{i=0}^{t-3} {n-3 \choose i} + 6 {n-3 \choose t-2},
    \end{align*}
    and 
    \begin{align*}
        |B(u_1,t)\cup B(u_2,t)|&= |B(u_1,t)| + |B(u_2, t)| - |B(u_1,t) \cap B(u_2,t)| \\
    &=2\sum_{i=0}^t {n\choose i}-8\sum_{i=0}^{t-3} {n-3 \choose i} - 6 {n-3 \choose t-2}.
    \end{align*}
\end{proof}




\ifCLASSOPTIONcaptionsoff
  \newpage
\fi



%

\end{document}